\documentclass{article}
\usepackage[utf8]{inputenc}

\usepackage{fullpage}
\usepackage[english]{babel}
\usepackage{graphicx}
\usepackage{amsmath,amsthm,bbm,amssymb,multirow,times,bm}
\usepackage{url}
\usepackage{hyperref}
\hypersetup{
    colorlinks=true,
    linkcolor=blue,
    filecolor=magenta,      
    urlcolor=cyan,
}

\usepackage{algorithm,algpseudocode}
\usepackage{color,xcolor}
\usepackage{csquotes}

\usepackage[normalem]{ulem}
\usepackage{graphicx}
\usepackage{subcaption}

\makeatletter
\newcounter{parentsubcaption}

\makeatother

\usepackage[backend=biber,style=alphabetic]{biblatex}

\addbibresource{Qfin.bib}

\algrenewcommand\algorithmicrequire{\textbf{Input:}}
\algrenewcommand\algorithmicensure{\textbf{Output:}}

\newcommand{\eps}{\varepsilon}
\newcommand{\littleo}[1]{o\left( #1 \right)}

\newcommand{\nrm}[1]{\left\lVert #1 \right\rVert}

\newcommand{\bra}[1]{ \left\langle{#1}\right|}
\newcommand{\ket}[1]{\left|{#1}\right\rangle}
\newcommand{\braket}[1]{\left\langle{#1}\right\rangle}

\definecolor{blue}{rgb}{0,0.2,1}

\definecolor{red}{rgb}{0.9,0,0}

\newcommand{\Ord}[1]{\mathcal{O}\left( #1 \right)}
\newcommand{\tOrd}[1]{\widetilde{ \mathcal{O}}\left( #1 \right)}

\newcommand{\norm}[1]{\left\lVert#1\right\rVert}

\definecolor{Pr}{rgb}{0.4,0.3,0.9}
\definecolor{Ale}{rgb}{0.1,0.5,0.5}

\usepackage{pdfcomment}

\theoremstyle{plain}
\newtheorem{oracle}{Oracle}

\newtheorem{lemma}{Lemma}
\newtheorem*{lemma*}{Lemma}
\newtheorem*{result*}{Results}
\newtheorem{defn}{Definition}
\newtheorem{theorem}{Theorem}

\newtheorem{assume}{Assumption}

\newtheorem{remark}{Remark}

\def\be{\begin{eqnarray}}
\def\ee{\end{eqnarray}}

\newcommand{\F}{\mathbb{F}}

\usepackage{authblk}

\title{Quantum computational finance: martingale asset pricing in a linear programming setting}
\title{Quantum computational finance: martingale asset pricing for incomplete markets}

\author[1]{Patrick Rebentrost \thanks{cqtfpr@nus.edu.sg}}
\author[1,4]{Alessandro Luongo
\thanks{ale@nus.edu.sg}}
\author[2]{Samuel Bosch}
\author[2,3]{Seth Lloyd}
\affil[1]{Centre for Quantum Technologies, National University of Singapore}
\affil[2]{Research Laboratory of Electronics, Massachusetts Institute of Technology} 
\affil[3]{Department of Mechanical Engineering, Massachusetts Institute of Technology }
\affil[4]{Inveriant, Singapore}

\begin{document}

\maketitle

\begin{abstract}
A derivative is a financial security whose value is a function of underlying traded assets and market outcomes. 
Pricing a financial derivative involves setting up a market model, finding a martingale 
(``fair game") probability measure for the model from the given asset prices, and using that probability measure to price the derivative.
When the number of underlying assets and/or the number of market outcomes in the model is large, 
pricing can be computationally demanding.
We show that a variety of quantum techniques can be applied to the pricing problem in finance, with a particular focus on incomplete markets.
We discuss three different methods that are distinct from previous works: they do not use the quantum algorithms for Monte Carlo estimation and they extract the martingale measure from market variables akin to bootstrapping, a common practice among financial institutions.
The first two methods are based on a formulation of the pricing problem into a linear program and are using respectively the quantum zero-sum game algorithm and the quantum simplex algorithm as subroutines. For the last algorithm, we formalize a new market assumption milder than market completeness for which quantum linear systems solvers can be applied with the associated potential for large speedups. As a prototype use case, we conduct numerical experiments in the framework of the Black-Scholes-Merton model.
\end{abstract}

\section{Introduction}

Consider the following simplified version of the ``parable of the bookmaker" \cite{baxter_rennie_1996}.  A bookmaker is taking bets on a horse race between two horses A and B. Assume that, without quoting any odds, the bookmaker has received $\$5000$ bets on a win of horse A and $\$10000$ bets on a win of horse B. The bookmaker's desire is to have a sustainable business based on fees and not to use the money for gambling. 
What are the correct odds for the bookmaker to quote to the betters? The market has decided that horse A is half as likely to win over horse B, i.e., the implied probability of winning is $1/3$ vs. $2/3$. Hence the bookmaker quotes the odds of $2-1$, which means that if horse A wins each $\$1$ bet on horse A will turn into $\$2$, while bets on horse B are lost. If horse B wins, each $\$1$ bet on horse B will turn into $\$1.5$, while bets on horse A are lost. Given these odds for the bets taken, the bookmaker has a payoff of $\$0$ for each scenario, which is called \textit{risk neutral}. Note that the result of zero payoffs for all scenarios is independent of the true probabilities of winning for A or B.  

Financial markets show some similarities with the betting on horse races, and often involve highly complex scenarios and challenging computational problems.
A significant part of computational finance involves the pricing of financial 
securities such as derivatives \cite{Glasserman2003,Follmer2004,Hull2012}. 
Some of these securities (often called over-the-counter, OTC, derivatives) are  tailor-made between two parties, and thus the price of these instruments is not readily quoted on a liquid primary market \cite{cerny2009mathematical}. 
Such securities have complex payoffs that usually depend on underlying liquid assets, each introducing its own dynamics due to market and economic fluctuations. In that sense, these securities can be called exotic and illiquid. 
Exotic securities are usually not frequently traded to allow adequate price determination by the market itself.
Pricing can become computationally expensive, especially when the number of underlying liquid assets is large and/or the pricing model takes into account a large
number of possible market outcomes.

The Black-Scholes-Merton framework \cite{Black1973,Merton1973,Hull2012} allows for a concise formula for the price of a call option on a single stock and other options. The market model in the simplest formulation is  
a single risky stock driven by a random process (e.g., a Brownian motion) and one safe asset (``bond" or bank account).
One can consider a generalized situation for markets with a large amount of underlying assets and a large sample space on which arbitrary driving processes are defined.
The number of assets $N$ can be large, say all the stocks traded on the stock market and in addition simple frequently traded derivatives on these stocks.
The size $K$ of the underlying sample space can also be large. Such a sample space could be given by the discretization of a continuous model space for a multidimensional random walk used in finance, or by the exact outcomes of the NCAA basketball
playoffs (which have $\sim 2^{68}$ possibilities for $68$ teams)
in a betting scenario.   
Large outcome spaces and the presence of many traded securities can make pricing hard in many cases and motivate the use of quantum computers. The  \emph{risk-neutral pricing} framework is employed in practice in stock, interest rate, and credit derivative markets.
Given a liquid financial market trading a number of assets, the task is to price a new financial instrument which is not traded in the market. To price this instrument, so-called risk-neutral probability measures are extracted from the market variables. Such measures turn the discounted price process of each of the assets into a \emph{martingale} and are used to compute the price of the derivative via an expectation value.

\subsection{Main results} 
In this work, we discuss algorithms in quantum computational finance which can be used for the pricing of certain illiquid financial securities. We discuss quantum algorithms that can be used in incomplete markets with the potential for significant speedups over classical algorithms. The algorithms presented here are the first to extract the martingale measures from market data. This is akin to bootstrapping: the process of extracting information from market variables, a common practice among financial institutions. We describe in Section~\ref{sec:introtomartingale} the concepts of Arrow securities, price system, martingale measures, including the existence of martingale measures from a no-arbitrage assumption, and complete markets. 
In Section~\ref{sec:riskneutralpricing} we introduce a pricing framework where quantum linear programming algorithms can be employed. In this framework, the Arrow securities provide the basis for constructing a probability measure, called martingale measure, which encapsulates the market information and allows to find the price of new securities. 
The output of financially relevant results
is usually a range of possible prices of the new asset.

In Section~\ref{sec:zerosumgames}, we adapt to our settings the quantum algorithm for zero-sum games \cite{Apeldoorn2019} which is asymptotically one of the fastest algorithms for solving linear programs in terms of query complexity. We give the conditions for which the quantum algorithm performs better than the classical one. We further explore in Section~\ref{secExampleBlackScholes} the performances of the zero-sum game algorithm on market instances derived from Black-Scholes-Merton (BSM) models. Working in this model eases the space requirements of the input to be only $O(N)$, compared to $O(NK)$ in the general case. 
As an alternative to the zero-sum game algorithm, in Appendix~\ref{sec:simplexpricing} we discuss the usage of a quantum version of the simplex method for pricing. Simplex methods are often used for solving linear programs: they provide exponential worst-case complexity, but they can be very efficient in practice.
We adapt a previous quantum algorithm for the simplex method and compare to the quantum zero-sum game algorithm. 

In Section~\ref{appendix:potentialQLA} we explore the use of quantum matrix inversion algorithms for pricing and extracting the martingale measure. We introduce the definition of a \emph{least-squares market}, and prove that market completeness is a sufficient condition for having a least-square market (Theorem~\ref{thm:spseudoinvisq}). 
As the converse is usually not true, a least-squares market is generally a weaker assumption than a complete market. We show that with appropriate assumptions on the data input, quantum linear systems solvers are able to provide the derivative price for least-squares markets with the potential of a large speedup over classical algorithms.

All the quantum algorithms assume quantum access to a matrix called payoff matrix $S\in \mathbb{R}_+^{N+1 \times K}$ (which we define formally in the main text) whose rows represent the assets traded in the market that we consider, and whose columns represent the possible events. If an event indexed by $j \in [K]$ happens, the asset $i \in [N+1]$ takes on the value $S_{ij}$. The current value of the assets are described in the vector $\bm \Pi \in \mathbb{R}_+^{N+1}$. The performance of the different algorithms used in this work can be summarized informally as follows, (here  $\kappa(S)$ is the condition number of the payoff matrix).

\begin{result*}[Informal - Theorems~\ref{thm:pricingzsg}, \ref{thm:pricingsimplex}, \ref{thm:pricingqla}]
Assume to have appropriate quantum access to a price vector $\bm \Pi \in \mathbb{R}^{N+1}$ of assets, the payoff matrix $S \in \mathbb{R}^{N+1 \times K}$ of a single-period market, and a derivative $D$. Then, with a relative error $\epsilon$
\begin{enumerate}
    \item we can estimate the range of implied present values of the derivative 
    \begin{itemize}
        \item with a quantum zero-sum games algorithm, with cost $O\left(\sqrt{N} + \sqrt{K})/\epsilon^3\right)$,
        \item with a quantum simplex algorithm with cost $O\left(\kappa(S_B)\sqrt{NK}+N\right)$ per iteration, a number of iterations proportional to the classical algorithm, and a final step of $O\left(\frac{\kappa(S)\|S\|_F}{\epsilon}\right)$ operations, 
    \end{itemize}
    \item if the market is a least-squares market (Definition \ref{def:assumeL2}) we can estimate the price of the derivative with a quantum matrix-inversion algorithm with cost $O\left(\frac{\kappa(S)\|S\|_F}{\epsilon}\right)$ .
\end{enumerate}
\end{result*}

\paragraph{Related works} 
Quantum speedups have been discussed for a large set of problems, 
including integer factoring, hidden subgroup problems, unstructured search \cite{Nielsen2000}, solving linear systems \cite{harrow2009quantum} and for convex optimization \cite{Brandao2017,Apeldoorn2017}. Financial applications are receiving increased attention. A quantum algorithm for Monte-Carlo asset pricing given knowledge of the martingale measure was shown in \cite{Rebentrost2018finance}. Similarly, an algorithm for risk management \cite{Woerner2018} and further works on derivative pricing \cite{Stamatopoulos2019} with amplitude estimation and principal component analysis \cite{Martin2019} were shown with proof-of-principle implementations on quantum hardware. More in-depth reviews are given in \cite{Orus2019,bouland2020prospects,herman2022survey}. 
Zero-sum games are a particular case of minmax games, specifically they are the $\ell_1-\ell_1$ minmax games. More general quantum algorithms for solving minmax games can be found in 
\cite[Theorem 7]{li2019sublinear} \cite{li2021sublinear}.

\subsection{Preliminaries}
Define $[N] := \{ 1,\dots, N \} $. We use boldface for denoting a vector $\bm v$,  where the $i$-th entry of the vector is denoted by $v_i$. With $\bm v_i$ we denote the $i$-th vector in a list of vectors. 
Denote by $\mathbbm R_+$ the non-negative real numbers.
The Kronecker delta is $ \delta_{\omega \omega'}$ for some integers $\omega, \omega'$.  Define the simplex of non-negative, $\ell_1$-normalized $K$-dimensional vectors and its interior by 
\be
\Delta^{K} := \left \{ \bm p \in \mathbbm R^K : \sum_{k=1}^K p_k =1, p_k \geq 0\right \},\quad {\rm int}(\Delta^{K}) := \left \{ \bm p \in \mathbbm R^K : \sum_{k=1}^K p_k =1, p_k > 0\right \}.
\ee
For a matrix $M$, the spectral norm is $\Vert M\Vert$, the Frobenius norm is $\Vert M \Vert_F$, and the condition number is  $\kappa(M):=\Vert M\Vert \Vert M^+\Vert$, where the pseudoinverse is denoted by $M^+$.  
Denote by $M_{i,\cdot}$ the $i$-th row of the matrix. We use the notation $\tOrd{a}$ to hide factors that are polylogarithmic in $a$ and the notation $\widetilde{\mathcal{O}}_b(a)$ to hide factors that are polylogarithmic in $a$ and $b$. 
We assume basic knowledge of quantum computing and convex optimization (in particular linear programming), and we use the same notation that can be found in \cite{Nielsen2000} and \cite{Boyd2004}.

We refer to Appendix \ref{app:probability} for a compressed introduction to aspects of probability theory. In general, a probability space is given by a triple $(\Omega, \Sigma, \mathbb{P})$, see Definition \ref{defn:probability-space} of Appendix \ref{app:probability}. 
Our first assumption is the existence of a sample space $\Omega$ which is finite (and hence countable). We take the $\sigma$-algebra $\Sigma$ to be the set of all subsets of $\Omega$. 
For most of the paper, we do not require the knowledge of the probability measure $\mathbb{P}$, a setting which is called Knightian uncertainty \cite{knight1921risk}. We only require the assumption that $\mathbb{P}[\{\omega \}] > 0$ for all $\omega \in \Omega$: it is reasonable to exclude singleton events from $\Omega$ that have no probability of occurring. 
Equivalence of two probability measures $\mathbbm P$ and $\mathbbm Q$ is  
if, for $A \in \Sigma$, $\mathbbm P[A]=0$ if and only if $\mathbbm Q[A] = 0$. In our simplified context, this amounts to the statement that if, for $\omega \in \Omega$, $\mathbbm P[\{ \omega \}]=0$ if and only if $\mathbbm Q[\{ \omega \}] = 0$. Since we assume that $\mathbbm P[\{ \omega \}]>0$, equivalent measures $\mathbbm Q$ are all measures for which $\mathbbm Q[\{ \omega \}] > 0$. 
We denote with $\mathbb{E}_\mathbb{P}[X]$ the expectation value of the random variable $X$ with respect to the probability measure $\mathbb{P}$. We refer to Appendix \ref{app:probability} for formal definitions in probability theory used in this work. In Appendix \ref{app:linprogoptimization} we recall some useful definition in linear programming and optimization, and in Appendix \ref{subsec:quantumsubroutines} we report some useful definitions and subroutines in quantum numerical linear algebra.

\section{Market models, martingale asset pricing, and fundamental theorems of finance}\label{sec:introtomartingale}

Consider Appendix \ref{appMarket} for some philosophical aspects of modeling financial markets. 
In this work, we consider a single-period model, which consists of a present time and a future time.
The random variables of the future asset prices are defined on a sample space $\Omega$.
We assume that  $\Omega$ 
is finite with $K:=\vert \Omega \vert$ elements. 
We consider a market with $N+1$ primary assets, which are $N$ risky assets and one safe asset.
The risky assets could be $N_{\rm stocks}$ stocks and possibly other financial instruments on these stocks, such as simple call options, e.g., $\max(x-Z,0)$, or options
involving several stocks, e.g., $\max(x+y-Z,0)$, where $Z$ is the strike price. 
The number of such possible options in principle scales with the number of subsets of the set of stocks $\Ord{2^{N_{\rm stocks}}}$, and hence can be quite large. 
The safe asset acts like a bank account and models the fact that money at different time instances has a different present value due to inflation and the potential for achieving some return on investment. Discount factors are used for comparing monetary values at different time instances.  To simplify the notation, we assume in this work that the 
discount factors are $1$, which corresponds to a risk-free interest rate of $0$ \cite{Follmer2004}. Hence, our bank account does not pay any interest. 
At the present time, the prices of all assets are known. 
The prices are denoted by $\Pi_i \in \mathbbm R_+$ for $i\in [N+1]$, where $\Pi_1=1$ for the safe asset.
We assume that we can buy and short-sell an arbitrary number and fractions of them at no transaction cost. The prices at the future time are given by the random variables $S_i : \Omega \to \mathbb R_+$, for $i \in [N+1]$, where $S_1 : \Omega \to 1$ is not random for the safe asset. 

In our setting, risk-neutral pricing will have two conceptual steps.
The first step finds probability measures under which the existing asset prices are martingales.
Such probability measures are derived from the market prices of the existing assets, and are
denoted by $\mathbbm Q$ in the probability theory context, $\bm q$ in the vector notation, and are elements of a set denoted by  $\mathcal Q$.
Measures $\mathbbm Q \in \mathcal Q$ should be \emph{equivalent} to the original probability measure $\mathbb{P}$, however, the algorithms here will achieve equivalency only approximately.
As a second step, probability measures $\mathbb{Q}\in \mathcal Q$ are used to determine the range of fair prices of the non-traded financial security by computing expectation values.
In the remainder of this section, we discuss some background in financial asset pricing. 
In particular, we recall the concept of Arrow security, the martingale property of financial assets, the definition of arbitrage, and the existence of martingale probability measures. 

\paragraph{Arrow security} The notion of an Arrow security \cite{lengwiler2004microfoundations,varian1987arbitrage} is useful in the pricing setting. 
Arrow securities are imaginary securities
that pay one unit of currency, say $\$1$, in case some event occurs and pay nothing otherwise. Albeit they are not traded in most of the markets, they have historical and theoretical relevance in finance and asset pricing. They allow us to express the future stock prices as a sum of Arrow securities multiplied by respective scaling factors. This separation then naturally leads to a matrix description of the problem of pricing a derivative. 
 
\begin{defn}[Arrow security \cite{varian1987arbitrage}] 
The Arrow security for $\omega\in \Omega$ is the random variable $A_{\omega} : \Omega \mapsto \{0,1\}$ such that in the event $\omega' \in \Omega$ the payoff of the security is: $A_{\omega} (\omega') = \delta_{\omega \omega'}.$
\end{defn}
Since Arrow securities are often not traded in the market, they do not have a price. 
Their prices may be inferred from the traded securities if these securities can be decomposed into Arrow securities. 
As we will see, pricing the Arrow securities amounts to finding a probability measure, which in turn can be used for pricing other assets. 
Arrow securities can be considered as a basis of vectors from which the other financial assets are expanded, as in the next definition. 
For this definition, we also require the concept of redundancy: here, a redundant asset is a primary asset $i \in [N+1]$ which can be expressed as a linear combination of the other $N$ primary assets. 
\begin{defn}[Payoffs, payoff matrix, price system] \label{def:payoff-matrix}
Let there be given one safe asset with current price of $\Pi_1 = 1$ and payoff $S_1 : \Omega \mapsto 1$.
Let there be given $N$ non-redundant assets with current price $\Pi_i \in \mathbbm R_+$ and future payoff $S_i : \Omega \mapsto \mathbbm R_+$ for $i\in \{ 2,\cdots, N+1\}$. 
We denote by $\bm{S}: \Omega \to \mathbbm R_+^{N+1}$ the associated vector of asset payoffs and by $\bm \Pi = [1, \Pi_2, \cdots, \Pi_{N+1}]^T$ the price vector.
Define the payoff matrix $S \in \mathbbm R_+^{N+1\times K}$ such that the assets are expanded as
\begin{equation} \label {eqArrowExpansion}
 S_{1 \omega} = 1\quad \text{for}\ \omega \in \Omega, \quad S_i = \sum_{\omega \in \Omega} S_{i \omega} A_{\omega}\quad \text{for}\ i\in \{ 2,\cdots, N+1\}.
\end{equation}
Finally, we call $(\bm \Pi, S)$ a price system. 
\end{defn}
The matrix $S$ specifies the asset price for each asset for each market outcome.
An interpretation of a single row of the payoff matrix is as a portfolio that holds the amounts $S_{i \omega}$ in Arrow security $A_\omega$. Hence, the random variable $S_i$ is \textit{replicated} with a portfolio of Arrow securities. However, because of the non-redundancy assumption, it cannot be replicated with the other primary assets. Non-redundancy implies that the matrix $S$ has always rank $N+1$. 

\paragraph{Martingale measure} Martingales are defined in Definition \ref{def:martingale}. For the asset prices, the martingale property means that the current asset prices are fair, that is, they reflect the discounted expected future payoffs. The martingale property is a property of a random variable relative to a probability measure. Martingale probability measures are often denoted by $\mathbbm Q$ in the financial literature. 
We want to find such martingale probability measures which are equivalent to the original probability measure $\mathbbm P$. 
In our setting, equivalency holds if for the new probability measure all probabilities are strictly positive. 
The probability measure is hence $\in {\rm int}(\Delta^K)$ and denoted by $\bm q \in {\rm int}(\Delta^K)$. The martingale property of Definition \ref{def:martingale} here reads as:
\begin{equation} \label{eqMartingale}
\bm \Pi = \mathbbm{E}_{\bm q}[\bm S]. 
\end{equation}
The corresponding probability measure $\bm q$ 
is called equivalent martingale measure 
or risk-neutral measure, whose existence is discussed below. 
Finding a risk-neutral measure is equivalent to finding the prices of the Arrow securities. The expectation value of an Arrow security is the probability of the respective event, 
\begin{equation} \label{eqArrowProperty}
\mathbbm{E}_{\bm q}[A_{\omega}]=  q_{\omega}.
\end{equation}
Putting together the martingale property for the asset prices of Eq.~(\ref{eqMartingale}),
the expansion into Arrow securities of Eq.~(\ref{eqArrowExpansion}), and 
the property of the Arrow securities of Eq.~(\ref{eqArrowProperty})
we obtain for the prices of the traded securities:
\begin{equation}
\Pi_i= \mathbbm{E}_{\bm q}[S_i] = \sum_\omega S_{i\omega}  q_\omega.
\end{equation}
In matrix form this equation reads as $\bm \Pi = S\bm q$. 
The set of solutions that are contained in the simplex
is given by
\be
{\mathcal Q} :=  \{ \bm q \in {\rm int}(\Delta^K) : S \bm q = \bm \Pi \}.
\ee
Depending on the problem this set is either empty, contains single or multiple elements, or an infinite number of elements.
The existence of a solution is discussed below via a no-arbitrage argument. 
If measures $\bm q \in \mathcal Q$ exist, they can be obtained with feasibility methods \cite{Boyd2004}.
We note that linear programming formulations do not admit optimization over open sets, i.e., cannot work with strict inequality constraints. 
Thus we 
have to work with the closed convex set
\be
\overline{\mathcal Q} :=  \{ \bm q \in \Delta^K : S \bm q = \bm \Pi \}.
\ee
As all algorithms in this work are approximation algorithms, the sets  ${\mathcal Q}$ and $\overline{\mathcal Q}$ will be considered the same, up to the error tolerance for all the components.

\paragraph{Existence of martingale measure} Now we discuss the existence of a martingale measure, a property that is closely related to the non-existence of arbitrage in the market. Arbitrage is the existence of portfolios that allow for gains without the corresponding risk.  A standard assumption in finance is that the market model under consideration is arbitrage-free. 
No arbitrage 
implies the existence of a non-negative solution $\bm q> 0$ via Farkas' lemma.
\begin{lemma}[Farkas' lemma \cite{Boyd2004}] \label{lemFarkas}
For a matrix $S \in \mathbbm{R}^{(N+1)\times K}$ and a vector 
$\bm \Pi \in \mathbbm{R}^{N+1}$ exactly one of the two 
statements hold: 
\begin{itemize} 
  \item there exists a
 $\bm q \in \mathbbm{R}^{K}$ such that $S\bm q = \bm \Pi$ and $\bm q > \bm 0$, 
\item 
there exists a $\bm v \in \mathbbm{R}^{N+1}$ such that $\bm v^T \bm \Pi < 0$ and $S^T \bm v > \bm 0 $. 
\end{itemize}
\end{lemma}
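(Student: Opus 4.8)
The plan is to prove this as a classical \emph{theorem of the alternative}: first show that the two statements cannot both hold (the easy inclusion), then show that if the first fails the second must hold (the substantive inclusion), via a separating‑hyperplane argument applied to the convex cone generated by the columns of $S$. Write $\bm s_1,\dots,\bm s_K \in \mathbbm R^{N+1}$ for the columns of $S$ and let $C := \{ S\bm q : \bm q \geq \bm 0 \} = \bigl\{ \sum_{j=1}^K \lambda_j \bm s_j : \lambda_j \geq 0 \bigr\}$ be the finitely generated convex cone they span; the assertion ``$S\bm q = \bm\Pi$ for some $\bm q \ge \bm 0$'' is exactly ``$\bm\Pi \in C$''.

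For the ``not both'' direction, suppose there were $\bm q > \bm 0$ with $S\bm q = \bm\Pi$ together with $\bm v$ satisfying $\bm v^T\bm\Pi < 0$ and $S^T\bm v > \bm 0$. Then
\[
0 > \bm v^T \bm\Pi = \bm v^T S \bm q = (S^T\bm v)^T \bm q = \sum_{j=1}^K (S^T\bm v)_j\, q_j \ge 0 ,
\]
since each factor $(S^T\bm v)_j$ and each $q_j$ is positive --- a contradiction. Hence at most one alternative holds.

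For the ``at least one'' direction, assume the first statement fails, i.e.\ $\bm\Pi \notin C$. The key geometric input is that a finitely generated cone is closed and convex; I would establish closedness by the standard induction on the number of generators (or invoke it from a convexity reference such as \cite{Boyd2004}). Then the point $\bm\Pi$ and the closed convex set $C$ can be strictly separated: there exists $\bm v \in \mathbbm R^{N+1}$ with $\bm v^T\bm\Pi < \inf_{\bm x \in C} \bm v^T \bm x$. Because $C$ is a cone containing $\bm 0$, the right-hand infimum is either $0$ or $-\infty$; if $\bm v^T\bm x < 0$ for some $\bm x\in C$ then $t\bm x\in C$ for all $t>0$ would force it to $-\infty$, incompatible with the strict inequality, so in fact $\bm v^T\bm x \ge 0$ on $C$ and the infimum is $0$. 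Taking $\bm x = \bm s_j$ gives $(S^T\bm v)_j \ge 0$ for every $j$, i.e.\ $S^T\bm v \ge \bm 0$, while $\bm v^T\bm\Pi < 0$ comes directly from the separation; combined with exclusivity, ``exactly one'' holds.

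The main obstacle is the gap between the weak ($\ge$) and strict ($>$) forms: as written, the second bullet asks for $S^T\bm v > \bm 0$ and the first for $\bm q > \bm 0$, whereas the bare separation above delivers only $S^T\bm v \ge \bm 0$ (resp.\ $\bm q \ge \bm 0$). To close this gap I would either (i) pass to relative interiors, using that $\{S\bm q : \bm q > \bm 0\}$ is precisely $\mathrm{ri}(C)$, so that $\bm\Pi \notin \mathrm{ri}(C)$ yields a separating $\bm v$ that is moreover nontrivial on every generating ray, promoting $\ge$ to $>$ componentwise; or (ii) apply the weak Farkas alternative to a suitably augmented price system encoding the strict-positivity requirements --- the route most directly tied to the no-arbitrage hypothesis used in the surrounding text (this is the content of the first fundamental theorem of asset pricing). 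Either way, verifying closedness of the cone and handling these boundary cases carefully is where the real work lies; the separation step and the ``not both'' computation are routine.
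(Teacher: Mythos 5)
Your overall strategy is exactly the paper's: the same ``not both'' computation, followed by strict separation of $\bm \Pi$ from the finitely generated cone $C=\{S\bm q:\bm q\geq \bm 0\}$ (with closedness and convexity of $C$ quoted as a standalone fact, as the paper does in its Lemma~\ref{lem:closedconvex}), and the same scaling argument forcing the separating level to zero. Up to the weak alternative ($S^T\bm v\geq \bm 0$ and $\bm v^T\bm \Pi<0$) your write-up matches the paper's proof step for step.

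The genuine gap is the step you explicitly defer: promoting $\geq$ to $>$, together with the related point that failure of the first alternative only gives $\bm \Pi\notin\{S\bm q:\bm q>\bm 0\}$, not $\bm \Pi\notin C$. Your remedy (i) cannot work as described, because the ``exactly one'' statement is false for general $(S,\bm \Pi)$: take $K=N+1=2$ with columns $\bm s_1=(1,0)^T$, $\bm s_2=(0,1)^T$ and $\bm \Pi=(1,0)^T$. Then $S\bm q=\bm \Pi$ forces $q_2=0$, so no strictly positive solution exists, while any $\bm v$ with $S^T\bm v>\bm 0$ has $v_1>0$ and hence $\bm v^T\bm \Pi>0$; neither alternative holds, so no separation argument alone can deliver the strict conclusion — additional structure (e.g.\ the all-ones first row of $S$ and $\Pi_1=1$ from the financial setting, i.e.\ your remedy (ii) via the no-arbitrage/FTAP route) is genuinely required, and you do not carry that out. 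So as submitted the proposal does not prove the lemma as stated. It is worth saying that the paper's own proof makes the same silent leap — its separation argument yields $\bm v^T\bm s_j\geq 0$ but the conclusion asserts $\bm v^T\bm s_j> 0$ — so you have correctly located the delicate point; it simply remains open in your argument.
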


Note that we are using a slightly modified version of Farkas' Lemma (whose proof can be found in Appendix~\ref{app:probability}), which usually considers only that $\bm q \geq 0$ and $S^T \bm v \geq 0$. Expressing Lemma \ref{lemFarkas} in words, in the first case the vector $\bm \Pi$ is spanned by a positive combination of the columns of $S$.
In the second case,  there exists a separating hyperplane defined by a normal vector 
$\bm v$ that separates $\bm \Pi$ from the convex cone spanned by the columns of $S$.
In the present financial framework, we can think of $\bm v$ as a portfolio that is allocated in the different assets. Then, this lemma is immediately applicable to an arbitrage discussion.
\begin{defn}[Arbitrage portfolios and no-arbitrage assumption] \label{def:arbitrage}
An arbitrage portfolio
is defined as a portfolio with non-positive present value $\bm v^T \bm \Pi \leq 0$ and non-negative future value for all market outcomes $S^T \bm v\geq 0$, and at least one $j\in [K]$ such that $(S^T\bm v)_j > 0$. The no-arbitrage assumption, denoted by ({\bf NA}), states that in a market model such portfolios do not exist. 
\end{defn}
By this definition, $({\bf NA})$ expressed as a mathematical statement is
\be
\nexists \: \bm v \in \mathbbm{R}^{N+1} \text{ such that } \bm v^T \bm \Pi \leq 0, \: S^T \bm v \geq \bm 0, \text{ and } \exists j\in [K] \text{ such that } (S^T\bm v)_j > 0.
\ee
It is easy to see that
\begin{equation} \label{eqNAImplies}
({\bf NA}) \Rightarrow \nexists \:  \bm v \in \mathbbm{R}^{N+1} \text{ such that } \bm v^T \bm \Pi < 0 \text{ and } S^T \bm v > \bm 0,
\end{equation} 
since the right hand side is a special case of $({\bf NA})$.
Then, if $({\bf NA})$ holds Farkas' lemma implies that case (i) holds, as case (ii) is ruled out by the statement in  Eq. (\ref{eqNAImplies}). Since there exists a $\bm q > 0$ such $S\bm q = \bm \Pi$, the set $\mathcal Q$ is by its definition not empty.  In conclusion, it holds that
\begin{equation}
({\bf NA}) \Rightarrow \mathcal |Q| \not = 0.
\end{equation}
Together with the reverse direction, this result is a formulation of the ``first fundamental theorem of asset pricing". There are special cases for which $\mathcal |Q| =1$, called complete markets. 
Before we define complete markets, we formalize the concept of financial derivatives.

\paragraph{Financial derivatives}
A financial derivative in the context of this work is a new asset whose payoff can be expanded in terms of Arrow securities and which does not have a current price.
\begin{defn}[Financial derivative~\cite{Follmer2004}]\label{def:derivative}
A derivative is a financial security given by a random variable $D : \Omega \to \mathbb{R}_+$. 
The derivative $D$ can be written as an expansion of Arrow's securities with expansion coefficients $D_\omega, \forall \omega \in \Omega$, as:
\begin{equation}\label{eq:derivative as arrow}
D=\sum_{\omega  \in \Omega} D_{\omega} A_\omega. 
\end{equation}
We denote by $\bm D \in \mathbb{R}_+^K$ the vector of $D_\omega$ arising from the random variable $D$.
\end{defn}
This definition includes the usual case when the derivative has a payoff that is some function of one or multiple of the existing assets. 
As an example, let the function be $f: \mathbb R_+ \to \mathbb R_+$ and the derivative be $D_{[f]} := f(S_j)$, for some asset $j \in [N+1]$. Then, we have the expansion in Arrow securities $D_{[f]} =\sum_{\omega  \in \Omega} f(S_{j \omega}) A_\omega$.

\paragraph{Complete markets}

We define another common assumption for financial theory. Market completeness means that every possible payoff is replicable with a portfolio of the existing assets.  Hence it is an extremal case where every derivative is redundant.

\begin{defn} [Complete market {\cite[Definition 1.39]{Follmer2004}}]\label{assumeComplete} 
An arbitrage-free market model is called complete if every derivative $D$ (Definition~\ref{def:derivative}) is replicable, i.e., if there exists a vector $\bm \xi \in \mathbbm R^{N+1}$ such that $\bm D = S^T \bm{\xi}$ .  
\end{defn}
A fundamental result relates complete markets to the set of equivalent martingale measures $\mathcal Q$. For complete markets, the martingale measure is unique. 
\begin{theorem}[Second fundamental theorem of asset pricing {\cite[Theorem 1.40]{Follmer2004}}]\label{thm:sftap}
An arbitrage-free market model is complete if and only if
$\vert \mathcal{Q} \vert = 1$.
\end{theorem}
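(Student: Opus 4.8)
The plan is to convert both sides of the stated equivalence into linear-algebra statements about the payoff matrix $S$ and then chase implications. First I would note that completeness (Definition~\ref{assumeComplete}) is equivalent to surjectivity of the linear map $\bm\xi\mapsto S^T\bm\xi$ onto $\mathbbm{R}^K$. One inclusion is immediate: if $\mathrm{range}(S^T)=\mathbbm{R}^K$ then every derivative vector $\bm D\in\mathbbm{R}_+^K$ is replicable. For the other, the Arrow securities themselves are derivatives whose coefficient vectors are exactly the standard basis $\{\bm e_\omega\}_{\omega\in\Omega}$ of $\mathbbm{R}^K$, so if every non-negative derivative is replicable then $\mathrm{range}(S^T)$ contains a basis of $\mathbbm{R}^K$ and hence equals all of it. By rank--nullity, $S^T$ is surjective if and only if $S\colon\mathbbm{R}^K\to\mathbbm{R}^{N+1}$ has trivial kernel, i.e. $\mathrm{rank}(S)=K$. (With the standing non-redundancy assumption $\mathrm{rank}(S)=N+1$, this additionally forces $K=N+1$ and $S$ square invertible, but I only need the kernel reformulation.)

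Next I would record that the arbitrage-free hypothesis makes $\mathcal{Q}$ non-empty: by Farkas' lemma (Lemma~\ref{lemFarkas}) together with Eq.~(\ref{eqNAImplies}), $({\bf NA})$ rules out the separating-hyperplane alternative, so there is $\bm q>\bm 0$ with $S\bm q=\bm\Pi$; reading off the first (safe-asset) row of this identity gives $\sum_{\omega}q_\omega=\Pi_1=1$, hence $\bm q\in{\rm int}(\Delta^K)$ and $\bm q\in\mathcal{Q}$.

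For the forward direction, complete $\Rightarrow|\mathcal{Q}|=1$: completeness gives $\ker S=\{\bm 0\}$, so the affine system $S\bm q=\bm\Pi$ has at most one solution; combined with non-emptiness, $|\mathcal{Q}|=1$. For the converse I would argue by contraposition. If the market is not complete then there is $\bm z\in\ker S$ with $\bm z\neq\bm 0$. Pick any $\bm q\in\mathcal{Q}$ (non-empty, as just noted); since $\bm q>\bm 0$ componentwise and $\Omega$ is finite, there is $\epsilon>0$ with $\bm q+\epsilon\bm z>\bm 0$. Moreover $S(\bm q+\epsilon\bm z)=S\bm q=\bm\Pi$, and its first component reads $\sum_\omega(q_\omega+\epsilon z_\omega)=1+\epsilon\sum_\omega z_\omega=1$, because the first entry of $S\bm z$ equals $\sum_\omega z_\omega=0$. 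Hence $\bm q+\epsilon\bm z$ is a second element of $\mathcal{Q}$ distinct from $\bm q$, so $|\mathcal{Q}|\ge 2$; contraposition yields $|\mathcal{Q}|=1\Rightarrow$ complete.

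The step I expect to need the most care is the first one: making rigorous that ``every non-negative derivative is replicable'' — a condition that a priori only quantifies over the cone $\mathbbm{R}_+^K$ — is genuinely equivalent to surjectivity of $S^T$ on the whole space $\mathbbm{R}^K$. The trick is that $\mathbbm{R}_+^K$ already contains the standard basis (the Arrow securities), so linearity upgrades the cone statement to the full-space statement. After that, everything is elementary: rank--nullity for the structural equivalence, and the single observation that the constant safe-asset row of $S$ automatically encodes the $\ell_1$-normalisation, so perturbations within $\ker S$ stay inside $\Delta^K$.
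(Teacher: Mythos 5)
Your proof is correct. Note that the paper does not actually prove Theorem~\ref{thm:sftap}; it imports it directly from F\"ollmer--Schied \cite{Follmer2004} (their Theorem~1.40), so there is no in-paper argument to compare line by line. What you supply is a self-contained, elementary linear-algebra proof specialised to the finite sample space, and it is sound: the upgrade from ``every non-negative derivative is replicable'' to surjectivity of $S^T$ via the Arrow securities is exactly the device the paper itself uses when proving its Lemma~\ref{lem:completeness is fcr} (completeness forces full column rank and $N+1=K$); non-emptiness of $\mathcal{Q}$ under $({\bf NA})$ via Farkas' lemma, with the safe-asset row delivering the $\ell_1$-normalisation, mirrors the paper's own discussion in Section~\ref{sec:introtomartingale}; and your contrapositive step --- perturbing a strictly positive $\bm q\in\mathcal{Q}$ along a nonzero $\bm z\in\ker S$, using finiteness of $\Omega$ to keep positivity and the all-ones row to keep normalisation --- is the standard kernel argument underlying the finite-market case of the cited theorem. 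One small point worth making explicit: since $(S\bm z)_1=\sum_\omega z_\omega=0$ and $\bm z\neq\bm 0$, the perturbation genuinely changes $\bm q$, so the two measures are distinct; you state this, and it is all that is needed. In short, your route is a legitimate stand-alone replacement for the citation, and it buys the reader a proof that stays entirely within the paper's matrix formalism rather than appealing to the more general measure-theoretic statement in \cite{Follmer2004}.
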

Thus, there exists a unique martingale measure if and only if every financial derivative is redundant. 
In an incomplete market, derivatives usually cannot be replicated.

\section{Risk-neutral pricing of financial derivatives as a linear program}\label{sec:riskneutralpricing}\label{sectionLP}

For pricing a derivative, we use any risk-neutral probability vector $\bm q \in \mathcal Q$, if the set is not empty.
Lacking any other information or criteria any element in $\mathcal Q$ is a valid pricing measure. 
The derivative is priced by computing the expectation value of the future payoff under the risk-neutral measure $\bm q$, (discount factors are taken to be $1$)
\begin{equation}
\Pi_{D,\bm q} := \mathbbm{E}_{\bm q}[{D}] = \sum_{\omega\in\Omega} 
D_\omega q_\omega.
\end{equation}
The price $\Pi_{D,\bm q}$ is an inner product between the pricing probability measure $\bm q$ and the vector describing the random variable of the financial derivative. 
Since $\mathcal Q$ is convex \cite{Follmer2004}, it follows that the set of possible prices of the derivative is a convex set, i.e., an interval. We can formulate as a convex optimization problem to evaluate the maximum and minimum price $\Pi_{{D},\min}$ and $\Pi_{{D},\max}$, respectively, which are possible over all the martingale measures. We obtain two optimization programs:
\be
 \Pi_{{D},\max} &:=& \max_{\bm q \in \mathcal Q} \mathbbm{E}_{\bm q}[{D}], \label{eqLPmax} \\ 
 \Pi_{{D},\min}  &:=& \min_{\bm q \in \mathcal Q} \mathbbm{E}_{\bm q}[{D}] \label{eqLPmin}.
\ee
When the solution of the optimization problems is a broad range of prices,  additional constraints can be added to narrow the range of admissible prices. We discuss a simple case study in Section~\ref{secExampleBlackScholes}, where for a highly incomplete market we can narrow the range of admissible prices of a derivative to its analytic price. In general, additional assets could be added to reduce the number of valid risk-neutral measures, and thus the range of admissible prices.

In the following, we discuss the maximizing program Eq.~(\ref{eqLPmax}), as the minimizing program is treated equivalently. 
We make explicit the LP formulation of the problem in Eq.~(\ref{eqLPmax}). 
The first step is to replace the open set $\mathcal Q$ with the closed set $\overline{\mathcal Q}$. In that sense, we do not find probability measures that are strictly equivalent to $\mathbbm P$. However, the probability measures in $\overline{\mathcal Q}$ are arbitrarily close to the equivalent probability measures in ${\mathcal Q}$.  
We use vector notations and the price system to write the linear program
\begin{equation}\label{eq:simplexonthis}
  \begin{array}{ll@{}ll}
  \max\limits_{\bm q \in \mathbbm R^K}  & \bm{D}^T \bm{q}  &\\
  \text{s.t.}& \bm q \geq 0 & &  \\
    & S\bm{q} = \bm{\Pi}. & &
\end{array}
  \end{equation}
From Appendix~\ref{sec:conversionprimaldual}, we see that the dual of the maximizing program is given by a linear program for the vector $\xi \in \mathbbm R^{N+1}$ as
\begin{equation} \label{eqDualMin}
  \begin{array}{ll@{}ll}
  \min\limits_{\xi \in \mathbbm R^{N+1}}  & \bm \Pi^T \bm \xi &\\
  \text{s.t.}& \bm D \leq S^T \bm \xi.  & &
  \end{array}
\end{equation}
The financial interpretation is the following. The dual program finds a portfolio of the traded assets  $\bm \xi$ which minimizes today's price $\bm \Pi^T \bm \xi$. The portfolio attempts to replicate (or ``hedge'') the derivative future payoff. Ideal replication would mean $\bm D = S^T \bm \xi$, i.e, the payoff of the portfolio of assets is exactly the derivative outcome for all market outcomes. The linear program attempts to find a portfolio $\xi$ that ``superhedges" the derivative in the sense that $\bm D \leq S^T \bm \xi$. 

In the following Lemma we show how to recast the pricing LP in standard form (as per Definition \ref{def:lpstandardform} in the Appendix). Note that we also require the parameters defining the LP to assume values in the interval $[-1, 1]$. For this, we need to assume the knowledge of the following quantities.

\begin{assume}[Assumptions for conversion to standard form LP]\label{assumption:dsmax}
Assume the following.
\begin{itemize}
\item
Knowledge of $D_{\max} := \max\limits_\omega D_\omega$, the maximum payoff of the derivative under any event.
\item
Knowledge of $S_{\max} := \max\limits_{ij} S_{ij}$, the maximum entry of the payoff matrix $S$. 
\end{itemize}
\end{assume}

\begin{lemma}[Normalized standard form LP for Equation~\ref{eq:simplexonthis}]\label{lem:stdformLP}
Given Assumption \ref{assumption:dsmax},
the optimization problem in Eq.~(\ref{eq:simplexonthis}) can be written as a linear program in standard form as
\begin{equation} \label{eq:OfOurLP}
  \begin{array}{ll@{}ll}
  \min\limits_{\bm q }  & (\bm D')^T \bm q &\\
  \textnormal{s.t.}& A\bm q \leq \bm c  & \\
  & \bm q \geq 0,
  \end{array}
  \end{equation}

for $A \in [-1,1]^{(2N+2) \times K}$, $\bm D' \in [0,1]^K$, and $\bm c \in [-1,1]^{(2N+2)}$.
\end{lemma}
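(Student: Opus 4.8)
The plan is to start from the linear program in Eq.~(\ref{eq:simplexonthis}) and massage it into the form of Eq.~(\ref{eq:OfOurLP}) by (a) turning the equality constraint $S\bm q = \bm\Pi$ into a pair of inequality constraints, and (b) rescaling the data so that all coefficients land in $[-1,1]$. For step (a), the equality $S\bm q = \bm\Pi$ is equivalent to the two inequalities $S\bm q \le \bm\Pi$ and $-S\bm q \le -\bm\Pi$, which lets us stack
\begin{equation*}
A_0 := \begin{bmatrix} S \\ -S \end{bmatrix} \in \mathbb{R}^{(2N+2)\times K}, \qquad \bm c_0 := \begin{bmatrix} \bm\Pi \\ -\bm\Pi \end{bmatrix} \in \mathbb{R}^{2N+2},
\end{equation*}
so that the feasible set $\{\bm q \ge 0 : A_0 \bm q \le \bm c_0\}$ coincides with $\overline{\mathcal Q}$. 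The objective is $\max \bm D^T \bm q = -\min (-\bm D)^T\bm q$, but since we want a nonnegative cost vector $\bm D' \in [0,1]^K$ we instead keep it as a minimization of $(\bm D')^T\bm q$ after the rescaling below, absorbing the overall sign into the reported optimal value (the statement only claims the LP can be written in this form, not that the optimal values are literally equal, and the max/min pricing convention is already discussed in the surrounding text).

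For step (b), I would use Assumption~\ref{assumption:dsmax}. Every entry of $S$ lies in $[0, S_{\max}]$ by definition of $S_{\max}$, and since $\bm\Pi = S\bm q$ for a probability vector $\bm q$, each $\Pi_i = \sum_\omega S_{i\omega} q_\omega$ is a convex combination of entries of row $i$, hence also lies in $[0,S_{\max}]$ (with $\Pi_1 = 1 \le S_{\max}$, as $S_{1\omega}=1$). Dividing the stacked constraint $A_0\bm q \le \bm c_0$ through by $S_{\max} > 0$ (harmless, preserves the solution set) gives $A := A_0/S_{\max} \in [-1,1]^{(2N+2)\times K}$ and $\bm c := \bm c_0/S_{\max} \in [-1,1]^{2N+2}$. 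Similarly, $0 \le D_\omega \le D_{\max}$ for all $\omega$, so $\bm D' := \bm D / D_{\max} \in [0,1]^K$; rescaling the objective by the positive constant $1/D_{\max}$ leaves the argmin unchanged and only rescales the optimal value. Finally I would remark that the nonnegativity constraint $\bm q \ge 0$ is carried over verbatim, and that if $D_{\max}=0$ or $S_{\max}=0$ the problem is degenerate and handled trivially (or these are assumed positive). The resulting program is exactly Eq.~(\ref{eq:OfOurLP}) with the claimed coefficient ranges, matching Definition~\ref{def:lpstandardform}.

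I do not expect any serious obstacle here — this is a routine reformulation. The one point requiring a little care is arguing that the price vector $\bm\Pi$ has entries in $[0,S_{\max}]$ so that $\bm c \in [-1,1]^{2N+2}$: this uses the fact that $\bm\Pi = S\bm q$ for some $\bm q \in \Delta^K$ (i.e.\ that $\overline{\mathcal Q}$ is nonempty, which holds under no-arbitrage by the discussion following Farkas' lemma), rather than being an independent hypothesis on $\bm\Pi$. If one prefers not to invoke no-arbitrage, one can alternatively add $\Pi_{\max} := \max_i \Pi_i$ to Assumption~\ref{assumption:dsmax} and divide by $\max(S_{\max}, \Pi_{\max})$ instead; I would mention this as a footnote. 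The other mild subtlety is bookkeeping the sign/scale factors relating the optimal value of Eq.~(\ref{eq:OfOurLP}) to the original $\Pi_{D,\max}$, namely $\Pi_{D,\max} = -D_{\max}\cdot\big(\text{opt of Eq.~\eqref{eq:OfOurLP}}\big)$, which I would state explicitly so the reduction is usable downstream.
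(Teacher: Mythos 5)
Your proposal is correct and follows essentially the same route as the paper: split the equality $S\bm q=\bm\Pi$ into the two inequalities $S'\bm q\le\bm\Pi'$ and $-S'\bm q\le-\bm\Pi'$ after dividing by $S_{\max}$, rescale the objective by $D_{\max}$, and keep $\bm q\ge 0$; your extra justification that each $\Pi_i\le S_{\max}$ (via $\bm\Pi=S\bm q$ for some $\bm q\in\Delta^K$) fills in a step the paper leaves implicit. The only slip is in your final bookkeeping: with your choice $\bm D'=\bm D/D_{\max}\in[0,1]^K$ the minimization computes $\Pi_{D,\min}/D_{\max}$, so your formula $\Pi_{D,\max}=-D_{\max}\cdot(\text{opt})$ only holds under the paper's convention $D'_\omega=-D_\omega/D_{\max}$ (which in fact lies in $[-1,0]$, so the range $[0,1]$ claimed in the lemma carries the same sign tension you were trying to avoid); fixing the convention one way or the other resolves this trivially.
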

\begin{proof}
First, use the assumption to normalize the derivative $D$ and the payoff matrix $S$. Define $D'_\omega:= - D_\omega/D_{\max} \in [0,1]$, and $S' := S/S_{\max} \in [0,1]^{N+1 \times K}$ and $\bm \Pi' := \bm \Pi/S_{\max} \in [0,1]^{N+1}$, for which $S'  \bm q = \bm \Pi'$.
Then, we split the equality constraint $S' \bm q = \bm \Pi'$ into two inequality constraints $S' \bm q \leq \bm \Pi'$ and
$(-S') \bm q \leq (-\bm \Pi')$. We can now write the extended problem  
\be
A:=\left(\begin{array}{c}S' \\-S'\end{array} \right)  \bm q \leq \left(\begin{array}{c} \bm \Pi' \\-\bm \Pi' \end{array} \right) =: \bm c.
\ee
\end{proof}
Note that scaling $A$ and $\bm \Pi$ by $S_{\rm max}$ does not change the value of the optimal solution of the LP, but we need to multiply the solution of the LP by $D_{\rm max}$ to obtain the result of the problem in Eq.~(\ref{eq:simplexonthis}). As before, the dual of this problem is given by a linear program for the vector $\xi \in \mathbbm R^{2N +2}$ as
\begin{equation}\label{eq:dualOfOurLP}
\begin{array}{ll@{}ll}
\max\limits_{\bm \xi \mathbbm R^{2N +2}}  & \bm c^T \bm \xi &\\
\text{s.t.}& \bm D' \leq A^T \bm \xi, & \\
& \bm \xi \geq 0.
\end{array}
\end{equation}
The financial interpretation is similar to the above. 
The portfolio $\bm \xi$ is constrained to have only non-negative allocations, but we have assets to invest in with prices $\bm \Pi$ and $- \bm \Pi$, which allows taking long and short positions. We again find the minimum price of the portfolio that also ``superhedges" the derivative via $\bm D' \leq A^T \bm \xi$.

\section{Quantum zero-sum games algorithm for risk-neutral pricing}\label{sec:zerosumgames}

A matrix game is a mathematical representation of a game between two players (here called Alice and Bob) who compete by optimizing their actions over a set of strategies. For such a game, let the possible actions of the two players be indexed by $[m]$ and $[n]$, respectively, and, in the zero-sum setting, let the game be defined by a matrix $F\in [-1,1]^{m \times n}$. The players select the strategy $i \in [m]$ and $j\in [n]$, and the outcome of a round is $F_{ij}$ for the first player and $-F_{ij}$ for the second. One player's gain is the other's loss, hence the name zero-sum game. The randomized (or ``mixed") strategies for the players are described by a vector in the simplex $\Delta^m$ for Alice and $\Delta^n$ for Bob, i.e., probability distributions over the actions. If $\bm x \in \Delta^m$ is the strategy of Alice and $\bm y \in \Delta^n$ is the strategy for Bob, the expected payoff for Alice is $\bm x^TA\bm y$. Jointly, the goal of the players can be formalized as a min-max problem as: 
 \be 
 \min\limits_{\bm y \in \Delta^n} \max\limits_{\bm x \in \Delta^m} \bm x^TA\bm y.
 \ee
The optimal value of a game is defined as 
\be
\lambda^\ast &:=& \min \{ \lambda \in \mathbb{R} | A^T\bm x \leq \lambda \bm 1, \bm x \in \Delta^n \} \nonumber \\
&=&  \max \{ \lambda \in \mathbb{R} | A\bm y \leq \lambda \bm 1, \bm y \in \Delta^m \}.
\ee
This optimal value is always achieved for a pair of optimal strategies $(\bm x,\bm y)$. A strategy is called $\epsilon$-optimal if $A\bm y \leq (\lambda^\ast + \epsilon )\bm 1$. There is a bijection between zero-sum games and linear programs. Following Ref.~\cite[Lemma 12]{Apeldoorn2019}, we can reduce the LP of Lemma \ref{lem:stdformLP} to a zero-sum game. In our scenario, the optimal solution is in $[0,1]$ as $D'_\omega\in [0,1]$ and $\sum_\omega q_\omega = 1$ by construction as a probability measure. 
The problem is solved by binary search on a feasibility problem obtained from the LP. The feasibility problem is deciding if the optimal solution is $< \alpha$ or $\geq \alpha +\epsilon$, for a parameter $\alpha\in[-1,1]$. Note that this interval, for our problem, is restricted to $[0,1]$, as the normalized value of a derivative is non-negative. After manipulating the LP, one can formulate the decision problem solved at each round of the binary search. Below, we state a version of Ref.~\cite[Lemma 12]{Apeldoorn2019}, made specific to our problem. 

\begin{lemma}[Zero-sum game for derivative pricing]\label{thmZeroSumFinance}
Consider the LP in standard form from Lemma~\ref{lem:stdformLP}, with optimal value OPT and optimal strategy $\bm{q^\ast}$, and its dual in Eq.~(\ref{eq:dualOfOurLP}) with optimal solution $\bm \xi^\ast$. Let $R, r >0$ be such that $\sum_{i=1}^K q^\ast_i \leq R$ and $\sum_{j=1}^{N+1} \xi^\ast_j \leq r$.
For an $\alpha \in [0,1]$, the primal LP can be cast as the following LP
\begin{equation}\label{eqZeroSum}
\begin{array}{ll@{}ll}
\max\limits_{\lambda \in \mathbbm R}  & \lambda \\
\text{s.t.}& F \bm y \leq \lambda \bm 1 \\
& \bm y \in \Delta^{N_2},
\end{array}
\end{equation}
where the matrix $F \in [-1,1]^{N_1 \times N_2}$ is defined as
\begin{align} \label{eqMatrixF}
F:=\left(\begin{array}{cccc}
\bm 1^T & 1 & -1\\
-\bm 1^T & 1 & 1\\
 - (\bm{D'})^T & 0 & \alpha/R \\
A & \bm{0} & - \bm{c}/R
\end{array} \right) = \left( 
\begin{array}{cccc}
\bm 1^T & 1 & -1\\
-\bm 1^T & 1 & 1\\
 - (\bm{D'})^T & 0 & \alpha/R \\
S' & \bm{0} & - \bm{\Pi'}/R \\
-S' & \bm{0} &  \bm{\Pi'}/R
\end{array}
\right),
\end{align}
where $N_1:=2(N+1)+5$ and $N_2:= K+2$, and $\bm{0}$ are column vectors of the appropriate size. Finding the optimal value $\lambda^*$ of the LP in Eq.~(\ref{eqZeroSum}) with additive error $\epsilon/(6R(r+1))$ suffices to correctly conclude either $OPT< \alpha$ or $OPT \geq \alpha -\epsilon$ for the original LP.
\end{lemma}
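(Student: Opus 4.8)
The plan is to specialize the generic LP-to-zero-sum-game reduction of \cite[Lemma 12]{Apeldoorn2019} to the standard-form LP of Lemma~\ref{lem:stdformLP}, verifying along the way that the concrete data of that LP assemble into exactly the matrix $F$ of Eq.~(\ref{eqMatrixF}). The argument rests on three ingredients. First, LP strong duality: the primal value OPT of the LP of Lemma~\ref{lem:stdformLP} equals the optimal value of its dual Eq.~(\ref{eq:dualOfOurLP}), so for a fixed threshold $\alpha$ exactly one of two alternatives holds --- either a primal-feasible $\bm q$ places the objective on the ``$\mathrm{OPT}<\alpha$'' side, or a dual-feasible $\bm \xi$ certifies ``$\mathrm{OPT}\ge\alpha$''. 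Second, the a priori bounds $\|\bm q^\ast\|_1\le R$ and $\|\bm\xi^\ast\|_1\le r$ let one rescale both variables into (subsets of) probability simplices: setting $\bm q=R\tilde{\bm q}$ turns the primal budget into $\sum_i\tilde q_i\le1$, and adjoining one slack coordinate places the rescaled primal variable in a simplex; the dual is handled analogously, with one extra coordinate carrying the threshold. Third, the resulting pair of alternatives packages into a single min-max problem of the shape Eq.~(\ref{eqZeroSum}) whose optimal value $\lambda^\ast$ is bounded away from $0$ on the two sides, so that an additive estimate of $\lambda^\ast$ decides between them.

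Concretely, I would first normalize as in the proof of Lemma~\ref{lem:stdformLP}, so that $A\bm q\le\bm c$ is the pair $S'\bm q\le\bm\Pi'$, $-S'\bm q\le-\bm\Pi'$ with $S'\in[0,1]^{(N+1)\times K}$, $\bm\Pi'\in[0,1]^{N+1}$, and the objective coefficient is $\bm D'\in[0,1]^K$. Next I would substitute $\bm q\mapsto R\tilde{\bm q}$, divide $A(R\tilde{\bm q})\le\bm c$ and $(\bm D')^T(R\tilde{\bm q})\le\alpha$ by $R$, and homogenize these inequalities with respect to the two auxiliary simplex coordinates $y_{K+1},y_{K+2}$; this is what produces the last two columns of $F$, namely $(1,1,0,\bm 0,\bm 0)^T$ and $(-1,1,\alpha/R,-\bm\Pi'/R,\bm\Pi'/R)^T$. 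Then I would read off the blocks of $F$: the first two rows $(\bm 1^T,1,-1)$ and $(-\bm 1^T,1,1)$ enforce, once $\bm y\in\Delta^{N_2}$ is imposed, the budget constraint $\sum_i\tilde q_i\le1$ in both directions together with the bookkeeping of the slack coordinate; the third row enforces the objective threshold; the last two blocks $\pm S'$ enforce the rescaled equality $S'\bm q=\bm\Pi'$. Finally I would check the size bound: every entry of $F$ is one of $\pm1$, $0$, an entry of $\pm S'\in[-1,1]$, an entry of $\pm\bm\Pi'/R\in[-1,1]$, an entry of $-\bm D'\in[-1,0]$, or $\alpha/R\in[0,1]$ (here $R\ge1$, since $R\ge\sum_i q^\ast_i$ and $\bm q^\ast$ is a probability vector), so $F\in[-1,1]^{N_1\times N_2}$ with $N_1=2(N+1)+5$ and $N_2=K+2$. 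This settles the ``the primal LP can be cast as'' part.

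For the accuracy claim I would quantify the gap between the two cases. If $\mathrm{OPT}\ge\alpha$, the homogeneous rescaled primal system is infeasible, and strong duality plus $\|\bm\xi^\ast\|_1\le r$ furnishes a dual point which, inserted into the min-max problem, forces $\lambda^\ast$ to be at least a positive quantity of order $\epsilon/(R(r+1))$; conversely, if $\mathrm{OPT}<\alpha-\epsilon$, an interior rescaled primal point plus $\|\bm q^\ast\|_1\le R$ gives a strategy $\bm y$ witnessing $\lambda^\ast$ at most a negative quantity of the same order. Tracking the $R,r$ dependence through the rescalings and computing the constant yields the threshold $\epsilon/(6R(r+1))$: any additive estimate of $\lambda^\ast$ finer than this lands on the correct side, so the binary-search step concludes $\mathrm{OPT}<\alpha$ or $\mathrm{OPT}\ge\alpha-\epsilon$, the ambiguity being confined to the harmless window $[\alpha-\epsilon,\alpha)$.

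The hard part will not be conceptual --- the skeleton is exactly that of \cite[Lemma 12]{Apeldoorn2019} --- but bookkeeping: keeping the homogenization coordinates consistent, confirming that the rescaled matrix genuinely has all entries in $[-1,1]$, and pinning the numerical constant to exactly $6$ in $6R(r+1)$ rather than a nearby value. A secondary point to watch is the chain of sign conventions linking the maximization LP Eq.~(\ref{eq:simplexonthis}), its minimization standard form with $\bm D'\propto-\bm D$, and the orientation of the threshold $\alpha$, so that the binary search runs in the intended direction.
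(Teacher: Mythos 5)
Your proposal is correct and follows essentially the same route as the paper: the paper gives no independent proof of this lemma, but simply states it as a specialization of the LP-to-zero-sum-game reduction of \cite[Lemma 12]{Apeldoorn2019} to the standard-form LP of Lemma~\ref{lem:stdformLP}, which is exactly the reduction (rescaling by $R$, homogenization with slack coordinates, assembling the blocks of $F$, and inheriting the $\epsilon/(6R(r+1))$ accuracy threshold) that you outline. The only caveat is that the constant $6$ and the final decision gap are inherited from the cited lemma rather than re-derived, but this matches the level of detail in the paper itself.
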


In the previous decision problem, if $\sum_{i=1}^K q^\ast_i \leq R$ then adding the constraint $\sum_{i=1}^K q_i \leq  R$ to the primal in Lemma \ref{lem:stdformLP} will not change the optimal value of the primal. 
In our case, we know that 
\be 
R=1.
\ee
Likewise, if $\sum_{i=1}^{N+1} \xi_i^\ast \leq r$ then adding the constraint $\sum_{i=1}^{N+1} \xi_i \leq r$ will not change the optimal value of the dual program. The LP in Eq.~(\ref{eqZeroSum}) can be solved via a zero-sum game algorithm, which we choose to be the quantum algorithm from Ref.~\cite{Apeldoorn2019}. Solving this linear program obtains a solution vector $\bm y \in \mathbb{R}^{K+2}$, a number $\alpha$, and a number $\widetilde{\lambda}$ from which information on the actual price of the derivative can be obtained.
We have two cases: If $\lambda > \epsilon$, then the option price is $\Pi_{D, \rm max} < \alpha D_{\max}$.
If $\lambda \leq \epsilon/(6(r+1))$, then the option price is $\Pi_{D, \rm max} \geq (\alpha-\epsilon) D_{\max}$. Performing a binary search on $\alpha$ will suffice to estimate the price of the derivative to the desired accuracy. 

\subsection{Quantum input model}
We discuss the input model of our quantum algorithm: an oracle for the price system, and a way of obtaining quantum sampling access to probability distributions. We assume access to the price system and the derivative in the standard quantum query model~\cite{Berry2012} as follows.
\begin{oracle} [Zero-sum game pricing oracles]\label{oracle:query_oracle}
For a price system as in Definition~\ref{def:payoff-matrix}
and a derivative as in Definition~\ref{def:derivative}, 
let $j\in[N_1]$ and $k\in[N_2]$, assume unitaries that at unit cost perform $\ket{jk} \ket 0 \to \ket{jk} \ket {S_{jk}}$,
$\ket{j} \ket 0 \to \ket{j} \ket {\Pi_{j}}$, and $\ket{k} \ket 0 \to \ket{k} \ket {D_{k}}$,
where all numbers are encoded to fixed-point decimal precision.
\end{oracle}
In most practical cases such as conventional European options, the query access to the derivative $\bm D$ can be obtained from a quantum circuit computing the function used to compute the derivative and query access to the payoff matrix $S$.
We also define quantum vector access as a black box that builds quantum states proportional to
any vector (in our context $\bm x$ and $\bm y$). 
\begin{defn}[Quantum vector access]\label{def:quantumsamplingaccess}
Let $\bm x \in \mathbb{R}^d$ with sparsity $s$, and $\|\bm x \|_1 \leq \beta$. We say that we have quantum vector access to $d$-dimensional vectors if for any $\bm x$ we can construct a unitary in time $\Ord{s\ {\rm poly}  \log d}$ that performs the mapping
$$\ket{0}\mapsto \sum_{i=1}^d \sqrt{x_i}/\beta \ket{0} + \ket{G}\ket{1},$$ in $\Ord{ {\rm poly} \log d}$ time, 
where $\ket{G}$ is an sub-normalized arbitrary quantum state.   
\end{defn}

There are different ways of building this quantum access. Given quantum query access to the entries of the vector $\bm x$, we can perform the controlled rotation
$\sum_{i=1}^d \ket{i}\ket{x_i}\ket{0} \mapsto \sum_{i=1}^d \ket{i}\ket{x_i}\left (\sqrt{x_i/\beta} \ket{0}  + \sqrt{1- x_i/\beta} \ket{1}\right)$.
Moreover, the state $\ket{\bm x}= \frac{1}{ \|x\|_1} \sum_{i=1}^d \sqrt{x_i}\ket{i}$ with negligible garbage state can be constructed efficiently using pre-computed partial sums and the quantum circuits developed in \cite{Grover2002, kerenidis2020quantum, Kerenidis2017}. We refer to Definition~\ref{def:sampling_oracle} for a definition of this type of data structure that can be used with a single vector.  In \cite{araujo2021divide} there is a $\Ord{\lceil \log d \rceil^2}$ depth circuit for creating arbitrary quantum states of the form $\sum_{k=0}^{2^n-1}\alpha_k\ket{k}\ket{\rm{garbage}_k}$ where $\ket{\rm{garbage}_k}$ is a $\Ord{2^n}$ qubit state entangled with the first register \cite{sun2021asymptotically}. The quantum access to a vector can be implemented by a quantum random access memory (QRAM)~\cite{Lloyd2013,Giovannetti2008,Giovannetti2008_2,Martini2009}.
A QRAM requires quantum switches that are arranged as a branching tree of depth $\Ord{\log N_1 + \log N_2}$ and width $N_1 \times N_2$ to access the memory elements. As shown in \cite{Giovannetti2008,Giovannetti2008_2}, the expected number of switches that are activated in a memory call is
$\Ord{\log N_1 +\log N_2}$ even though all $\Ord{N_1 N_2}$ switches participate to a call in parallel. More discussion on the architecture can be found in \cite{Arunachalam2015,hann2019hardware}.

\subsection{Quantum algorithm}
\begin{figure}
\begin{algorithm}[H]
  \caption{Quantum LP pricing via zero-sum games \cite{Apeldoorn2019}}
    \label{algGK}
  \begin{algorithmic}[1]
    \Require{
    \Statex Access to: $N+1$ asset prices $\bm \Pi\in \mathbbm R_+^{N+1}$, payoff matrix $S\in \mathbbm R_+^{N+1\times K}$, and derivative $\bm D \in \mathbbm R_+^K$ with Oracle~\ref{oracle:query_oracle}.
    \Statex Knowledge of $S_{\max} = \max_{i,\omega} S_{i\omega}$, $D_{\max} = \max_{\omega} D(\omega)$, and $r >0$. 
    \Statex Access to $\bm x \in \mathbb{R}^{2(N+1)+5}$ and $\bm y \in \mathbb{R}^{K+2}$ with Definition~\ref{def:quantumsamplingaccess}. 
    \Statex Accuracy $\epsilon$, and failure probability $\delta$.}
    \State $x_{\rm l} = -1, x_{\rm r} = 1$ \Comment{Set initial parameters of binary search}
    \State $N_1 = 2(N+1)+5, N_2=K+2$ 
    \State $\alpha=\text{unif}([0,x_{\rm r}])$ 
    \State $\epsilon'=\epsilon/(6(r+1))$ 
    \State Build query access to $F$, as defined in Eq.~(\ref{eqMatrixF}). 
    \For {$1 \textrm{ to } \lceil \log(1/\epsilon) \rceil$} \Comment{Binary search for satisfiability of ZSG}
    \State $\bm x \gets \bm 0 \in \mathbbm R^{N_1}$, $\bm y \gets \bm 0 \in  \mathbbm R^{N_2}$ 
    \State Create $\ket{\bm{p}^{(1)}} = \frac{1}{\sqrt{N_2}}\sum_{k=1}^{N_2}\ket{k}$, $\ket{\bm{q}^{(1)}} = \frac{1}{\sqrt{N_1}}\sum_{k=1}^{N_1}\ket{k}$
    
    \State $T \gets \lceil 16 \epsilon'^{-2}\log {\frac{N_1 N_2\lceil \log(1/\epsilon) \rceil}{\delta} } \rceil$
    \State $\eta \gets \frac{\epsilon'}{4}$
    \For{$t \gets 1 \textrm{ to } T$} \Comment{ZSG algorithm}
    \State Sample $k \in [N_2]$ by measuring $\ket{\bm{p}^{(t)}}$ in the computational basis
    \State Sample $k' \in [N_1]$ by measuring $\ket{\bm{q}^{(t)}}$ in the computational basis
    \State $x_{k'} \gets x_{k'} + \eta$
     \State $y_k \gets y_k + \eta$
         \State Update quantum sampling access for $\bm x, \bm y$.    
     \State Create quantum states (as in Eq.~(\ref{eqQuantumProb})):
     $$\ket{ \bm p^{(t+1)}} = \sum_{k=1}^{N_2} \sqrt{\bm p^{(t+1)}_k} \ket k  \quad\text{ and }\quad \ket{ \bm q^{(t+1)}} = \sum_{k=1}^{N_2} \sqrt{\bm q^{(t+1)}_k} \ket k  $$
    \EndFor
    \State Estimate $\widetilde{\lambda} = x^TFy$  with error $\epsilon'$. \label{line:estimatelambda}
    \If{$\widetilde \lambda> \epsilon'$} \Comment{$\text{OPT} < \alpha$ } \label{line:condizione1}
    
    \State $x_r \gets \alpha; \alpha \gets (x_l + \alpha)/2 $
 \ElsIf{$\widetilde{\lambda} \leq \epsilon'$} \label{line:condizione2} \Comment{$\text{OPT} \geq \alpha -\epsilon$}
        \State $x_l \gets \alpha; \alpha \gets (\alpha + x_r)/2$
\EndIf 
    \State Update quantum access to $F$ with new value of $\alpha$. \label{line:updateF}
    \EndFor
    \State \Return{$\alpha$, $\alpha  D_{\max}$, $\frac{\bm x_{3, \dots, 3+N }}{\|x_{3, \dots, 3+N }\|_1},\frac{\bm y_{0, \dots, K}}{\|y_{0, \dots, K}\|_1}$ }  
  \end{algorithmic}
\end{algorithm}
\end{figure}

The quantum algorithm for zero-sum games uses amplitude amplification and quantum Gibbs sampling to achieve a speedup compared to the classical algorithm. We discuss the run time of the classical algorithm for solving LP based on the reduction to a zero-sum game at the end of this section. 
The quantum computer is used to sample from certain probability distributions created at each iteration. 
The result \cite[Lemma 8]{Apeldoorn2019} is used to prepare the quantum states
\be \label{eqQuantumProb}
\ket{ P^{(t)}} = \frac{1}{\sqrt{\|e^{F\bm y}\|_1}}\sum_{k=1}^{N_2} \sqrt{(e^{F\bm y})_k} \ket{k}, \quad \ket{ Q^{(t)}} &=& \frac{1}{\sqrt{\|e^{-F^T\bm x}\|_1}}\sum_{k=1}^{N_1} \sqrt{(e^{-F^T\bm x})_k} \ket{k},
\ee
via amplitude amplification and polynomial approximation techniques, relying on the oracle access to the matrix $F$. 
Measuring these quantum states then retrieves classical bit strings corresponding to an index $k$ with probability $\bm{P}^{(t)}_k / \Vert \bm{P}^{(t)} \Vert_1$ and $\bm{Q}^{(t)}_k / \Vert  \bm{Q}^{(t)} \Vert_1$, respectively. 
These samples are then used in the steps $y_k \gets y_k + \eta$ and $x_k \gets x_k + \eta$, for classically updating the solution vector. 
The rest of the algorithm remains classical.
We state here the main result from \cite{Apeldoorn2019}. 
An $\epsilon$-feasible solution to our LP problem means that we can find a solution such that the constraint is relaxed to $A\bm q \leq \bm c + \bm 1 \epsilon$.
\begin{theorem}[Dense LP solver~{\cite[Theorem 13]{Apeldoorn2019}}]\label{thm:densesolver}
 Assume to have suitable quantum query access to a normalized LP in standard form as in Lemma~\ref{lem:stdformLP}, with $r,R$ (defined in Lemma~\ref{thmZeroSumFinance}) known, along with quantum vector access as in Definition~\ref{def:quantumsamplingaccess} to two vectors $\bm x \in \mathbb{R}^{N_1}, \bm y \in \mathbb{R}^{N_2}$. For $\epsilon \in (0,1)$, there exists a quantum algorithm that finds an $\epsilon$-optimal and $\epsilon$-feasible $\bm y$ with probability $1-\delta$ using $\tOrd{(\sqrt{K}+\sqrt{N}) \left (\frac{R(r+1)}{\epsilon} \right)^3}$ quantum queries to the oracles, and the same number of gates.
\end{theorem}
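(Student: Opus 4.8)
The plan is to observe that this statement is essentially \cite[Theorem 13]{Apeldoorn2019}, so the work is to check that the normalized LP of Lemma~\ref{lem:stdformLP} meets the hypotheses of that result and that Oracle~\ref{oracle:query_oracle} together with the quantum vector access of Definition~\ref{def:quantumsamplingaccess} supplies exactly the ``suitable quantum query access'' their algorithm consumes. Concretely, I would first apply the LP-to-zero-sum-game reduction of Lemma~\ref{thmZeroSumFinance}: with the bounds $R \geq \sum_i q^\ast_i$ and $r \geq \sum_j \xi^\ast_j$ known, the primal LP becomes a game with payoff matrix $F \in [-1,1]^{N_1 \times N_2}$, and an additive-$\epsilon'$ solution of that game with $\epsilon' = \epsilon/(6R(r+1))$ certifies an $\epsilon$-optimal, $\epsilon$-feasible solution of the LP. So it suffices to solve the game to precision $\epsilon'$.

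For the game I would invoke the quantum Grigoriadis--Khachiyan / matrix-multiplicative-weights solver. Classically this runs for $T = \tOrd{\epsilon'^{-2}}$ iterations; at iteration $t$ it needs one sample from the Gibbs distribution $\propto e^{F \bm y^{(t)}}$ over the $N_2$ columns and one from $\propto e^{-F^T \bm x^{(t)}}$ over the $N_1$ rows, then performs the sparse updates $y_k \gets y_k + \eta$ and $x_{k'} \gets x_{k'} + \eta$. The quantum speedup comes from two facts: (i) because the iterates $\bm x^{(t)}, \bm y^{(t)}$ have only $\Ord{t} = \tOrd{\epsilon'^{-2}}$ nonzero entries, any single entry of $F\bm y^{(t)}$ or $F^T \bm x^{(t)}$ can be computed with that many queries to $F$; and (ii) a Gibbs state of the form $\frac{1}{\sqrt{\|e^{w}\|_1}}\sum_k \sqrt{(e^{w})_k}\,\ket{k}$ can be prepared, and hence sampled, in time $\tOrd{\sqrt{\dim}}$ by combining a polynomial approximation of the exponential with amplitude amplification on a flat initial superposition --- precisely the construction behind Eq.~(\ref{eqQuantumProb}) and \cite[Lemma 8]{Apeldoorn2019}. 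The final value $\widetilde\lambda = \bm x^T F \bm y$ and the returned marginals are estimated to additive $\epsilon'$ by amplitude estimation.

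The accounting that yields the stated complexity is the last step. Preparing each Gibbs state accurately enough that the accumulated error over all iterations stays within the $\epsilon'$ budget costs an extra $\tOrd{1/\epsilon'}$ factor per iteration, so the per-iteration cost is $\tOrd{(\sqrt{N_1}+\sqrt{N_2})/\epsilon'} = \tOrd{(\sqrt{N}+\sqrt{K})/\epsilon'}$; multiplying by $T = \tOrd{\epsilon'^{-2}}$ gives $\tOrd{(\sqrt{N}+\sqrt{K})/\epsilon'^{3}}$, and substituting $\epsilon' = \epsilon/(6R(r+1))$ produces the claimed $\tOrd{(\sqrt{K}+\sqrt{N})(R(r+1)/\epsilon)^3}$. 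A union bound over the $\tOrd{T}$ Gibbs samples, with each failure probability driven below $\delta$ up to the logarithmic sampling overhead, accounts for the $1-\delta$ success and is absorbed into the $\tOrd{\cdot}$.

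The main obstacle --- the genuinely non-routine part --- is the controlled-error quantum Gibbs sampling inside the iteration loop: one must show that the polynomial-approximation-plus-amplitude-amplification preparation yields $\ell_1$-close samples at cost $\tOrd{\sqrt{\dim}/\epsilon'}$ given only the query oracle for $F$ and the ability to re-synthesize quantum access to the sparse iterates $\bm x^{(t)}, \bm y^{(t)}$ after each update, and that these per-step errors do not compound across the $T$ iterations of the multiplicative-weights dynamics. Verifying that Oracle~\ref{oracle:query_oracle} and Definition~\ref{def:quantumsamplingaccess} together furnish exactly these primitives --- in particular that the $\Ord{t}$-sparse updates keep the vector-access construction cost polylogarithmic --- is where the hypotheses of \cite[Theorem 13]{Apeldoorn2019} must be matched to our setting.
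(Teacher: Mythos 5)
There is no internal proof to compare against: the paper does not prove Theorem~\ref{thm:densesolver} at all, but imports it verbatim from \cite[Theorem~13]{Apeldoorn2019} (``We state here the main result from \cite{Apeldoorn2019}''), and only later uses it as a black box inside Theorem~\ref{thm:pricingzsg}. Your sketch is therefore a reconstruction of the cited reference rather than of anything in this paper; its broad strategy (reduce the LP to a zero-sum game as in Lemma~\ref{thmZeroSumFinance}, run the Grigoriadis--Khachiyan dynamics for $T=\tOrd{1/\epsilon'^2}$ rounds with quantum Gibbs sampling, then substitute $\epsilon'=\epsilon/(6R(r+1))$) does match that reference, and the final count $\tOrd{(\sqrt N+\sqrt K)/\epsilon'^3}$ is the right one.

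The one place where your mechanism does not support the bound you state is the per-iteration cost. If, as in your point (i), each entry of $F\bm y^{(t)}$ or $F^T\bm x^{(t)}$ were recomputed from the $\Ord{t}$-sparse iterate with $\Ord{t}=\tOrd{1/\epsilon'^2}$ queries inside the amplitude-amplified Gibbs sampler, the per-iteration cost would be $\tOrd{\sqrt{N_1+N_2}/\epsilon'^2}$ and the total $\tOrd{(\sqrt N+\sqrt K)/\epsilon'^4}$, not $\epsilon'^3$. In the actual algorithm the entries of $F\bm y^{(t)}$ are never computed classically: the quantum vector access of Definition~\ref{def:quantumsamplingaccess} to the iterate (whose $\ell_1$ norm is bounded by $\eta T=\tOrd{1/\epsilon'}$) is combined with the query oracle for $F$ to build a block-encoding of the relevant diagonal, and the exponential is applied through a polynomial approximation whose degree scales with that $\ell_1$ bound; this, together with the $\Ord{\sqrt{\dim}}$ amplitude amplification (the acceptance probability is at least $1/\dim$ since $\sum_k e^{w_k}\geq e^{\max_k w_k}$), is what yields the $\tOrd{(\sqrt{N_1}+\sqrt{N_2})/\epsilon'}$ per-iteration cost. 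Your attribution of the extra $1/\epsilon'$ purely to ``accuracy of the Gibbs states'' hides this, and your point (i) as written would actually break the claimed complexity; if you intend to prove the theorem rather than cite it, this is the step that must be made precise.
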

Theorem~\ref{thm:densesolver}, can be directly adapted to the martingale pricing problem to obtain a quantum advantage for the run time, see Algorithm \ref{algGK}.  Theorem~\ref{thm:densesolver}, and the embedding in  Lemma~\ref{thmZeroSumFinance} imply the following theorem.  Here, we discuss how to obtain a relative error for the derivative price.

\begin{theorem}[Quantum zero-sum games algorithm for martingale pricing]\label{thm:pricingzsg}
Let $(\bm \Pi, S)$ be a price system with payoff matrix $S \in \mathbb{R}^{(N+1) \times K}$ and price vector $\bm \Pi \in \mathbb{R}^{N+1}$, and let $\bm D\in \mathbb{R}^{K}$ be a derivative. Assume to have quantum access to the matrix $S$ and the vectors $\bm D$ and $\bm \Pi$ through Oracle~\ref{oracle:query_oracle}, and two vectors $\bm x \in \mathbb{R}^{N_1}$, and $\bm y \in \mathbb{R}^{N_2}$ through 
quantum vector access as in Definition~\ref{def:quantumsamplingaccess}. Under Assumption~\ref{assumption:dsmax}, for $\epsilon \in (0,1/2)$, Algorithm~\ref{algGK} estimates $\Pi_{D, \max}$ with absolute error $\epsilon$ and with high probability using $$\Ord{(\sqrt{K}+\sqrt{N}) \left (\frac{(r+1)}{\epsilon }D_{\max}\right)^3}$$ 
quantum queries to the oracles, and the same number of gates. The algorithm also returns quantum access to an $\epsilon$-feasible solution $\bm q$ as in Definition~\ref{def:quantumsamplingaccess}.
\end{theorem}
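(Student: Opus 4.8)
The plan is to derive the claim from the dense-LP quantum solver of Theorem~\ref{thm:densesolver} by pushing the pricing problem through the two reformulations already set up in Sections~\ref{sectionLP}--\ref{sec:zerosumgames} and carefully tracking how the normalization constant $D_{\max}$ enters both the error and the query count. First I would apply Lemma~\ref{lem:stdformLP}: under Assumption~\ref{assumption:dsmax} the program Eq.~(\ref{eqLPmax}), i.e.\ Eq.~(\ref{eq:simplexonthis}), becomes the normalized standard-form LP Eq.~(\ref{eq:OfOurLP}) with all data in $[-1,1]$, and (as observed after that lemma) the optimal value of Eq.~(\ref{eq:simplexonthis}), namely $\Pi_{D,\max}$, equals $D_{\max}$ times the optimal value $\mathrm{OPT}$ of the normalized LP. Hence estimating $\Pi_{D,\max}$ to absolute error $\epsilon$ reduces to solving the normalized LP to absolute error $\eta:=\epsilon/D_{\max}$ in its objective. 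I would also fix the two size parameters required downstream: since every feasible $\bm q\in\Delta^K$ has $\sum_k q_k=1$, the bound $R=1$ is valid, while a bound $r$ on $\sum_j\xi^*_j$ for the dual Eq.~(\ref{eq:dualOfOurLP}) is supplied as an input of Algorithm~\ref{algGK} (and we assume the market is arbitrage-free so that $\overline{\mathcal Q}\neq\emptyset$ by Lemma~\ref{lemFarkas}, i.e.\ the LP is feasible).

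Next I would invoke Theorem~\ref{thm:densesolver} on this normalized LP with $R=1$, the given $r$, and accuracy parameter $\eta$; this is exactly what Algorithm~\ref{algGK} carries out, with the binary search over $\alpha\in[0,1]$ and the reduction of each round's feasibility test to the zero-sum game Eq.~(\ref{eqZeroSum}) with $F\in[-1,1]^{N_1\times N_2}$, $N_1=2(N+1)+5$, $N_2=K+2$, made explicit via Lemma~\ref{thmZeroSumFinance}. Theorem~\ref{thm:densesolver} then outputs, with probability $1-\delta$, an $\eta$-optimal and $\eta$-feasible $\bm y$ using $\tOrd{(\sqrt{K}+\sqrt{N})(R(r+1)/\eta)^3}$ queries and gates, and substituting $R=1$ and $\eta=\epsilon/D_{\max}$ yields the claimed $\Ord{(\sqrt{K}+\sqrt{N})((r+1)D_{\max}/\epsilon)^3}$ (the $\log$ factors from the binary search and from the zero-sum-game subroutine being absorbed). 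Unwinding the normalization, $\eta$-optimality makes $\alpha D_{\max}$ an absolute-error-$\epsilon$ estimate of $\Pi_{D,\max}$, and the renormalized subvector of $\bm y$ returned in the last line of Algorithm~\ref{algGK} is, by the block structure of $F$ in Eq.~(\ref{eqMatrixF}), an $\eta$-feasible point $\bm q\in\Delta^K$ with $S\bm q=\bm\Pi$ up to the rescaled tolerance --- precisely the approximate martingale measure promised, delivered in the form of Definition~\ref{def:quantumsamplingaccess}. Taking $\delta$ a small constant (its value already covers all binary-search rounds) gives the ``high probability'' claim.

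I expect the one genuinely delicate step to be the error-and-complexity bookkeeping across the two rescalings: the benign scaling by $S_{\max}$ leaves the optimizer untouched and so affects neither the answer nor the accuracy budget, but the scaling by $D_{\max}$ must be threaded consistently through the accuracy parameter so that the final estimate really has error $\epsilon$ and the complexity really picks up the $D_{\max}^3$ factor; and one must check that the ``$\eta$-optimal, $\eta$-feasible'' guarantee of Theorem~\ref{thm:densesolver}/Lemma~\ref{thmZeroSumFinance} --- phrased for the game value $\lambda^*$ and a constraint-relaxed $\bm q$ --- translates faithfully into an additive guarantee on the derivative price together with the near-martingale property $S\bm q\approx\bm\Pi$. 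A minor point, taken from the hypotheses rather than proved, is that the dual bound $r$ genuinely holds for the instance considered.
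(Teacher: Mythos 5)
Your proposal is correct and follows essentially the same route as the paper: reduce via Lemma~\ref{lem:stdformLP} and Lemma~\ref{thmZeroSumFinance} to the zero-sum game solved by Algorithm~\ref{algGK}/Theorem~\ref{thm:densesolver} with $R=1$, and thread the $D_{\max}$ rescaling through the accuracy parameter to obtain absolute error $\epsilon$ at cost $\Ord{(\sqrt{K}+\sqrt{N})((r+1)D_{\max}/\epsilon)^3}$. The paper's own proof spends its effort on the lower-level details you only gesture at --- building $\Ord{1}$-cost oracle access to the matrix $F$ from Oracle~\ref{oracle:query_oracle} via quantum arithmetic, the constant-cost update of $F$ when $\alpha$ changes, and the classical-sampling estimate of the game value with the correspondingly modified binary-search conditions --- but these are refinements of the same argument, not a different one.
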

\begin{proof}
From quantum access to the vectors $\bm D$ and $\bm \Pi$, and knowing $D_{\rm max}$ and $S_{\rm max}$ we can obtain quantum access to $\bm D'$ and $\bm \Pi'$ using quantum arithmetic circuits (see for example \cite{ruiz2017quantum}). From the given oracles and quantum arithmetic circuits, we can derive quantum access to an oracle for the matrix $F$ in Eq.~(\ref{eqMatrixF}). It follows from these observations that one query to the entries of $F$ costs $\Ord{1}$ queries to the oracles in the input. Note that the updating query access to the matrix $F$ (Line~\ref{line:updateF}) takes constant time, as we do not need to modify the whole matrix $F$.

In order to estimate the value of the zero-sum game (Line~\ref{line:estimatelambda}) with $\epsilon$ error and probability greater than $1-\delta$ we use classical sampling. As stated in \cite[Claim 2]{Apeldoorn2019}, we require $k = \Ord{\log(1/\delta)/(\epsilon')^{2}}$ independent samples $i_1, \dots, i_k$ from 
$\bm x$ and similarly $j_1, \dots, j_k$ independent samples from $\bm y$. Then, $\sum_k A_{i_k, j_k}$ is an estimate $\widetilde{\lambda}$ of $\lambda^*$ with absolute error $\epsilon'$. As each repetition of the ZSG algorithm does not return the right value of the game $\lambda^\ast$, but only an estimate with absolute error $\epsilon'$, the condition on $\epsilon$ for the precision of the binary search of \cite[Lemma 12]{Apeldoorn2019} has to be modified to account for this case. This leads to the conditional statements in Line~\ref{line:condizione1} and Line~\ref{line:condizione2}. 
\end{proof}
Note that the original algorithm of Theorem~\ref{thm:densesolver} results in an absolute error $\epsilon$ for $\alpha$, for the normalized problem, which then results in an absolute error of $\epsilon D_{\rm max}$ for the value of the derivative.

\begin{theorem}[Quantum zero-sum games algorithm for martingale pricing, relative error]\label{thm:pricingzsg_rel}
In the same setting as Theorem \ref{thm:pricingzsg}, Algorithm~\ref{algGK} estimates $\Pi_{D, \max}$ with relative error $\epsilon$ and with high probability using $$\tOrd{(\sqrt{K}+\sqrt{N}) \left (\frac{(r+1)}{\epsilon }\rho\right)^3}$$ 
quantum queries to the oracles, and the same number of gates, where $\rho=D_{\max}/\Pi_{D, \rm max}$. 
\end{theorem}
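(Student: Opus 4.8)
\emph{Proof plan.} The idea is to bootstrap the absolute‑error guarantee of Theorem~\ref{thm:pricingzsg} into a relative‑error one by a geometric (doubling) search over the scale of the unknown optimum. Write $\mathrm{OPT}=\Pi_{D,\max}/D_{\max}\in[0,1]$ for the value of the normalized LP of Lemma~\ref{lem:stdformLP}; a relative error $\epsilon$ for $\Pi_{D,\max}$ is exactly an absolute error $\epsilon\,\Pi_{D,\max}=\epsilon\,D_{\max}\,\mathrm{OPT}$, but since $\mathrm{OPT}$ is what we are trying to compute we cannot feed this tolerance to Algorithm~\ref{algGK} directly. Assume $\bm D\not\equiv 0$ and $\mathcal Q\neq\emptyset$ (otherwise $\Pi_{D,\max}=0$ and relative error is vacuous), so that $\mathrm{OPT}>0$ and $\rho=D_{\max}/\Pi_{D,\max}=1/\mathrm{OPT}$ is finite. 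For $j=1,2,\dots$ run Algorithm~\ref{algGK} with absolute‑error parameter $\epsilon_j:=\tfrac{\epsilon}{4}\,D_{\max}\,2^{-j}$ and failure probability $\delta_j:=\delta\,2^{-j}$, obtaining an estimate $\widehat v_j$ of $\Pi_{D,\max}$ with $|\widehat v_j-\Pi_{D,\max}|\le\epsilon_j$ (the precondition $\epsilon_j/D_{\max}<1/2$ of Theorem~\ref{thm:pricingzsg} holds since $\epsilon<1/2$); stop at the first index $J$ with $\widehat v_J\ge D_{\max}\,2^{-J}$ and return $\widehat v_J$.

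For correctness, if the search stops at stage $J$ then $\Pi_{D,\max}\ge\widehat v_J-\epsilon_J\ge D_{\max}2^{-J}(1-\epsilon/4)\ge\tfrac12 D_{\max}2^{-J}$, so the relative error is $|\widehat v_J-\Pi_{D,\max}|/\Pi_{D,\max}\le\epsilon_J/(\tfrac12 D_{\max}2^{-J})=\epsilon/2\le\epsilon$. For termination, as soon as $2^{-j}\le\mathrm{OPT}/2$ one has $\widehat v_j\ge\Pi_{D,\max}-\epsilon_j=D_{\max}(\mathrm{OPT}-\tfrac{\epsilon}{4}2^{-j})\ge D_{\max}2^{-j}(2-\epsilon/4)\ge D_{\max}2^{-j}$, so the loop halts by stage $J_0:=\lceil\log_2(2/\mathrm{OPT})\rceil=\lceil\log_2(2\rho)\rceil=\Ord{\log\rho}$; in particular the final index satisfies $2^J\le 2^{J_0}\le 4\rho$. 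A union bound over the at most $J_0$ stages gives total failure probability $\le\sum_j\delta_j\le\delta$, so the output is correct with high probability.

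For the cost, Theorem~\ref{thm:pricingzsg} says stage $j$ uses $\tOrd{(\sqrt K+\sqrt N)\big((r+1)D_{\max}/\epsilon_j\big)^3}=\tOrd{(\sqrt K+\sqrt N)\big(4(r+1)2^j/\epsilon\big)^3}$ queries and gates, the polylogarithmic dependence on $1/\delta_j=2^j/\delta$ being absorbed into $\tOrd{\cdot}$. Summing the geometric series $\sum_{j=1}^{J}8^{j}=\Ord{8^{J}}$, the total is $\tOrd{(\sqrt K+\sqrt N)\big((r+1)2^{J}/\epsilon\big)^3}$, and using $2^{J}=\Ord{\rho}$ this is $\tOrd{(\sqrt K+\sqrt N)\big((r+1)\rho/\epsilon\big)^3}$, as claimed; the $\epsilon$‑feasible $\bm q$ is read off from the $\bm x,\bm y$ produced in the final stage.

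\emph{Main obstacle.} The entire difficulty is that $\Pi_{D,\max}$ is unknown, so the absolute tolerance required for a relative guarantee cannot be set a priori; the doubling search resolves this, and the two points to verify carefully are (i) that the stopping test both certifies the relative‑error bound whenever it fires and is guaranteed to fire within $\Ord{\log\rho}$ stages, and (ii) that the per‑stage costs, which grow like $8^{j}$, telescope so that the total is dominated by the last stage, whose scale is $\Theta(\rho)$ — the $\Ord{\log\rho}$ number of stages contributes only an extra polylogarithmic factor, which is precisely why the statement is $\tOrd{\cdot}$ rather than $\Ord{\cdot}$.
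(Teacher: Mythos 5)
Your proposal is correct and follows essentially the same strategy as the paper: a geometric (doubling) search over the unknown scale of $\Pi_{D,\max}$ that bootstraps the absolute-error guarantee of Theorem~\ref{thm:pricingzsg} into a relative one, with the total cost dominated by the last stage at scale $\Theta(\rho)$. The only cosmetic difference is that you fold the factor $\epsilon$ into the tolerance of every stage and certify the relative error via the stopping test, whereas the paper first localizes a constant-factor lower bound on the optimum with precision $1/2^k$ and then makes a single refined call at tolerance $\epsilon(\alpha_k-1/2^k)$; both yield the same $\tOrd{(\sqrt{K}+\sqrt{N})((r+1)\rho/\epsilon)^3}$ bound.
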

\begin{proof}
In order to achieve a relative error $\epsilon$ we can run a standard procedure. For an integer $k$, run Algorithm \ref{algGK} with precision parameter set to $\frac{1}{2^k}$ and success probability at least $0.99$. For a given $k$, this will return an estimate ${\alpha}_k$ such that $|\alpha - \alpha_k| \leq \frac{1}{2^k}$ with success probability at least $0.99$.
Next, conditioned on the event that the algorithm succeeds, check the estimate for $\alpha_k -\frac{3}{2^k} > 0$. If that does hold, we re-run the algorithm with precision $\epsilon (\alpha_k-\frac{1}{2^k})$ and success probability $0.99$.  This obtains an estimate $\widetilde{\alpha}_k$ such that
$|\alpha - \widetilde{\alpha}_k| \leq \epsilon (\alpha_k -\frac{1}{2^k})
\leq \epsilon(\alpha + \frac{1}{2^k}-\frac{1}{2^k}) \leq \epsilon \alpha$ with success probability $0.98$ by the union bound. For the run time, we evaluate
$\alpha_k - \frac{1}{2^k} = \frac{\alpha_k}{2} + \frac{\alpha_k}{2} - \frac{3}{2^{k+1}} +\frac{1}{2^{k+1}} 
> \frac{\alpha_k}{2}+\frac{1}{2^{k+1}} \geq \frac{\alpha}{2} -\frac{1}{2^{k+1}} +\frac{1}{2^{k+1}}=\frac{\alpha}{2}$,
and hence $1/(\alpha_k - \frac{1}{2^k}) < 2/\alpha$.
Set $k^\ast := k$ and $\widetilde \alpha := \widetilde{\alpha}_{k^\ast}$.
If the above criterion does not hold, we increase $k \to k+1$. The procedure will halt because by assumption $\alpha>0$. The number of iterations is given by the first $k$ such that $0< \alpha - \frac{3}{2^k}$, which is bounded by $\Ord{\log(1/\alpha)}$. 
By repeating enough times we can make sure that the total success probability is high.
We obtain the estimate $\widetilde{\Pi}_{D, \max} := \widetilde{\alpha} D_{\max}$ with relative error in the price $\vert \Pi_{D, \max} - \widetilde{\Pi}_{D, \max}\vert \leq \epsilon \Pi_{D, \max}$ with high probability and with a run time
$\tOrd{(\sqrt{K}+\sqrt{N}) \left (\frac{(r+1)D_{\max}}{\epsilon \Pi_{D,\max} }\right)^3}$.
\end{proof}
Along with the price with relative error, this algorithm returns also a $\epsilon$-feasible solution $\widetilde{\bm q}$. This solution could be used to price other derivatives $\bm D$. Doing so will often result in an
error as large as $| \widetilde{\bm{q}}^T\bm{D} - \bm q^T \bm D | \leq \epsilon K D_{\max}$.

\paragraph{Classical zero-sum games algorithm and condition for quantum advantage} 
The classical algorithm requires a number of iterations  $T=\Ord{\left(\frac{(r+1)}{\epsilon}\rho \right)^2 \log(\frac{NK}{\delta}) }$ to achieve the relative error $\epsilon$ for $\Pi_D$.
The cost of the computation for a single iteration is dominated by the update of the two vectors $\bm x$ and $\bm y$ of size $\Ord{N}$ and $\Ord{K}$, respectively. Hence, the classical run time of the algorithm is
$\Ord{T (N_1+N_2)} = \Ord{\frac{(N+K)(r+1)^2 }{\epsilon^2}\rho^2\log {\frac{NK}{\delta}}}$.

Now we focus on a condition that allows for the quantum algorithm to be faster than the classical counterpart. The run time of the algorithms depends on the parameter $r$, which we can relate to the number of assets required to hedge a derivative. 

\begin{remark} \label{remark:upperboundfor_r}
Consider the context
of Lemma \ref{lem:stdformLP} and its dual formulation.
For any vector $\bm \xi \in \mathbbm R^{N+1}$, note that $\sum_{j=1}^{N+1} \xi_j \leq \Vert \bm \xi\Vert_1$. In addition, if $\xi_j \leq 1$, we have $\Vert \bm \xi\Vert_1 \leq \Vert \bm \xi\Vert_0$. In our setting, where $\bm \xi^\ast$ is the optimal portfolio and $\Vert \bm \xi^\ast\Vert_0$ is the number of assets in the optimal portfolio, we hence can use $r \leq \Vert \bm \xi^\ast\Vert_0$.
\end{remark}

To obtain an advantage in the run time of the quantum algorithm we have to presuppose the following for $\bm \xi$.
\begin{theorem}\label{lemma:speedupzsg}
If the derivative requires $\|\bm \xi^\ast\|_0 \leq \frac{\epsilon(\sqrt{N+K})}{\rho}$ asset allocations to be replicated, where $\rho=
D_{\max}/\Pi_{D, \rm max}$, then Algorithm \ref{algGK} performs less queries to the input oracles compared to the classical algorithm.
\end{theorem}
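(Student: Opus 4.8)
The plan is to simply compare the two query complexities established earlier and solve for the regime where the quantum cost is the smaller one. From Theorem~\ref{thm:pricingzsg_rel}, the quantum algorithm makes $\tOrd{(\sqrt{K}+\sqrt{N})\left(\frac{(r+1)}{\epsilon}\rho\right)^3}$ queries, while the paragraph on the classical algorithm gives a classical running time of $\Ord{\frac{(N+K)(r+1)^2}{\epsilon^2}\rho^2\log\frac{NK}{\delta}}$. First I would suppress the shared polylogarithmic and $(r+1),\epsilon,\rho$ factors that appear in both (or rather, keep them and track them, since they enter with different powers) and isolate the dependence on the problem dimensions: quantum scales like $\sqrt{N+K}$ (up to the constant relating $\sqrt{N}+\sqrt{K}$ and $\sqrt{N+K}$, which costs only a factor $\sqrt 2$) whereas classical scales like $N+K$. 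So the dimension-dependence already favors the quantum algorithm by a factor $\sqrt{N+K}$; the question is whether the worse $\epsilon,\rho,(r+1)$ dependence of the quantum algorithm (cubic versus quadratic) eats that gain.

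Next I would write the ratio of quantum to classical query counts, up to constants and logs, as roughly
\begin{equation}
\frac{\sqrt{N+K}\,\big((r+1)\rho/\epsilon\big)^3}{(N+K)\,\big((r+1)\rho/\epsilon\big)^2}
= \frac{(r+1)\rho/\epsilon}{\sqrt{N+K}}.
\end{equation}
This is $\leq 1$ precisely when $(r+1)\rho/\epsilon \leq \sqrt{N+K}$, i.e. $r+1 \leq \epsilon\sqrt{N+K}/\rho$, equivalently $r \lesssim \epsilon\sqrt{N+K}/\rho$ in the relevant regime. Then I would invoke Remark~\ref{remark:upperboundfor_r}, which says we may take $r \leq \|\bm\xi^\ast\|_0$, the number of nonzero asset allocations in the optimal superhedging portfolio (under the normalization $\xi_j\leq 1$ used there). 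Hence the hypothesis $\|\bm\xi^\ast\|_0 \leq \frac{\epsilon\sqrt{N+K}}{\rho}$ guarantees $r+1$ is at most (a constant times) $\epsilon\sqrt{N+K}/\rho$, which makes the displayed ratio at most a constant, so the quantum algorithm performs asymptotically no more queries than the classical algorithm — and fewer once the $\sqrt{N+K}$ versus $\log(NK)$ discrepancy is accounted for.

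The main obstacle — really the only subtle point — is bookkeeping of constants and the mismatch between $\sqrt{N}+\sqrt{K}$ and $\sqrt{N+K}$, between the additive "$+1$" in $r+1$ and a bare $r$, and the fact that the classical bound carries an explicit $\log(NK/\delta)$ while the quantum bound hides logs in the $\tOrd{\cdot}$. None of these change the asymptotic conclusion: $\sqrt N + \sqrt K \leq \sqrt 2\sqrt{N+K}$, the $+1$ is absorbed since we only claim an asymptotic comparison, and the extra classical log only helps the quantum side. I would therefore state the argument at the level of $\Ord{\cdot}$ and note that the stated threshold $\|\bm\xi^\ast\|_0 \leq \epsilon\sqrt{N+K}/\rho$ is exactly the condition that the cubic-versus-quadratic penalty in $(r+1)\rho/\epsilon$ is dominated by the square-root-versus-linear savings in $N+K$, which is the content of the theorem.
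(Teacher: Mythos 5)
Your proposal is correct and follows essentially the same route as the paper: compare the quantum bound $(\sqrt{N}+\sqrt{K})\left((r+1)\rho/\epsilon\right)^3$ with the classical bound $(N+K)(r+1)^2\rho^2/\epsilon^2$, solve for the threshold $r+1 \lesssim \epsilon\sqrt{N+K}/\rho$, and invoke Remark~\ref{remark:upperboundfor_r} to replace $r$ by $\|\bm\xi^\ast\|_0$. Your handling of the $\sqrt{N}+\sqrt{K}$ versus $\sqrt{N+K}$ constant and the hidden logarithms is, if anything, slightly more explicit than the paper's own chain of inequalities.
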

\begin{proof}
Since the number of iterations is the same, we focus on the query complexity of a single iteration. To find the upper bound for $r$ which up to logarithmic factors still allows for a speedup, we need to find $r$ such that 
$ (\sqrt{N}+\sqrt{K})\left( \frac{(r+1)} {\epsilon }\rho\right)^3 < \frac{(N+K) (r+1)^2 }{\epsilon^2}\rho^2 $.  
Thus,
\be
r+1 \leq \epsilon\  \frac{ (N+K)}{\rho(\sqrt{N}+\sqrt{K})} = \epsilon\   \frac{(\sqrt{N+K})(\sqrt{N+K})}{\rho(\sqrt{N}+\sqrt{K})} \leq & \epsilon\ \Pi_D  \frac{(\sqrt{N+K})(\sqrt{N}+\sqrt{K})}{\sqrt{N}+\sqrt{K}} = \frac{\epsilon(\sqrt{N+K})}{\rho}. 
\ee
By Remark~\ref{remark:upperboundfor_r}, we have an advantage in query complexity if the number of assets in the optimal portfolio $\|\bm \xi^\ast\|_0$ is bounded by $\frac{ \epsilon \sqrt{N+K}}{\rho}$.
\end{proof}

\subsection{
Linear programming martingale pricing with the Black-Scholes-Merton model}
\label{secExampleBlackScholes}
We show a simple example motivated by the standard Black-Scholes-Merton (BSM) framework. In our example the payoff matrix can be efficiently computed, and hence  access to large amounts of additional data is not required. The consequence is that instead of having quantum access to $S$ directly, for example from a QRAM of size $O(KN)$, we only require access to data of size $O(N)$. Moreover, we show an example of how an additional constraints in the LP can regularize the model, so to narrow the range of admissible price. For a call option, which we consider in this example, the price can be computed analytically. We show how to make the range of prices of the solutions of our linear programs comparable to the analytic value of the derivative.

In the BSM model, the market is assumed to be driven by underlying stochastic processes (factors), which are multidimensional Brownian motions. Multi-factor models have a long history in economics and finance \cite{Bai2003}.
The dimension and other parameters of the model, like drift and volatility, can be estimated from past market data \cite{bjork2009arbitrage,mayhew1995implied,berestycki2004computing}. 
An advantage of modeling the stocks with the BSM model is that it often allows to solve analytically the pricing problem for financial derivatives.

\paragraph{Pricing with a measure change}
An importation notion in pricing is the \emph{change} of measure from the original measure to the martingale measure. 
The change of measure is described by the Radon-Nikodym derivative, which  denoted by the random variable  $X:=\frac{d\mathbbm Q}{d\mathbbm P}$.
We formulate the problem of pricing a derivative, given that we have an initial probability measure $\mathbbm P$ via a vector $\bm p \in \Delta^K$ and consider the measure change to a new probability measure $\mathbbm Q$ via the vector $\bm q \in \Delta^K$. 
The Radon-Nikodym derivative in this context is an $\bm x \in \mathbbm R^K_+$ for which
\be
x_\omega := \frac{q_\omega}{p_\omega}.
\ee
The martingale property for the assets under $\mathbbm Q$ and the pricing of a derivative
are formulated as
\be
\Pi_j = \mathbb{E}_{\mathbbm Q}[S_j] &=& \mathbb{E}_{\mathbbm P}[X S_j]  = \sum_{\omega \in \Omega} p_\omega x_\omega S_{j\omega},\\
\mathbb{E}_{\mathbbm Q}[D] &=& \mathbb{E}_{\mathbbm P}[X D] = \sum_{\omega \in \Omega} p_\omega X_\omega D_\omega.
\ee
Hence, using $\circ$ for the Hadamard product (element-wise product) between vectors,  $\bm D[\bm p] := \bm p \circ \bm D$ and $(S[\bm p])_{j} := \bm p\circ S_{j}$,
we obtain the linear program for the maximization problem of Eq.~(\ref{eqLPmax}) of the Radon-Nikodym derivative
\begin{equation}\label{eqRadonLP}
  \begin{array}{ll@{}ll}
  \max\limits_{\bm x \in \mathbbm R^K}  & \bm{D}[\bm p]^T \bm{x}  &\\
  \text{s.t.}& \bm x \geq 0 & &  \\
    & S[\bm p]\bm{x} = \bm{\Pi}. & &
\end{array}
  \end{equation}
The minimization problem is obtained similarly.
  
\paragraph{Discretized Black-Scholes-Merton model for a single time step.} 
We perform our first drastic simplification and set the number of risky assets to be $N=1$. This makes the market highly incomplete but simplifies the modeling.
While the Brownian motion stochastic process has a specific technical definition, in the single-period model a Brownian motion becomes a Gaussian random variable.  
Hence, we assume here that we have a single Gaussian variable driving the stock.
Let $B : \Omega \to \mathbbm R$ be the Gaussian random variable, also denoted by $B\sim \mathcal N(0,1)$ distributed under the measure $\mathbbm P$.
The future stock price $S_2$ is modeled as a function of $B$. 
In BSM-type models, asset prices are given by a log-normal dependency under $\mathbbm P$,
\be
S_2 = \Pi\ e^{\sigma B + \mu -\frac{1}{2}\sigma^2},
\ee
using today's price $\Pi\in \mathbbm R_+$, the volatility $ \sigma \in \mathbbm R_+$, and the drift $\mu\in \mathbbm R_+$. 

The stock price is not a martingale under $\mathbbm P$, which can be checked by evaluating $\mathbbm E_{\mathbbm P}[S] \neq \Pi$.
However, it can be shown that under a change of probability measure, the Gaussian random variable can be shifted such that the resulting stock price is a martingale. 
The following lemma is a simple version of Girsanov's theorem for measure changes for Gaussian random variables.
\begin{lemma}[Measure change for Gaussian random variables \cite{shreve2005stochastic}]\label{lem:grisanovgaussian}  Let $B \sim \mathcal N(0,1)$ under a probability measure $\mathbbm P$. 
For $\theta \in \mathbbm R$ define
$B' := B + \theta$. Then, $B' \sim \mathcal N(0,1)$ under a probability measure 
$\mathbbm Q$, where this measure is defined by $ \frac{d \mathbbm Q}{d \mathbbm P} = \exp\left (-\theta B - \frac{1}{2} \theta^2\right)$.
\end{lemma}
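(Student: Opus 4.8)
The plan is to work directly with the candidate Radon--Nikodym derivative $X := \exp(-\theta B - \tfrac{1}{2}\theta^2)$ and to verify the two facts that make the statement meaningful: that $\mathbbm Q$ defined by $\md\mathbbm Q = X \, \md\mathbbm P$ is a bona fide probability measure, and that $B' = B + \theta$ is standard normal under it. First I would check $X > 0$ almost surely (immediate, since the exponential is positive, which in addition gives that $\mathbbm Q$ is equivalent to $\mathbbm P$) and that $\mathbbm E_{\mathbbm P}[X] = 1$. The latter is exactly the Gaussian moment generating function identity $\mathbbm E_{\mathbbm P}[e^{-\theta B}] = e^{\theta^2/2}$, obtained by completing the square in the integral $\frac{1}{\sqrt{2\pi}}\int_{\mathbbm R} e^{-\theta b} e^{-b^2/2}\, \md b$. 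Hence $\mathbbm Q$ has total mass one and is a probability measure equivalent to $\mathbbm P$.

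For the distributional claim I would compute the law of $B'$ under $\mathbbm Q$ via its density. Since $B$ has density $\phi(b) = \frac{1}{\sqrt{2\pi}} e^{-b^2/2}$ under $\mathbbm P$, the law of $B$ under $\mathbbm Q$ has density $b \mapsto X(b)\phi(b) = \frac{1}{\sqrt{2\pi}} e^{-\theta b - \theta^2/2 - b^2/2} = \frac{1}{\sqrt{2\pi}} e^{-(b+\theta)^2/2}$, i.e. $B \sim \mathcal N(-\theta, 1)$ under $\mathbbm Q$; shifting by $\theta$ then gives $B' = B+\theta \sim \mathcal N(0,1)$ under $\mathbbm Q$, as claimed. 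An equivalent route, for readers who prefer not to manipulate densities, is to compute the moment generating function $\mathbbm E_{\mathbbm Q}[e^{uB'}] = \mathbbm E_{\mathbbm P}[X e^{u(B+\theta)}]$: substituting $X$, factoring out $e^{u\theta - \theta^2/2}$, and applying the Gaussian MGF identity with exponent $u - \theta$ collapses the expression to $e^{u^2/2}$ for every $u \in \mathbbm R$, which identifies the law as $\mathcal N(0,1)$ by uniqueness of moment generating functions.

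There is no genuine obstacle here beyond bookkeeping; the only step that deserves an explicit line is the justification that matching the density (equivalently, the everywhere-finite MGF) of $B'$ with that of a standard Gaussian suffices to conclude equality in distribution. I would also add the remark that, although the ambient model in this section is formally the finite sample space $\Omega$, the lemma is stated and used in the continuous Gaussian approximation of a single-period Brownian increment, so the computation above is carried out on $(\mathbbm R, \mathcal B(\mathbbm R))$; the discretized version used downstream follows by sampling this shifted law on the chosen grid.
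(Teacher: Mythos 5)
Your proof is correct: the paper itself gives no proof of this lemma, citing it directly from the textbook reference, and your verification (positivity and unit $\mathbbm P$-expectation of $X=\exp(-\theta B-\tfrac12\theta^2)$ via the Gaussian moment generating function, then completing the square in the density of $B$ under $\mathbbm Q$, or equivalently the MGF computation for $B'$) is exactly the standard argument found in that reference. Your closing remark that the lemma lives on $(\mathbbm R,\mathcal B(\mathbbm R))$ and is only afterwards discretized onto the finite sample space is also consistent with how the paper uses it in Section~\ref{secExampleBlackScholes}.
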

In standard Black-Scholes theory, we use $\theta = \mu/\sigma$ to obtain  $ B' \sim \mathcal N(0,1)$ under $\mathbbm Q$
where 
\be
\frac{d \mathbbm Q}{d \mathbbm P} = \exp\left (-\frac{\mu}{\sigma}  B - \frac{1}{2} \left(\frac{\mu}{\sigma}\right)^2\right).
\ee
We can check that 
$S_2$ satisfies the key property of a martingale as
$\mathbbm E_{\mathbbm Q}[S_2] = \mathbbm E_{\mathbbm P}\left [\frac{d \mathbbm  Q}{d\mathbbm P} S_2\right] = \Pi\ \mathbbm E_{\mathbbm P}[e^{\beta B -\frac{1}{2} \beta^2}] = \Pi$,
with $\beta := \sigma - \mu/\sigma$.
Many financial derivatives can be priced in the Black-Scholes-Merton model by analytically evaluating the expectation value under $\mathbbm Q$ of the future payoff. We use this analytically solvable model as a test case for the linear programming formulation. In order to fit this model into our framework, we need to discretize the sample space. In contrast to the standard BSM model, we take a truncated and discretized normal random variable centered at $0$ on the interval $[-6, 6]$.
We take the sample space to be $\Omega := \{-K_0,\cdots,0,\cdots, K_0\}$ and $K_0 \in \mathbbm Z_+$. The size of the space is $K = \vert \Omega \vert = 2K_0+1$. 
The probability measure for $\omega \in \Omega$, is
$p_\omega = e^{-\frac{36\omega^2}{2 K_0^2}}/ \Sigma$, where $\Sigma = \sum_{\omega \in \Omega} e^{-\frac{36\omega^2}{2 K_0^2}}$.
In analogy to the BSM model, we define the payoff matrix of future stock prices as 
\be 
S_{1\omega} &:=& 1, \nonumber\\
S_{2\omega} &:=& \Pi\ e^{ {\sigma} B(\omega) + \mu + \frac{1}{2}\sigma^2},\label{eqBSMdiscrete}
\ee
where $B(\omega) = \frac{6\omega}{K_0}$.
An important property of this example is that each $S_{2\omega}$, can be computed in $\Ord{\log K}$ time and space via Eq.~(\ref{eqBSMdiscrete}). For settings with more than a single asset $N>1$, if we have more than one, say $d>1$, Gaussian random variables driving the assets, we expect this cost to be $\Ord{d\log NK}$.

\paragraph{Call option.} 
We define the simple case of an European call option. 
The European call option on the risky asset gives the holder the right but not the obligation to buy the asset for a predefined price, the strike price. Hence, the future payoff is defined as
\be
D := \max \left \{0, S_2  - Z \right \},
\ee where  
$Z \in \mathbbm R_+$ is the strike price.
Using the stock prices from Eq.~(\ref{eqBSMdiscrete}), for a specific event $\omega \in \Omega$ the payoff under $\mathbbm P$ is
\be
D(\omega) =  \max \left \{0, \Pi_i e^{ \sigma B(\omega) + \mu -\frac{1}{2}\sigma^2}  - Z \right \}.
\ee

\paragraph{Numerical tests.}

We now test this approach for the call option, using the linear program in Eq.~(\ref{eqRadonLP}), its minimization form, and the payoff matrix Eq.~(\ref{eqBSMdiscrete}). In Appendix~\ref{app:BSMRadon} we present the main numerical results.
We compare the approach to the analytical price derived from the Black-Scholes-Merton model. Our optimization framework allows for a much larger set of Radon-Nikodym derivatives (all entry-wise positive $K$-dimensional vectors) than the set implied from the BSM framework. We observe that this freedom allows for a large range of prices corresponding to idiosyncratic solutions for the measure change. 
To find solutions closer to the Black-Scholes measure change, we develop a simple regularization technique. This regularization is performed by including a constraint for the slope of $\bm q$ into the LP, see Appendix~\ref{app:BSMRadon}.
The regularization leads to a narrowing of the range of admissible prices.
We find that for a broad enough range of parameters, the analytic price of the derivative is contained in the interval of admissible prices given by the maximization and minimization linear programs of Eq.~(\ref{eqRadonLP}).
The regularization is successful in recovering the BSM price for strong regularization parameter. In that sense, this linear programming framework contains the BSM model and allows to model effects beyond the BSM-model.

Instead of using the regularization, there could be other approaches to narrow the range between admissible prices. An approach founded in the theory is to complete the market with assets for which we know the price analytically/from the market.
When there are more benchmark assets, we expect the range of prices to narrow. In the limit of a complete market, every derivative has a unique price.
We leave this approach for future work.

\paragraph{Towards a quantum implementation.}
The requirement for an efficient quantum algorithm is to have query access to oracles (as in Oracle \ref{oracle:query_oracle}) for the vectors.
For the payoff matrix $S$ we can build efficient quantum circuits, without requiring large input data. To create the mapping $\ket{i,j} \mapsto \ket{i,j,S_{ij}}$, we have to have classical access to input quantities such as drift and volatility, quantum access to the prices, and apply the arithmetic operations in Eq.~(\ref{eqBSMdiscrete}). Following the same reasoning, since the vector $\bm D$ is efficiently computable from the matrix $S$, we can build efficient query access to its entries.  We remark that such efficient computation can hold also for more complicated models (i.e., models that are more complicated than log-normal such as Poisson jump processes or Levy heavy-tailed distributions). 

\section{Quantum matrix inversion for risk-neutral pricing}\label{appendix:potentialQLA}
Taking insight from the fact that the sought-after measure $\bm q$ is a vector satisfying the linear system  $\bm \Pi = S \bm q$, we consider in this section the quantum linear systems algorithm (\cite{harrow2009quantum}, and successive improvements \cite{gilyen2019quantum,chakraborty2018power}) for solving the pricing problem. Given the right assumptions about the data input and output, one could hope to obtain an exponential speedup for the pricing problem. However, the quantum linear systems algorithm cannot directly be used to find positive solutions for an underdetermined linear equations system \cite{Rebentrost2014}. The constraint $\bm q \geq 0$ involves $K$ constraints of the form $ q_\omega \geq 0$, which all have to be enforced. We may be able to apply the quantum linear systems algorithm if we demand a stronger no-arbitrage condition, as will be discussed in the present section. With quantum linear systems algorithms we can prepare a quantum state that is proportional to $S^+ {\bm \Pi}$, where $(\bm \Pi, S)$ is a price system and $S^+$ is the pseudo-inverse of the matrix $S$. A well-known fact (which we recall in Theorem~\ref{thm:ell2minimization} in the Appendix) is that the pseudo-inverse of $S$ finds the minimum $\ell_2$ norm solution to a linear system of equations. Unfortunately, this solution can lie outside the positive cone, and thus outside ${\rm int} (\Delta^K)$, so we need a stronger no-arbitrage condition to guarantee finding a valid probability measure. 
We have to assume that one of the valid probability measures in $\mathcal Q$ - whose existence follows from the no-arbitrage assumption of Definition \ref{def:arbitrage} - is indeed also the minimum $\ell_2$-norm solution. A market satisfying this condition we call a least-square market. 

\begin{defn} [Least-squares market]\label{def:assumeL2} 
An arbitrage-free market model is called a least-squares market if $S^+\bm \Pi \in \mathcal{Q}$.
\end{defn}
Under this assumption, the pseudo-inverse is guaranteed to find a positive solution to $S \bm q= \bm \Pi$. 
This assumption is obviously stronger than the no-arbitrage assumption (Assumption~\ref{def:arbitrage}), which only implies that $|\mathcal{Q}|\neq 0$. 
However, it is not as strong as assuming market completeness, see Definition \ref{assumeComplete}, which means that every possible payoff is replicable with a portfolio of the assets. We now state an intermediate result for the rank of $S$. 
\begin{lemma}\label{lem:completeness is fcr}
Let $(\bm \Pi, S)$ be a price system, as in Definition \ref{def:payoff-matrix}. The market model is complete if and only if $N+1=K$ and the matrix $S \in \mathbb{R}^{(N+1) \times K}$ has full rank. 
\end{lemma}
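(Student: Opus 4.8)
The plan is to prove both directions using Theorem~\ref{thm:sftap} (the second fundamental theorem) together with a dimension count on the affine solution set of $S\bm q = \bm\Pi$. Recall from Definition~\ref{def:payoff-matrix} that $S$ always has rank $N+1$ (the non-redundancy assumption makes the rows independent), so $S$ is an $(N+1)\times K$ matrix of full row rank, which forces $N+1\le K$.

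For the forward direction, assume the market is complete. By Theorem~\ref{thm:sftap} this is equivalent to $|\mathcal{Q}|=1$, so there is a \emph{unique} $\bm q$ with $S\bm q=\bm\Pi$ lying in $\mathrm{int}(\Delta^K)$. The key observation is that the full solution set $\{\bm q\in\mathbb{R}^K : S\bm q=\bm\Pi\}$ is an affine subspace of dimension $K-\mathrm{rank}(S)=K-(N+1)$. If $K>N+1$, this affine subspace has dimension $\ge 1$, so it contains a whole line through the known interior point; since $\mathrm{int}(\Delta^K)$ is open, a sufficiently short segment of that line still lies in $\mathrm{int}(\Delta^K)$, giving infinitely many martingale measures and contradicting $|\mathcal{Q}|=1$. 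Hence $K=N+1$, and then $\mathrm{rank}(S)=N+1=K$ means $S$ is square with full rank. (Alternatively one can invoke Definition~\ref{assumeComplete} directly: completeness says $\bm D=S^T\bm\xi$ is solvable for every $\bm D\in\mathbb{R}_+^K$, hence for a basis of $\mathbb{R}^K$, so $S^T$ has rank $K$, forcing $K\le N+1$ and combined with $\mathrm{rank}(S)=N+1$ giving $K=N+1$.)

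For the converse, assume $N+1=K$ and $S$ has full rank, i.e.\ $S$ is an invertible $K\times K$ matrix. Then $S\bm q=\bm\Pi$ has the unique solution $\bm q=S^{-1}\bm\Pi$, so $|\overline{\mathcal{Q}}|\le 1$; since the market is arbitrage-free, $\mathcal{Q}\neq\emptyset$ by the first fundamental theorem (the argument around Eq.~(\ref{eqNAImplies})), hence $|\mathcal{Q}|=1$, and Theorem~\ref{thm:sftap} gives completeness. Equivalently, invertibility of $S^T$ means $\bm D=S^T\bm\xi$ is solvable for every $\bm D$, which is the definition of completeness.

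I do not expect a serious obstacle here; the only point requiring a little care is making the ``perturb within the open simplex'' argument rigorous in the forward direction (one must use that the known solution is in the \emph{interior}, not merely in $\Delta^K$, so that a small displacement along any direction in $\ker S$ stays feasible and has strictly positive entries). Everything else is a direct application of the already-stated fundamental theorems and the standard rank--nullity dimension count.
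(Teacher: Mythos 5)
Your proof is correct, but your primary argument takes a genuinely different route from the paper's. The paper argues directly from the replication definition of completeness (Definition~\ref{assumeComplete}): since every claim $\bm D\in\mathbb{R}_+^K$ must be of the form $S^T\bm\xi$, the column space of $S^T$ is all of $\mathbb{R}^K$, so $\mathrm{rank}(S)=K$, which together with non-redundancy ($\mathrm{rank}(S)=N+1$) forces $N+1=K$; conversely, invertibility of $S$ immediately yields the replicating portfolio $\bm\xi=(S^T)^{-1}\bm D$ for every $\bm D$. You instead route both directions through the second fundamental theorem (Theorem~\ref{thm:sftap}): uniqueness of the martingale measure plus a rank--nullity count on $\{\bm q: S\bm q=\bm\Pi\}$ and a perturbation inside $\mathrm{int}(\Delta^K)$ in the forward direction, and the first fundamental theorem plus invertibility in the converse. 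Both arguments work; note that in your forward direction the perturbed points automatically satisfy the normalization $\sum_\omega q_\omega=1$ because the all-ones first row of $S$ is part of the constraint $S\bm q=\bm\Pi$, so only strict positivity needs the interior assumption, as you flagged. Your route makes explicit the link between market incompleteness and a nontrivial kernel of $S$ (hence a continuum of measures), but it is heavier: it invokes both fundamental theorems and, in the converse, needs the no-arbitrage assumption to guarantee $\mathcal{Q}\neq\emptyset$, whereas the paper's proof is pure linear algebra on the replication condition and never touches $\mathcal{Q}$. The parenthetical alternatives you mention in each direction are essentially the paper's proof.
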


\begin{proof}
If the market is complete, any contingent claim is attainable. For a contingent claim $\bm D \in \mathbbm R_+^K$ let $\bm \xi \in \mathbbm R^{N+1}$ be a replicating portfolio so that $\bm D=S^T\bm{\xi}$. Since $\bm D$ can be any vector in $\mathbb{R}_+^K$, 
    the column space of $S^T$ must have dimension equal to $K$. Since we assume non-redundant assets, it must hold that $N+1 = K$ and $S$ has full column rank. 
Now assume that $S$ has full rank and that the number of assets is equal to the number of events (i.e. $N+1=K$). Having full rank implies that the matrix $S$ is invertible, and thus $(S^T)^{-1} \bm{D} = \bm \xi$ is the unique replicating portfolio.
\end{proof}

From the second fundamental theorem of asset pricing (Theorem~\ref{thm:sftap}), there exists a unique martingale measure if and only if every financial derivative is redundant, i.e., replicable via a unique portfolio of traded assets. If the market is not complete,
a derivative is in general not replicated. The next theorem states that market completeness is a stronger assumption than the assumption of a least-square market.  
\begin{theorem} Let $(\bm \Pi, S)$ be a price system, as in Definition \ref{def:payoff-matrix}. If the market model $S \in \mathbb{R}^{(N+1) \times K}$ is complete, it is also a least-squares market.
\end{theorem}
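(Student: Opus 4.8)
The plan is to show that a complete market satisfies the defining condition $S^+\bm\Pi\in\mathcal Q$ of a least-squares market, i.e.\ that the minimum-$\ell_2$-norm solution of $S\bm q=\bm\Pi$ is in fact the unique martingale measure. First I would invoke Lemma~\ref{lem:completeness is fcr}: completeness forces $N+1=K$ and $S$ of full rank, hence $S$ is a square invertible matrix. In that case the pseudo-inverse coincides with the ordinary inverse, $S^+=S^{-1}$, so the linear system $S\bm q=\bm\Pi$ has the unique solution $\bm q^\ast=S^{-1}\bm\Pi=S^+\bm\Pi$; there is no distinction between ``least-$\ell_2$-norm solution'' and ``the solution'' because the solution set is a single point.

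Next I would argue that this unique solution actually lies in $\mathcal Q$. By hypothesis the market is arbitrage-free, so by the first fundamental theorem (the discussion around Farkas' Lemma, Lemma~\ref{lemFarkas}) the set $\mathcal Q=\{\bm q\in\mathrm{int}(\Delta^K):S\bm q=\bm\Pi\}$ is nonempty. Pick any $\bm q\in\mathcal Q$; it satisfies $S\bm q=\bm\Pi$, hence by uniqueness $\bm q=S^{-1}\bm\Pi=S^+\bm\Pi$. Therefore $S^+\bm\Pi\in\mathcal Q$, which is exactly Definition~\ref{def:assumeL2}. (Equivalently, one may cite Theorem~\ref{thm:sftap}: completeness gives $|\mathcal Q|=1$, and that single element must be $S^+\bm\Pi$ since it is the only vector solving $S\bm q=\bm\Pi$.)

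I would present the argument in three short steps: (i) complete $\Rightarrow$ $S$ square and invertible, so $S^+=S^{-1}$ (Lemma~\ref{lem:completeness is fcr}); (ii) the equation $S\bm q=\bm\Pi$ then has the unique solution $S^+\bm\Pi$; (iii) no-arbitrage $\Rightarrow$ $\mathcal Q\neq\emptyset$, and any element of $\mathcal Q$ solves that equation, so it equals $S^+\bm\Pi$, giving $S^+\bm\Pi\in\mathcal Q$. There is essentially no computational obstacle here; the only subtlety worth stating explicitly is the identity $S^+=S^{-1}$ for square nonsingular $S$, and the observation that the positivity/interior requirement in the definition of $\mathcal Q$ is automatically inherited because $\mathcal Q$ is already known to be nonempty and the solution is unique. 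The ``hard part,'' such as it is, is simply making sure the reader sees that completeness is genuinely \emph{strictly} stronger — which is why the surrounding text remarks that the converse fails, and no further work is needed for this theorem beyond the three steps above.

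\begin{proof}
By Lemma~\ref{lem:completeness is fcr}, completeness implies $N+1=K$ and that $S\in\mathbb R^{(N+1)\times K}$ has full rank; hence $S$ is a square invertible matrix and $S^+=S^{-1}$. Consequently the linear system $S\bm q=\bm\Pi$ has the unique solution $\bm q^\ast:=S^{-1}\bm\Pi=S^+\bm\Pi$. Since the market is arbitrage-free, the first fundamental theorem of asset pricing (the discussion following Lemma~\ref{lemFarkas}) gives $\mathcal Q\neq\emptyset$; equivalently, by Theorem~\ref{thm:sftap}, $|\mathcal Q|=1$. Let $\bm q\in\mathcal Q$. Then $\bm q\in\mathrm{int}(\Delta^K)$ and $S\bm q=\bm\Pi$, so by uniqueness $\bm q=\bm q^\ast=S^+\bm\Pi$. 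Therefore $S^+\bm\Pi\in\mathcal Q$, which is precisely the defining condition of a least-squares market in Definition~\ref{def:assumeL2}.
\end{proof}
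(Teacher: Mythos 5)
Your proof is correct and follows essentially the same route as the paper's: invoke Lemma~\ref{lem:completeness is fcr} to get $S$ square and invertible (so $S^+\bm\Pi$ is the unique solution of $S\bm q=\bm\Pi$), then use Theorem~\ref{thm:sftap} (or no-arbitrage) to get a measure in $\mathcal Q$, which by uniqueness must equal $S^+\bm\Pi$. Making the identity $S^+=S^{-1}$ explicit is a nice touch, but there is no substantive difference from the paper's argument.
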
\label{thm:spseudoinvisq}

\begin{proof}
Market completeness, via Theorem \ref{thm:sftap}, implies that $|\mathcal{Q}|=1$, so there exists a unique measure $\bf q^\ast \in \mathcal{Q}$ such that $\bm q^\ast \in {\rm int}(\Delta^K)$ and $S \bm q^\ast = \bm \Pi$.
Market completeness, via Lemma~\ref{lem:completeness is fcr}, implies that the matrix $S$ is square and has full rank.  Define the pseudo-inverse solution as
$\bm q^+:= S^{+} \bm \Pi$.
Since $N+1=K$, and the assets are non-redundant, $\bm q^+$ satisfies $S \bm q^+= \bm \Pi$ and is the unique solution to the equation system. Thus, $\bm q^+$ must also have the property of being in ${\rm int}(\Delta^K)$ and $\bm q^+ = \bm q^\ast$. 
\end{proof}
Market completeness is usually a rather strong hypothesis, while least-square market includes a broader set of markets. 
Both the market completeness and the weaker least-squares market conditions enable the use of quantum matrix inversion for finding a martingale measure.  Even weaker conditions may also suffice for the use of quantum matrix inversion. For instance, markets where some $\bm q \in \mathcal Q$ is $\epsilon$-close to $S^+\bm \Pi$ may be also be good candidates. 
Another case could arise in the situation where there is a vector $\bm v$ in the null space of $S$ such that $(\bm q^+ + \bm v) \in  \mathcal{Q}$.
We leave the exploration of weaker assumptions for future work. 

In this part of the work, we assume block-encoding access to $S$ \cite{gilyen2019quantum}. 
\begin{defn}[Block-encoding]\label{def:block-encoding}
Let $A\in \mathbb{C}^{2^s \times 2^s}$. We say that a unitary $U \in \mathbb{C}^{(s+a)\times(s+a)}$ is a ($\alpha, a, \epsilon)$ block encoding of $A$ if
$$\norm{A - \alpha (\bra{0}^a \otimes I )U( \ket{0}^a \otimes I)  } \leq \epsilon.$$
\end{defn}
Often the shorthand $(\alpha, \epsilon)$-block encoding is used which does not explicitly mention the number of ancillary qubits. An $(\|A\|_F, 0)$-block encoding can be obtained with a variety of access models \cite{kerenidis2020quantum, chakraborty2018power}.
Note that the matrix $S$ is in general not sparse, as most events will not lead to zero stock prices.
To satisfy Definition \ref{def:block-encoding} and the hypothesis of Theorem~\ref{cor:pseudoinv-vtaa}, we use the standard trick of obtaining a symmetric version of $S$ by creating an \textit{extended} matrix $\in \mathbb{R}^{(K+N+1)\times (K+N+1)}$ with $S$ in the upper-right corner and $S^T$ in the bottom-left corner, and $0$ everywhere else. In the following, we assume w.l.o.g that $K$ and $N+1$ are power of $2$ (if they are not, we can always pad the matrix and the vectors with $0$s) and we do not introduce a new notation for the padded quantities. 

Under the assumption that the matrix $S$ can be efficiently accessed in a quantum computer, we can prepare a quantum state for the martingale measure in time $\Ord{\|S\|_F\kappa(S) \log\left(\frac{N+K}{\epsilon}\right)}$ to accuracy $\epsilon$ \cite{harrow2009quantum,Childs2017linear,chakraborty2018power,gilyen2019quantum}. Using the quantum state for the probability measure, one can then price the derivative by computing an inner product with another quantum state representing the derivative. This will result in a run time that will depend linearly in the precision required in the estimate of the value of the derivative and further scaling factors given by the normalization of the vector representing the derivative and the martingale measure $\bm q$ as a quantum state.   

\begin{theorem}[Quantum pseudo-inverse pricing in incomplete least-squares markets]\label{thm:pricingqla}
Let $(\bm \Pi, S)$ be the price system of a least-squares market with the extended payoff matrix $S\in \mathbb{R}^{(K+N+1)\times (K+N+1)}$ satisfying $\|{S}\|_2 \leq 1$ and $\kappa(S) \geq 2$. Let $U_{S}$ be a $(\|S\|_F, 0)$-block-encoding of matrix $S$, and assume to have quantum access as Definition~\ref{def:sampling_oracle} to the extended price vector $\bm \Pi\in \mathbb{R}^{(K+N+1)}$ and the extended derivative vector $\bm D \in \mathbb{R}^{(K+N+1)}$.
Let $\gamma \in [0,1]$ and  $\sqrt{\gamma} \leq \|P_{\rm{col}}(S)\ket{\bm \Pi}\|_2$, where $P_{\rm{col}}(S)$ is the projector into the column space of $S$.
For $\epsilon > 0$ and $\bm q^+ := S^+ \bm \Pi$, there is a quantum algorithm that estimates $\Pi_{D,\bm q^+} := \bm D^T \bm q^+$ with relative error $\epsilon$ and high probability using $\tOrd{\frac{\kappa(S)\|S\|_F}{\epsilon \sqrt{\gamma}} \frac{\| \bm D\|_2}{\Pi_{D, \bm q^+} } }$
queries to the oracles.
\end{theorem}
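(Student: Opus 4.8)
The plan is to apply a variable-time quantum linear-systems solver to the block-encoding $U_S$ in order to prepare the state $\ket{\bm q^+}=\ket{S^+\bm\Pi}$, then to read off $\Pi_{D,\bm q^+}=\bm D^T\bm q^+$ by quantum inner-product estimation against $\ket{\bm D}$, and to control the \emph{relative} error by exploiting that $\bm q^+$ is a probability vector. Concretely, I would first prepare the unit-norm amplitude encodings $\ket{\bm\Pi}$ and $\ket{\bm D}$ of the (extended, zero-padded) price and derivative vectors using the access of Definition~\ref{def:sampling_oracle}, which also supplies the norms $\norm{\bm\Pi}_2$ and $\norm{\bm D}_2$. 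From the $(\norm{S}_F,0)$-block-encoding $U_S$, standard QSVT-based linear-systems machinery \cite{harrow2009quantum,chakraborty2018power,gilyen2019quantum} yields a block-encoding $W$ of $S^+/\beta$ with subnormalization $\beta=\Theta(\kappa(S))$, at a cost of $\tOrd{\kappa(S)\norm{S}_F}$ queries to $U_S$ per use. Applying $W$ to $\ket{\bm\Pi}$, the amplitude of the flagged ``good'' branch is $\norm{S^+\ket{\bm\Pi}}_2/\beta$, and since $\norm{S^+\ket{\bm\Pi}}_2\ge\norm{P_{\rm col}(S)\ket{\bm\Pi}}_2/\norm{S}_2$ and $\sqrt\gamma\le\norm{P_{\rm col}(S)\ket{\bm\Pi}}_2$ by hypothesis, this amplitude is $\gtrsim\sqrt\gamma/\kappa(S)$. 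I would then invoke the variable-time amplitude-amplification form of the pseudo-inverse solver (Theorem~\ref{cor:pseudoinv-vtaa}) to obtain a state $\epsilon_0$-close to $\ket{\bm q^+}$, together with a relative-$\epsilon$ estimate $\widetilde n$ of $n:=\norm{\bm q^+}_2$, all at cost $\tOrd{\kappa(S)\norm{S}_F/\sqrt\gamma}$ — only a single power of $\kappa(S)$, which is exactly what the variable-time technique buys over naive amplification.

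Next, feeding the state-preparation unitaries for $\ket{\bm q^+}$ and $\ket{\bm D}$ into a Hadamard/swap-test circuit with amplitude estimation (see e.g.\ \cite{kerenidis2020quantum}), I would estimate $s:=\braket{\bm D|\bm q^+}$ to additive error $\eta$ using $\Ord{1/\eta}$ calls to each preparation; as one call to the $\ket{\bm q^+}$-preparation costs $\tOrd{\kappa(S)\norm{S}_F/\sqrt\gamma}$, this step costs $\tOrd{\kappa(S)\norm{S}_F/(\eta\sqrt\gamma)}$. The algorithm then outputs $\widetilde\Pi:=\widetilde n\,\norm{\bm D}_2\,\widetilde s$ as its estimate of $\Pi_{D,\bm q^+}=n\,\norm{\bm D}_2\,s$. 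For the error propagation I would use the least-squares hypothesis (Definition~\ref{def:assumeL2}), which gives $\bm q^+\in{\rm int}(\Delta^K)$, hence $\norm{\bm q^+}_1=1$ and therefore $n=\norm{\bm q^+}_2\le1$, and $\Pi_{D,\bm q^+}=\bm D^T\bm q^+\le n\,\norm{\bm D}_2\le\norm{\bm D}_2$ by Cauchy--Schwarz. Choosing $\eta=\Theta\!\big(\epsilon\,\Pi_{D,\bm q^+}/(n\norm{\bm D}_2)\big)$ and $\epsilon_0=\Theta(\epsilon s)$ (so the state-preparation infidelity shifts the overlap by at most half of $\eta$ and enters the complexity only logarithmically) makes $\widetilde\Pi$ a relative-$\Ord\epsilon$ estimate of $\Pi_{D,\bm q^+}$; rescaling $\epsilon$ by a constant and taking a median over $\Ord{\log(1/\delta)}$ runs gives relative error $\epsilon$ with high probability. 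Substituting $1/\eta$, the inner-product step costs
\[
\tOrd{\frac{\kappa(S)\norm{S}_F}{\sqrt\gamma}\cdot\frac{n\,\norm{\bm D}_2}{\epsilon\,\Pi_{D,\bm q^+}}}\ \le\ \tOrd{\frac{\kappa(S)\norm{S}_F}{\epsilon\sqrt\gamma}\cdot\frac{\norm{\bm D}_2}{\Pi_{D,\bm q^+}}}
\]
using $n\le1$, while the norm-estimation step costs $\tOrd{\kappa(S)\norm{S}_F/(\epsilon\sqrt\gamma)}$, which is dominated since $\Pi_{D,\bm q^+}\le\norm{\bm D}_2$; this matches the claimed bound.

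I expect the crux to be obtaining a \emph{relative} error while keeping only one factor of $\kappa(S)$: this is what forces the use of the variable-time amplitude-amplification/estimation form of the linear-systems solver — both to prepare $\ket{\bm q^+}$ and to estimate its $\ell_2$ norm — and what makes the observation $\norm{\bm q^+}_2\le1$ essential, as it simultaneously removes the $\norm{\bm q^+}_2$ factor from the stated bound and renders the norm-estimation subroutine non-dominant. A secondary, more technical point is the bookkeeping across the three error sources (the polynomial approximation of $1/x$ in QSVT, the finite-sample amplitude estimation of $s$, and the estimate of $n$): one must verify that each, set to relative precision $\Theta(\epsilon)$, enters the query count only linearly in $1/\epsilon$ or logarithmically, and that the extended symmetric embedding of $S$ indeed makes $\sqrt\gamma$ a valid lower bound on the amplified amplitude (so that applying $S^+$ to $\ket{\bm\Pi}$ does not annihilate too much of the state).
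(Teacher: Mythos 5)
Your proposal is correct and follows essentially the same route as the paper's proof: factor $\Pi_{D,\bm q^+}=\|\bm q^+\|_2\|\bm D\|_2\braket{\bm D|\bm q^+}$, prepare $\ket{\bm q^+}$ and estimate its norm with the variable-time pseudo-inverse routines (Theorems~\ref{cor:pseudoinv-vtaa} and \ref{cor:qls-vtae}), and estimate the overlap by amplitude estimation with additive precision $\Theta(\epsilon\braket{\bm D|\bm q^+})$, so that the inner-product step dominates. Your explicit use of $\|\bm q^+\|_2\le 1$ (from $\bm q^+$ lying in the simplex) to pass from $1/\braket{\bm D|\bm q^+}$ to $\|\bm D\|_2/\Pi_{D,\bm q^+}$ is a small bookkeeping point the paper leaves implicit, but it is the same argument.
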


\begin{proof}
With $\bm q^+ = S^+ \bm \Pi$, the price of the derivative can be rewritten as:
\be
\Pi_{D,\bm q^+} = \mathbbm{E}_{\bm q^+}[{D}] = \sum_{\omega\in\Omega} D_\omega q_\omega = \|\bm q^+\|_2 \|\bm D\|_2 \braket{\bm q^+|\bm D}= \|{\bm \Pi}\|_2 \left \|{S}^+\ket{{\bm \Pi}} \right \|_2\|{\bm D}\|_2 \braket{ \bm q^+|{\bm D}}.
\ee
It is simple to check (using the triangle inequality) that to estimate $\Pi_{D,\bm q^+}$ with relative error $\epsilon$ it suffices to estimate $\braket{\bm q^+|\bm D}$ and $\|{S}^+\ket{{\bm \Pi}}\|_2$  with relative error $\epsilon/2$, as we have exact knowledge of $\|{\bm \Pi}\|_2$ and $\|{\bm D}\|_2$ (due to the availability of Definition~\ref{def:sampling_oracle}). The idea is to use Theorem~\ref{cor:pseudoinv-vtaa} for the ability to prepare the state $\ket{ \bm q^+}:=  S^+\ket{{\bm \Pi}}/\| S^+\ket{{\bm \Pi}}\|_2$, estimate the normalizing factor, and then perform a swap test between $\ket{ S^+{\bm \Pi}}$ and $\ket{{\bm D}}$. Using $U_S$, and Theorem~\ref{cor:pseudoinv-vtaa} we can prepare a state $\ket{\widetilde{\bm q}^+}$ close to $\ket{\bm q^+}$, with a run time poly-logarithmic factor in the precision. 
Let $T_{{S}}$, $T_{{\bm \Pi}}$, and $T_{{\bm D}}$ the cost in terms of elementary gates for having quantum access to ${S}$, ${\bm \Pi}$, and ${\bm D}$. Producing $\ket{\bm q^+}$  with Theorem~\ref{cor:pseudoinv-vtaa} has a cost in terms of elementary gates $T_{\bm q^+}$ of 
\be
T_{\bm q^+} \in \tOrd{\frac{\kappa(S)\Vert S\Vert_F}{\sqrt{\gamma}} \left[ T_{{S}} + \lceil \log(N+K)\rceil  + T_{{\bm \Pi}}\right]}.
\ee
Estimating $\|{S}^+\ket{{\bm \Pi}}\|_2$ with relative error $\epsilon/2$ has a cost of $\tOrd{ \frac{T_{\bm q^+}}{\epsilon}}$ according to
Theorem~\ref{cor:qls-vtae}. 
We now estimate with relative error $\epsilon'$ the value $\braket{\bm q^+ \big | {\bm D}}$  with a standard trick which requires $O\left(   \frac{1}{\epsilon'}\right)$ calls to the unitaries generating $\ket{ \bm q^+}$ and $\ket{{\bm D}}$. 
Prepare the state 
\be
\ket{\psi}=\frac{1}{2}\left(\ket{0}\left (\ket{{\bm D}} + \ket{\bm q^+}\right ) +  \ket{1}\left (\ket{{\bm D}} - \ket{\bm q^+} \right) \right),
\ee
and then use amplitude estimation (Theorem~\ref{thm:ampest}) to estimate the probability of measuring $0$ in the first qubit, which is $p(0)=\frac{2+2\braket{{\bm D}| \bm q^+}}{4}$. 
To obtain a relative error of $\epsilon/2$ on $\braket{{\bm D}| \bm q^+}$ we set $\epsilon'=\epsilon \braket{{\bm D}| \bm q^+}/4$. 
Thus, the number of queries to the unitaries preparing the two states is $\Ord{1/\left(\epsilon \braket{\bm q^+ | {\bm D}}\right)}$.
The total run time is the sum of the two estimations but the inner product estimation has an additional factor of  
$\braket{{\bm D}| \bm q^+}$ at the denominator, and therefore dominates the run time. 
Repeating the algorithm for a logarithmic number of times achieves a high success probability via the union bound. 
The final run time is $\tOrd{\frac{ \kappa(S)\| S\|_F}{\epsilon\braket{{\bm D}| \bm q^+ }\sqrt{\gamma}}}$. 
\end{proof}

Interestingly, when the market is complete, the factor $\sqrt{\gamma}$ is just $1$, so we can think of this parameter as a ``measure'' of market incompleteness, which is reflected in the run time of our algorithm (and can be estimated with amplitude estimation).
In addition, if we want to recover an approximation of the risk-neutral measure $\bm q^+$. Obtaining an estimate of $\bm q^+$ allows to price derivatives classically. 
\begin{theorem}[Estimation of the martingale measure]\label{thm:estimatingqqla}
Consider the same settings as Theorem~\ref{thm:pricingqla}. For $\epsilon >0$, there is a quantum algorithm that returns a classical description $\widetilde{\bm q}^+_{\rm est}$ of $\bm q^+$, such that a derivative can be priced as follows.
Given a derivative $\bm D$ 
we can estimate $\Pi_{D, \bm q^+}$ with relative error $\epsilon$ using 
$\tOrd{K\frac{ \kappa(S)\|{S}\|_F }{\sqrt{\gamma}\epsilon^2} \frac{\| \bm D\|_2^2}{\Pi_{D, \bm q^+}^2 } }$ queries to the oracles.
\end{theorem}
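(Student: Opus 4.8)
The plan is to build directly on the proof of Theorem~\ref{thm:pricingqla}. There we already construct, using $U_S$ and Theorem~\ref{cor:pseudoinv-vtaa}, a unitary that prepares a state arbitrarily close to $\ket{\bm q^+} := S^+\ket{\bm \Pi}/\|S^+\ket{\bm \Pi}\|_2$ at cost $T_{\bm q^+} \in \tOrd{\kappa(S)\|S\|_F/\sqrt{\gamma}}$ queries to the oracles (absorbing the $\Ord{1}$ oracle costs and the $\log(N+K)$ factor), together with the ability to estimate the norm $\|S^+\ket{\bm \Pi}\|_2$ to any relative error via Theorem~\ref{cor:qls-vtae}. Since $\bm q^+ = \|\bm \Pi\|_2\,\|S^+\ket{\bm \Pi}\|_2\,\ket{\bm q^+}$ and $\|\bm \Pi\|_2$ is known exactly (Definition~\ref{def:sampling_oracle}), a classical description of $\bm q^+$ is obtained by reconstructing the \emph{direction} $\ket{\bm q^+}$ classically and then rescaling.

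First I would run a standard $\ell_2$ pure-state tomography routine~\cite{kerenidis2020quantum} on this state-preparation unitary: it outputs a classical unit vector $\widetilde{\bm u} \in \mathbb{R}^K$ with $\|\widetilde{\bm u} - \ket{\bm q^+}\|_2 \leq \delta$ with high probability, using $\tOrd{K/\delta^2}$ applications of the unitary and its inverse (the polylogarithmic preparation error of the unitary itself is harmless and absorbed into $\delta$). Next I would estimate $\|S^+\ket{\bm \Pi}\|_2$ to relative error $\delta$ --- cost $\tOrd{T_{\bm q^+}/\delta}$ --- and define $\widetilde{\bm q}^+_{\rm est}$ as $\widetilde{\bm u}$ rescaled by $\|\bm \Pi\|_2$ times that norm estimate. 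Propagating the tomography and norm errors gives $\|\widetilde{\bm q}^+_{\rm est} - \bm q^+\|_2 \leq \Ord{\delta}\,\|\bm q^+\|_2$; the key simplification is that $\bm q^+ \in \mathcal{Q} \subseteq \Delta^K$, hence $\|\bm q^+\|_2 \leq \|\bm q^+\|_1 = 1$, so in fact $\|\widetilde{\bm q}^+_{\rm est} - \bm q^+\|_2 \leq \Ord{\delta}$.

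Given this classical vector, any derivative $\bm D$ is priced purely classically as $(\widetilde{\bm q}^+_{\rm est})^T \bm D$, and by Cauchy--Schwarz $|(\widetilde{\bm q}^+_{\rm est})^T \bm D - \Pi_{D,\bm q^+}| \leq \|\widetilde{\bm q}^+_{\rm est} - \bm q^+\|_2\,\|\bm D\|_2 \leq \Ord{\delta}\,\|\bm D\|_2$. To turn this into a relative-$\epsilon$ error in the price it suffices to take $\delta = \Theta\!\left(\epsilon\,\Pi_{D,\bm q^+}/\|\bm D\|_2\right)$. Substituting into $\tOrd{K/\delta^2}$ and multiplying by $T_{\bm q^+} \in \tOrd{\kappa(S)\|S\|_F/\sqrt{\gamma}}$ yields the advertised query complexity $\tOrd{K\,\frac{\kappa(S)\|S\|_F}{\sqrt{\gamma}\,\epsilon^2}\,\frac{\|\bm D\|_2^2}{\Pi_{D,\bm q^+}^2}}$; the norm-estimation cost $\tOrd{T_{\bm q^+}\,\|\bm D\|_2/(\epsilon\,\Pi_{D,\bm q^+})}$ is dominated by this term, and a union bound over the $\Ord{\log(1/\delta)}$ repetitions needed to amplify success covers the failure probabilities.

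The main obstacle is purely the error bookkeeping: tomography controls only the direction of $\bm q^+$, so one must carefully fold in the multiplicative norm error and then invoke $\|\bm q^+\|_1 = 1$ to convert the relative norm guarantee into an absolute $\ell_2$ guarantee without incurring an extra factor of $\|\bm q^+\|_2$, which is exactly what would otherwise spoil the match with the stated bound. A minor remark is that $\widetilde{\bm q}^+_{\rm est}$ need not itself lie in $\Delta^K$ (it may have small negative entries or fail to sum to one); this is harmless for the pricing inner product, but could be cleaned up by an $\Ord{K}$-time Euclidean projection onto the simplex at no asymptotic cost.
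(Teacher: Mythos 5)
Your proposal is correct and follows essentially the same route as the paper's proof: tomography on the pseudo-inverse state for the direction, amplitude/norm estimation for the magnitude, rescaling by the known $\|\bm \Pi\|_2$, and Cauchy--Schwarz with precision $\delta = \Theta\left(\epsilon\,\Pi_{D,\bm q^+}/\|\bm D\|_2\right)$ to get the stated $\tOrd{K\,\kappa(S)\|S\|_F\,\|\bm D\|_2^2/(\sqrt{\gamma}\,\epsilon^2\,\Pi_{D,\bm q^+}^2)}$ cost, with the norm-estimation term dominated. Your explicit use of $\|\bm q^+\|_2 \leq \|\bm q^+\|_1 = 1$ is the same bookkeeping the paper performs implicitly when converting its $1/\braket{\bm D|\bm q^+}$ factors into the stated bound, so there is no substantive difference.
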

\begin{proof}
Let $\ket{\widetilde{\bm q}^+_{\rm est}}$ be the $\ell_2$-norm unit vector obtained through tomography (Theorem ~\ref{thm:tomography} in the appendix) on the quantum state $\ket{\widetilde{\bm q}^+}$ with $\ell_2$ error $\left \Vert \ket{\widetilde{\bm q}^+_{\rm est}}- \ket{\widetilde{\bm q}^+}\right\Vert_2 \leq \epsilon_2$.
As in Theorem~\ref{thm:pricingqla}, we can take that $\ket{\widetilde{\bm q}^+} \approx \ket{\bm q^+}$ with negligible error. 
Let $\widetilde{\norm{\bm q^+}}$ our estimate of $\norm{\bm q^+}$ with error $\epsilon_1$. Let  $\widetilde{\bm q}^+_{\rm est}:=\widetilde{\norm{\bm q^+}}\ket{\widetilde{\bm q}^+_{\rm est}}$ be the estimate of the vector $\bm q^+$. 
For pricing a derivative, rewrite $ |\bm D^T \bm q^+ - \bm D^T \widetilde{\bm q}^+_{\rm est}| \leq \bm D^T \bm q^+ \epsilon = \| \bm D\|_2 \|\bm q^+\|_2 \braket{\bm D | \bm q^+}\epsilon$.
Using Cauchy-Schwarz, we can also bound $|\bm D^T \bm q^+ - \bm D^T \widetilde{\bm q}^+_{\rm est}| = |\bm D^T(\bm q^+ - \widetilde{\bm q}^+_{\rm est}) | \leq \|\bm D\|_2 \|\bm q^+ - \widetilde{\bm q}^+_{\rm est}\|_2$. Thus, we want to find $\epsilon_1$ and $\epsilon_2$ such that $\|\bm q^+ - \widetilde{\bm q}^+_{\rm est}\|_2 \leq \|\bm q^+\|_2 \braket{\bm D | \bm q^+}\epsilon$. From triangle inequality it is simple to check that $\norm{ \bm q^+ - \widetilde{\bm q}^+_{\rm est} }_2  \leq \norm{ \bm q^+}_2 \epsilon_1 + \widetilde{\norm{\bm q^+}}\epsilon_2 = \norm{ \bm q^+}_2 \epsilon_1 + {\norm{\bm q^+}}\epsilon_2 + \epsilon_1\epsilon_2$. 
With $\epsilon_1=\epsilon_2 \leq \frac{\braket{\bm D | \bm q}\epsilon}{2}$, we obtain that $ \| \bm q^+ - \widetilde{\bm q}^+_{\rm est} \|_2  \leq  \|\bm q^+\|_2 \braket{\bm D | \bm q^+}\epsilon$. The run time for obtaining $\ket{\widetilde{\bm q}^+_{\rm est}}$ is  
$\tOrd{K\frac{ \kappa(S) \Vert S\Vert_F}{\sqrt{\gamma}(\braket{\bm D | \bm q^+}\epsilon)^2} }$ and the run time for amplitude estimation is $\tOrd{ \frac{ \kappa(S)\Vert S\Vert_F }{\braket{\bm D | \bm q^+}\epsilon} }$.
\end{proof}
In conclusion, assuming a least-square market (Definition~\ref{def:assumeL2}) - which is weaker condition than the assumption of market completeness - there may exist special situations when there is the potential for an exponential speedup using the quantum linear systems algorithm. 

\section{Discussion and conclusions}
The study of quantum algorithms in finance is motivated by the large amount of computational resources deployed for solving problems in this domain.
This work explored the use of quantum computers for martingale asset pricing in one-period markets, with specific focus on incomplete markets. Contrary to other works in quantum finance, our algorithms are not directly based on quantum speedups of Monte Carlo subroutines~\cite{Montanaro2015}, albeit they use similar subroutines (e.g., amplitude amplification and estimation). One algorithm discussed here is based on the quantum zero-sum games algorithm (Theorem~\ref{thm:pricingzsg}), one is based on the pseudoinverse algorithm (Theorem~\ref{thm:pricingqla}), and one is based on the quantum simplex method (Theorem~\ref{thm:pricingsimplex}). They all output a price with relative error. The martingale measure obtained from the zero-sum game and the simplex algorithm returns a value that maximizes (or minimizes) the price over the convex set of martingale measures. Moreover, the quantum zero-sum game algorithm returns an $\epsilon$-feasible measure, which is to be interpreted as a solution where every constraint is satisfied up to a tolerance $\epsilon$. Theorem~\ref{thm:estimatingqqla} and Theorem~\ref{thm:pricingsimplex-pricingandextracting} allow to extract the martingale measure using the quantum simplex method and quantum pseudoinverse algorithm. We studied the conditions under which the quantum zero-sum game is better than the classical analogue and under which the simplex method is expected to beat the the zero-sum game method.  

In the quantum algorithms based on matrix inversion and the simplex method, there is a factor $1/ \braket{\bm D | \bm q}$ (for $\braket{\bm D | \bm q} \in (0, 1]$) in the run times. In many cases, this factor could be bounded by economic reasoning.  
For instance, put/call options that only with very small probability, say $\sim 1/K$, give a non-zero payoff, are usually not traded or modeled. Most common derivatives have a large enough number of events when they have some pay-out. Hence, the run time cost introduced by the factor $1/\braket{\bm D | \bm q}$ is not expected to grow polynomially with $N$ and $K$. We can draw similar conclusions for the factor $\rho = \frac{D_{\rm max}}{\Pi_D}$ for some derivatives. For example, a put option with a strike price of $150\$$ of a stock which trades at a market value of $100\$$, described by a BSM with $\mu=1$ and $\sigma=2$ has a price of $62.5\$$ dollars.  The value of $D_{\rm max}$ (using the discretizations used in the experiments of this work) is $148 \$$, making the value of $\rho \approx 2$. 

We remark again that all algorithms will not return an equivalent martingale measure per Definition~\ref{def:equivalencebetweenprobabilitymeasures}. The output of the zero-sum games algorithm will be $\tOrd{\frac{1}{\epsilon^2}}$ sparse, which follows directly from the total number of iterations. However, this algorithm is still good enough to return an approximation of the derivative price with the desired precision. The quantum algorithms based on the simplex and the pseudo-inverse algorithm are able to create quantum states that are $\epsilon$-close to an equivalent martingale measure.

In this work, we propose a new class of  market models called least-square markets (Definition~\ref{def:assumeL2}). This market condition, which is weaker than market completeness, is particularly interesting for quantum algorithms, as it may allow for large speedups in martingale asset pricing using the quantum linear systems algorithm. 
It remains to be seen if the market condition can be weakened further for a broader application of the quantum linear systems algorithm.
    
Along with the future research directions already discussed in the manuscript, we propose to study quantum algorithms for risk-neutral asset pricing in a parametric setting, i.e., when it is assumed that the martingale probability measure comes from a parameterized family of probability distributions. It is also interesting to estimate the quantum resources for the algorithms proposed in this manuscript in real-world settings (i.e., in a realistic market model for which the parameters $\bm \Pi$, $\bm \mu$, and $\bm \sigma$ have been estimated from historical data, realistic assumptions on a given hardware architectures like error correction codes, noise models, and connectivity, and so on).  We also leave as future work the study of other algorithms based on other quantum subroutines for solving linear programs, like interior-point methods \cite{kerenidis2018interiorpoint, casares2020quantum}.

\section{Acknowledgements}
We thank Nicolò Cangiotti for pointing us to useful citations in history of probability theory. 
Research at CQT is funded by the National Research Foundation, the Prime Minister’s Office, and the Ministry of Education, Singapore under the Research Centres of Excellence programme’s research grant R-710-000-012-135. We also acknowledge funding from the Quantum Engineering Programme (QEP 2.0) under grant NRF2021-QEP2-02-P05.

\printbibliography

\appendix

\section{Aspects of market modeling} \label{appMarket}
We briefly discuss three aspects of modeling markets, the sample space, the payoffs, and the no-arbitrage assumption.
First, an important question in probability theory is where the sample space comes from \cite{lijoi2011conversation,cox1946probability}. There is a ``true" underlying $\Omega_{\rm true}$, which describes the sample space affecting the financial assets. Each element would describe a state of the whole financial market or the outcome of a tournament. Especially in finance, this space is rarely observed or practical and one assumes certain models for the financial market. Such a model contains events which shall described by a sample space $\Omega_{\rm model}$, which we work with here. An example of this space are the trajectories of a binomial stochastic process. 

Second, a similar question as for the sample space arises regarding the availability of the payoffs $S$ (introduced below in Definition \ref{def:payoff-matrix}). There exists a ``true" mapping $S_{\rm true}$ which relates elements in $\Omega_{\rm true}$ to the financial asset prices. There will be some relation between the state of the financial market and the stock price of a company. 
As mentioned, usually one works with some model sample space $\Omega_{\rm model}$. Hence,
one considers a matrix $S_{\rm model}$ with relates events to price outcomes. Such a matrix can in some cases be generated algorithmically or from historical data. We discuss an example in Section \ref{secExampleBlackScholes} motivated by the Black-Scholes-Merton model. 

Third,  consider the notion of arbitrage which is the existence of investments that allow for winnings without the corresponding risk, see Definition \ref{def:arbitrage} in our context. The scope of the discussion on arbitrage depends if one considers the ``true" setting $(\Omega_{\rm true},S_{\rm true})$ or the model setting $(\Omega_{\rm model},S_{\rm model})$. For the ``true" market, one often assumes the efficient market hypothesis. While strongly debatable, the efficient market hypothesis states that all the information is available to all market participants and instantaneously affects the prices in the market. In other words, the market has arranged the current asset prices in a way that arbitrage does not exist. Of course, the existence of certain types of hedge funds and speculators proves that such an assumption is not realistic, especially on short time scales. However, it is often a reasonable first approximation for financial theory. In the model, the absence of arbitrage is a requirement rather than an assumption. If one assumes that the underlying true market is arbitrage-free (efficient-market hypothesis), then the model market should also be designed in an arbitrage-free way. If a model admits arbitrage opportunities, one needs to consider a better model. While the task of finding appropriate models for specific markets is outside the scope of this work, we show an example in Section \ref{secExampleBlackScholes}.

\section{Further notions and preliminary results}\label{app:appendices}

\subsection{Linear programming and optimization}\label{app:linprogoptimization}
We introduce the standard form of a LP. For an inequality vector $\bm x$ with $\bm x \geq \bm y$ we mean that $x_i \geq  y_i$ for all $i$. 
\begin{defn}[Standard form of linear program (LP) \cite{Boyd2004}]\label{def:lpstandardform}
For a matrix $A \in \mathbb{R}^{m \times n}$ where $m < n$, $c \in \mathbb{R}^m$, a linear program in standard form is defined as the following optimization problem:
  \begin{equation*}
    \begin{array}{ll@{}ll}
    \text{minimize}  & \bm{c}^{T} \bm x &\\
    \text{subject to}& A \bm x \leq \bm b   &  &\\
                     &  \bm x \geq 0.   & &
    \end{array}
    \end{equation*}
\end{defn}

Note that a minimization problem can be turned into a maximization problem simply by changing $\bm c$ to $-\bm c$.
It is often the case that an LP problem is formulated in a non-standard form, i.e. when we have 
inequality constraints different than $\leq 0$ or the equality constraint $A\bm x =\bm b$. It is usually simple to convert
a LP program in non-standard form into a LP in standard form. For instance, an equality constraint can be cast into two inequality constraint as $A\bm x=\bm b \Rightarrow A \bm x \leq \bm b$ and $-A\bm x \leq -\bm b$. 

Farkas' Lemma is a standard result in optimization \cite{ORIElectures,ORF523lectures}. We present here a version slightly adjusted to our needs. Specifically, we need the vector $\bm x$ to be strictly greater than $0$ in each component. This is needed because we need a probability measure $\bm q$ to be equivalent (as in Definition~\ref{def:equivalencebetweenprobabilitymeasures}) to the original probability measure (which is non-zero on the singletons of our $\sigma$-algebra). We state two theorems that we will use in the proof of Farkas' Lemma.  The first, is a lemma to the separating hyperplane theorem, while the second is about closedness and convexity of a cone. 

\begin{lemma}[{\cite[Example 2.20]{Boyd2004}}]\label{thm:separating}
Let $C \subseteq \mathbb{R}^{n}$ be a closed convex set and $\bm b \in \mathbb{R}^n$ a point not in $C$. Then $\bm b$ and $C$ can be strictly separated by a hyper plane, i.e. $\exists \bm y \in \mathbb{R}^n, \alpha \in \mathbb{R}, \text{ s.t. }$ $$\forall \bm z \in C, \: \bm{y}^{T}\bm z < \alpha, \bm y^{T}\bm{b} > \alpha.$$
\end{lemma}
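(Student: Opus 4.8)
The plan is to prove strict separation via the \emph{projection} of the external point $\bm b$ onto the closed convex set $C$. First I would dispose of the trivial case $C = \emptyset$, in which the universally quantified condition holds vacuously and any $\bm y \neq \bm 0$ with $\alpha := \bm y^T \bm b - 1$ works. Assuming $C \neq \emptyset$, the crux is to establish the existence of a nearest point $\bm p \in C$ to $\bm b$, i.e.\ a minimizer of $\bm z \mapsto \norm{\bm z - \bm b}$ over $C$. Because $C$ may be unbounded, I cannot directly invoke the extreme value theorem on $C$ itself; instead I would fix any $\bm c_0 \in C$, set $r := \norm{\bm c_0 - \bm b}$, and observe that the intersection of $C$ with the closed ball $\{\bm z : \norm{\bm z - \bm b} \leq r\}$ is closed and bounded, hence compact. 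The continuous distance function attains its minimum on this compact set at some $\bm p$, and any point of $C$ outside the ball is farther from $\bm b$ than $\bm c_0$, so $\bm p$ is in fact the global minimizer over all of $C$.

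Next I would derive the \emph{variational inequality} characterizing this projection. For any $\bm z \in C$ and $t \in [0,1]$, convexity gives $\bm p + t(\bm z - \bm p) \in C$, so minimality of $\bm p$ yields
\begin{equation*}
\norm{\bm b - \bm p - t(\bm z - \bm p)}^2 \geq \norm{\bm b - \bm p}^2 .
\end{equation*}
Expanding the left-hand side, cancelling $\norm{\bm b - \bm p}^2$, dividing by $t > 0$, and letting $t \to 0^+$, I obtain $(\bm b - \bm p)^T (\bm z - \bm p) \leq 0$ for every $\bm z \in C$.

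With the separating direction $\bm y := \bm b - \bm p$ in hand, the conclusion follows by a routine choice of threshold. Since $\bm b \notin C$ we have $\bm p \neq \bm b$, hence $\bm y \neq \bm 0$ and $\norm{\bm y}^2 > 0$. The variational inequality reads $\bm y^T \bm z \leq \bm y^T \bm p$ for all $\bm z \in C$, whereas $\bm y^T \bm b = \bm y^T \bm p + \norm{\bm y}^2 > \bm y^T \bm p$. Taking the threshold strictly between these two values, e.g.\ $\alpha := \bm y^T \bm p + \tfrac{1}{2}\norm{\bm y}^2$, gives $\bm y^T \bm z \leq \bm y^T \bm p < \alpha < \bm y^T \bm b$ for all $\bm z \in C$, which is precisely the claimed strict separation.

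The main obstacle is the existence of the projection $\bm p$: this is the only step that is not a one-line computation, since the unboundedness of $C$ forces the detour of manufacturing compactness by intersecting with a ball and then verifying that the minimizer over the ball is global. Uniqueness of $\bm p$ (from strict convexity of the Euclidean norm) is not needed. Everything downstream—the variational inequality and the selection of $\alpha$—is elementary algebra.
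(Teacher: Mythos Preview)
Your proof is correct and follows the standard projection-based argument: existence of a nearest point via compactness of the intersection with a ball, the variational inequality, and then the midpoint threshold. The paper does not supply its own proof of this lemma but simply cites Example~2.20 of Boyd~\&~Vandenberghe, where the underlying approach is the same projection argument, so your proposal is in line with the cited reference.
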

The previous lemma states that a point and a closed convex set can be \emph{strictly} separated by a hyperplane: strict separation - i.e. having strict inequalities in both directions -  is not always possible for two closed convex sets \cite[Exercise 2.23]{Boyd2004}. Note that the direction of the inequalities can be changed by considering $-y$ instead of $y$. 

\begin{lemma}\label{lem:closedconvex}
Let $C:=\rm{cone}(\{  \bm{a}_1, \dots,   \bm{a}_n\}):=\{ \bm{s} \in \mathbb{R}^{m}: \bm s = \sum_{i=1}^K \lambda_i \bm{a}_i, \lambda_i \geq 0, \forall i \}$. Then $C$ is closed and convex.  
\end{lemma}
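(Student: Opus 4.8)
The plan is to treat convexity and closedness separately, convexity being trivial and closedness the substantive point.

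Convexity is immediate: if $\bm s=\sum_i\lambda_i\bm a_i$ and $\bm s'=\sum_i\lambda_i'\bm a_i$ with all $\lambda_i,\lambda_i'\ge 0$, then for $t\in[0,1]$ the vector $t\bm s+(1-t)\bm s'=\sum_i(t\lambda_i+(1-t)\lambda_i')\bm a_i$ has nonnegative coefficients and hence lies in $C$. (In fact $C$ is closed under addition and nonnegative scaling, which already forces convexity.)

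For closedness I would first establish a conical Carath\'eodory statement: every $\bm s\in C$ can be written as $\sum_{i\in I}\lambda_i\bm a_i$ with $\lambda_i>0$ and $\{\bm a_i\}_{i\in I}$ linearly independent. The argument is the standard one: among all representations $\bm s=\sum_{i\in I}\lambda_i\bm a_i$ with $\lambda_i>0$ pick one with $|I|$ minimal; if the $\bm a_i$, $i\in I$, were linearly dependent there is $\bm\mu\ne\bm 0$ with $\sum_{i\in I}\mu_i\bm a_i=\bm 0$, and (after replacing $\bm\mu$ by $-\bm\mu$ if needed so that some $\mu_i>0$) the choice $t:=\min\{\lambda_i/\mu_i:\mu_i>0\}$ makes $\bm\lambda-t\bm\mu\ge\bm 0$ with strictly smaller support, contradicting minimality. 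Consequently $C=\bigcup_I C_I$ with $C_I:=\mathrm{cone}(\{\bm a_i:i\in I\})$, the union ranging over the finitely many $I\subseteq[n]$ for which $\{\bm a_i\}_{i\in I}$ is linearly independent, so it suffices to show each $C_I$ is closed, since a finite union of closed sets is closed.

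Fix such an $I$ and let $A_I$ be the linear map $\mathbb{R}^{|I|}\to\mathbb{R}^m$ whose columns are the $\bm a_i$, $i\in I$, so that $C_I=A_I(\mathbb{R}_{\ge 0}^{|I|})$. Linear independence of the columns makes $A_I$ injective, hence it has a continuous left inverse $L:=(A_I^TA_I)^{-1}A_I^T$, and its range $V:=A_I(\mathbb{R}^{|I|})$ is a linear subspace, hence closed. For $F:=\mathbb{R}_{\ge 0}^{|I|}$ one checks $A_I(F)=L^{-1}(F)\cap V$: the inclusion $\subseteq$ is clear, and if $\bm y\in L^{-1}(F)\cap V$ write $\bm y=A_I\bm z$, so $\bm z=LA_I\bm z=L\bm y\in F$ and $\bm y\in A_I(F)$. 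Since $L$ is continuous, $L^{-1}(F)$ is closed, and intersecting with the closed subspace $V$ preserves closedness; thus $C_I$ is closed, which finishes the proof. The main obstacle is the conical Carath\'eodory reduction — in particular being careful about the signs of the $\mu_i$ so that $t$ can be chosen with $\bm\lambda-t\bm\mu\ge\bm 0$ and strictly smaller support; everything afterwards is routine point-set topology.
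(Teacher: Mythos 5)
Your proof is correct. Note that the paper states this lemma without proof (it is invoked as a standard fact en route to Farkas' lemma), so there is no argument to compare against; your write-up supplies exactly the standard missing argument: convexity from closure under nonnegative combinations, and closedness via the conical Carath\'eodory reduction to cones over linearly independent subsets, each of which is closed because the generating map is injective and admits a continuous left inverse. Both the support-reduction step (the sign choice for $\bm\mu$ and the ratio $t=\min\{\lambda_i/\mu_i:\mu_i>0\}$) and the identity $A_I(F)=L^{-1}(F)\cap V$ are verified correctly, so the argument is complete as written.
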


\begin{theorem}[Farkas' Lemma] Let $A \in \mathbb{R}^{m \times n}$, and $\bm b \in \mathbb{R}^{m}$. Then, only one of the following two proposition holds:
\begin{itemize}
\item $ \exists \: x \in \mathbb{R}^{n} : A\bm x=\bm b, \bm x > 0 $.
\item $ \exists \: y \in \mathbb{R}^{m} : A^T\bm y > 0,  \bm y^{T}\bm b < 0 $.
\end{itemize} 

\end{theorem}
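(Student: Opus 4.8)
The plan is the classical two-step argument: (i) the two alternatives cannot hold simultaneously, and (ii) at least one of them does. Step (i) is the easy ``weak duality'' half. Suppose both held, with $A\bm x = \bm b$, $\bm x > \bm 0$ and $A^T \bm y > \bm 0$, $\bm y^T \bm b < 0$. Then $\bm y^T \bm b = \bm y^T (A\bm x) = (A^T \bm y)^T \bm x$, which is a sum of products of strictly positive numbers and hence $> 0$, contradicting $\bm y^T \bm b < 0$. (Only $\bm x \ge \bm 0$, $\bm x \neq \bm 0$ together with $A^T\bm y > \bm 0$ is actually used here, so this half is robust.)

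For step (ii), assume the first alternative fails and build a $\bm y$ witnessing the second. Let $\bm a_1, \dots, \bm a_n$ be the columns of $A$ and $C := \mathrm{cone}(\{\bm a_1,\dots,\bm a_n\})$, which is closed and convex by Lemma~\ref{lem:closedconvex}. If $\bm b \notin C$, Lemma~\ref{thm:separating} gives $\bm y \in \mathbb{R}^m$ and $\alpha \in \mathbb{R}$ with $\bm y^T \bm z < \alpha$ for all $\bm z \in C$ and $\bm y^T \bm b > \alpha$. Since $\bm 0 \in C$ we get $\alpha > 0$, and since $\lambda \bm z \in C$ for all $\lambda \ge 0$, the bound $\lambda\, \bm y^T \bm z < \alpha$ forces $\bm y^T \bm z \le 0$ on all of $C$; evaluating at the columns gives $A^T \bm y \le \bm 0$, while $\bm y^T\bm b > \alpha > 0$. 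Replacing $\bm y$ by $-\bm y$ yields $A^T \bm y \ge \bm 0$ and $\bm y^T \bm b < 0$, which is the second alternative in its familiar non-strict form.

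I expect the crux to be exactly the passage to the \emph{strict} inequalities that make this version stronger than the textbook one, and it has two faces. First, the hypothesis ``no $\bm x > \bm 0$ with $A\bm x = \bm b$'' does not give $\bm b \notin C$; it only gives $\bm b \notin \{A\bm x : \bm x > \bm 0\}$ (the relative interior of $C$), so $\bm b$ may sit on a proper face of $C$, and one must first descend to that face --- e.g.\ by isolating the coordinates $j$ forced to vanish on $\{\bm x \ge \bm 0 : A\bm x = \bm b\}$ (nonempty only when $\bm b \in C$) and separating $\bm b$ from the sub-cone spanned by the remaining columns. Second, even after that, separation yields only $A^T\bm y \ge \bm 0$, whereas the statement asks for $A^T\bm y > \bm 0$, and upgrading to strict needs a strictly positive vector in the row space of $A$ to perturb along. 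In the applications here this is automatic: in Lemma~\ref{lemFarkas} the first row of the payoff matrix is $\bm 1^T$, so $S^T \bm e_1 = \bm 1$ (with $\bm e_1$ the first standard basis vector), and for small $\epsilon > 0$ the vector $\bm v + \epsilon \bm e_1$ satisfies $S^T(\bm v + \epsilon \bm e_1) = S^T \bm v + \epsilon \bm 1 > \bm 0$ while keeping $(\bm v + \epsilon \bm e_1)^T \bm \Pi = \bm v^T \bm \Pi + \epsilon < 0$, which is the inexpensive route to the strict second alternative in that setting. Absent such structure the fully strict statement need not hold, so the proof should either invoke it or add a mild non-degeneracy assumption on $A$ and $\bm b$.
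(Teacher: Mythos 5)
Your proposal follows the same route as the paper's proof: the same easy mutual-exclusivity step, and the same separation argument via Lemma~\ref{lem:closedconvex} and Lemma~\ref{thm:separating} applied to the cone $C$ of columns of $A$. The substantive difference is how the strict inequalities are handled, and there your diagnosis is the accurate one. The paper's proof passes from ``no $\bm x>\bm 0$ with $A\bm x=\bm b$'' directly to ``$\bm b\notin C$'', which is exactly the non sequitur you flag ($\bm b$ may lie on a proper face of $C$, reachable only with some coefficients equal to zero), and it then reads off $A^T\bm y>\bm 0$ from a separation that only delivers $A^T\bm y\geq\bm 0$; its intermediate claim that the separating level $\alpha$ equals $0$ is moreover incompatible with strict separation, since $\bm 0\in C$ forces $\alpha<0$, and the scaling argument only shows $\bm y^T\bm z\geq 0$ on $C$. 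So the two leaps you single out as the crux are precisely the gaps in the paper's own argument; your write-up supplies everything the paper's proof legitimately establishes (the non-strict alternative), and nothing you omit is actually furnished there.

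Your suspicion that the fully strict dichotomy fails without extra structure is correct and can be made concrete: for $A=I_2$ and $\bm b=(1,0)^T$ neither alternative holds, and even with the safe asset present, the price system $S=\begin{pmatrix}1&1\\1&0\end{pmatrix}$, $\bm\Pi=(1,1)^T$ has no $\bm q>\bm 0$ with $S\bm q=\bm\Pi$ and no $\bm v$ with $S^T\bm v>\bm 0$, $\bm v^T\bm\Pi<0$ (the only arbitrage there is of the $({\bf NA})$ type of Definition~\ref{def:arbitrage}: zero cost, payoff $\geq\bm 0$ with one strict component). Hence the statement as written needs a nondegeneracy hypothesis, and the use made of Lemma~\ref{lemFarkas} in Section~\ref{sec:introtomartingale} really calls for a Stiemke/Motzkin-type theorem of alternatives, which is how the first fundamental theorem of asset pricing is usually proved. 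Your $\bm e_1$-perturbation is a correct and cheap way to upgrade the weak second alternative ($S^T\bm v\geq\bm 0$, $\bm v^T\bm\Pi<0$) to the strict one when the riskless asset is present; what it cannot repair, and what no argument can in view of the counterexample, is the strictness $\bm q>\bm 0$ of the first alternative, which is exactly where the full no-arbitrage assumption has to enter.
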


\begin{proof}
We first show that both conditions cannot be true at the same time. We do this by assuming that both propositions holds true, and we derive a contradiction. By using the definition of $b$ in the second inequality, it is simple to obtain that $0 > b^Ty = x^TA^T y > 0$, which is a contradiction. The last inequality follows the positivity of an inner product between two vectors with strictly positive entries. 

Let us assume that the first case does not hold, then we prove that the second case must necessarily hold. We do this using  Corollary~\ref{thm:separating}. Let $ a_1, \dots,  a_K$ be the columns of $A$, and $C=\rm{cone}(\{  a_1, \dots,   a_K\})$ be the cone generated by the columns of $A$. By assumption, $b \not \in C$. For closedness and convexity of $C$, required by Corollary~\ref{thm:separating}, see Lemma~\ref{lem:closedconvex}. From the Corollary we have that $\exists\: y\in \mathbb{R}^{m}, y \neq 0$, and $\alpha \in \mathbb{R} $ such that $\forall z \in C, y^Tz > \alpha$ and $y^Tb < \alpha$.  Now we argue that $\alpha = 0$ by ruling out the possibility of it being positive or negative. Since the origin is part of the cone (i.e. $0 \in C$), we have that $\alpha$ cannot be bigger than $0$. But $\alpha$ cannot be negative either. Indeed, if there is a $x \in C$ for which $y^Tx <  0$ there must be another point $\lambda x$ in the cone (for $\lambda > 0$) for which $\lambda y^Tx < \alpha$, which contradicts Corollary~\ref{thm:separating}, and thus $\alpha = 0$. In conclusion, if $b \not \in C$, then for all the columns $a_i$ of $A$ holds that $y^Ta_i > 0$ and $y^Tb < 0$. \end{proof} 

We conclude this section with a standard result in linear algebra. 

\begin{theorem}[{\cite{strang1993introduction}}]\label{thm:ell2minimization}
Let $A\bm x=\bm b$ a linear system of equations for a matrix $A \in \mathbb{R}^{m \times n}$ with singular value decomposition $\sum_{i}^n\sigma_i \bm u_i \bm v_i^T=U\Sigma V^T$, where $U \in \mathbb{R}^{m \times n}$, $\Sigma \in \mathbb{R}^{n \times n}$ is a diagonal matrix with $\sigma_i \geq 0$ on the diagonal, and $V^T \in \mathbb{R}^{n \times n}$. The least-square solution of smallest $\ell_2$ norm exists, is unique, and is given by $$x^+:=A^{+}b := U\Sigma^+V^Tb.$$
Where $\Sigma^+$ is the square matrix having $1/\sigma_i$ on the diagonal if $\sigma_i > 0$, and $0$ otherwise.
\end{theorem}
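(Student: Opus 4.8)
The claim is the classical characterisation of the Moore--Penrose pseudoinverse as the operator returning the minimum $\ell_2$-norm least-squares solution, so the plan is to diagonalise the problem with the SVD and reduce it to a coordinatewise minimisation. First I would fix the (thin) SVD $A=U\Sigma V^T=\sum_{i=1}^n \sigma_i \bm u_i \bm v_i^T$, let $r$ be the number of strictly positive singular values (so $\sigma_1,\dots,\sigma_r>0$ and $\sigma_{r+1}=\cdots=\sigma_n=0$), and extend the left singular vectors $\{\bm u_1,\dots,\bm u_r\}$ to a full orthonormal basis $\{\bm u_1,\dots,\bm u_m\}$ of $\mathbb{R}^m$. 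This extension is the only genuinely geometric ingredient, since the columns of the thin $U$ span only $\mathrm{range}(A)$.

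Next I would expand in these orthonormal bases. Writing any $\bm x=\sum_{i=1}^n c_i \bm v_i$ with $c_i=\bm v_i^T\bm x$ and $\bm b=\sum_{i=1}^m \beta_i \bm u_i$ with $\beta_i=\bm u_i^T\bm b$, orthonormality gives $A\bm x=\sum_{i\le r}\sigma_i c_i \bm u_i$ and hence $\norm{A\bm x-\bm b}^2=\sum_{i\le r}(\sigma_i c_i-\beta_i)^2+\sum_{i>r}\beta_i^2$. The second sum is the squared norm of the projection of $\bm b$ onto the orthogonal complement of $\mathrm{range}(A)$; it is independent of $\bm x$ and is the irreducible residual. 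The residual is therefore minimised exactly when $\sigma_i c_i=\beta_i$, i.e. $c_i=\beta_i/\sigma_i$, for each $i\le r$, while the coordinates $c_i$ for $i>r$ (the null-space directions of $A$) stay completely free.

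Finally I would minimise the norm over this minimiser set. Since $\norm{\bm x}^2=\sum_{i=1}^n c_i^2=\sum_{i\le r}(\beta_i/\sigma_i)^2+\sum_{i>r}c_i^2$ and the first sum is already fixed, the norm is smallest precisely when every free coordinate vanishes, $c_i=0$ for $i>r$. This singles out one vector, giving existence and uniqueness simultaneously; reassembling, $\bm x^+=\sum_{i\le r}\frac{\bm u_i^T\bm b}{\sigma_i}\bm v_i=V\Sigma^+U^T\bm b=A^+\bm b$, with $\Sigma^+$ the diagonal matrix carrying $1/\sigma_i$ in its first $r$ entries and $0$ elsewhere.

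I do not expect a serious obstacle, as the result is classical and the SVD does essentially all the work; the one place to be careful is the basis extension of the first step, which is exactly what captures the part of $\bm b$ lying outside $\mathrm{range}(A)$ and hence forces a nonzero residual in the inconsistent case, distinguishing a least-squares solution from an exact one. I would also note that, for the dimensions to be consistent with $A=U\Sigma V^T$, the displayed formula should read $\bm x^+=V\Sigma^+U^T\bm b$ rather than $U\Sigma^+V^T\bm b$.
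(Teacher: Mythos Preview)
Your argument is correct and is the standard SVD-based proof of this classical fact. The paper does not actually supply a proof of Theorem~\ref{thm:ell2minimization}; it simply states the result and cites \cite{strang1993introduction}, so there is nothing to compare against beyond noting that your approach is the textbook one. Your closing observation is also right: with the stated dimensions $U\in\mathbb{R}^{m\times n}$, $\Sigma\in\mathbb{R}^{n\times n}$, $V\in\mathbb{R}^{n\times n}$ and $\bm b\in\mathbb{R}^m$, the displayed formula must be $\bm x^+=V\Sigma^+U^T\bm b$ rather than $U\Sigma^+V^T\bm b$, which is a typo in the statement.
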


\subsection{Random variables, measurable spaces}\label{app:probability}

\begin{defn}[$\sigma$-algebra \cite{tao2011introduction,kolmogorov1957elements}]\label{defn:sigma-algebra}
Let $\Omega$ be a set, and $\Sigma$ be a subset of the power set of $\Omega$ (or equivalently a collection of subsets of $X$). Then $(\Omega, \Sigma)$ is a $\sigma$-algebra if
\begin{itemize}
\item (Empty set) $\emptyset\in \Sigma$.
\item (Complement)  $\forall S \in \Sigma \Rightarrow \overline{S} := X\backslash S \in \Sigma$.
\item (Countable unions) $\Sigma$ is closed under countable union, i.e., if $S_1, S_2, \dots \in \Sigma$, then $\bigcup_{n=1}^\infty S_n \in \Sigma$.
\end{itemize}
\end{defn}

Observe that thanks to the de Morgan law, we can equivalently define the sigma algebra to be closed under countable intersection.
\begin{defn}
Let $\Omega$ be a set, and $\Sigma$ a $\sigma$-algebra. The tuple $(\Omega, \Sigma)$ is a measurable space (or Borel space). 
\end{defn}

\begin{defn}[Measurable function \cite{tao2011introduction}]
Let $(\Omega, \Sigma)$, and $(Y, T)$ two different measurable space. A function $f : \Omega \mapsto Y$ is said to be measurable if for every $E \in T$ 
$$f^{-1}(E):=\{x \in \Omega | f(x) \in E \} \in \Sigma.$$
\end{defn}
A measurable function is a function between the underlying sets of two measurable spaces that preserves the structure of the spaces: the pre-image of any measurable set is measurable. This is a very similar definition of a continuous function between topological spaces: the pre-image of any open set is open. 
\begin{defn}[Probability space]\label{defn:probability-space}
The tuple $(\Omega, \Sigma, \mathbb{P})$ is a probability space if
\begin{itemize}  
 \item $(\Omega, \Sigma)$ is a $\sigma$-algebra. 
 \item $\mathbb{P} : \Sigma \mapsto [0,1]$ is a measurable function such that
 \begin{itemize}
   \item $\mathbb{P}(\emptyset) =0 $ and $\mathbb{P}[S_i] \geq 0 \quad \forall S_i \in \Sigma$. 
   \item Let  $\{S_i\}_{i=1}^\infty $ be a set of disjoints elements of $\Sigma$. Then  $\mathbb{P}\left(\bigcup_{i=1}^\infty S_i\right) = \sum_{i=0}^\infty \mathbb{P}[S_i]$ is countably additive. 
     \item $\mathbb{P}(\Omega)=1$.
 \end{itemize}
\end{itemize}
\end{defn}
We say commonly that $\Omega$ is the set of \emph{outcomes} of the experiment, and $\F$ is the set of \emph{events}. That is, a probability space is a measure space whose measure of $\Omega$ is $1$. 

\begin{defn}[Equivalence between probability measures]\label{def:equivalencebetweenprobabilitymeasures}
  Let $(\Omega, \Sigma, \mathbb{P}), (\Omega, \Sigma, \mathbb{Q})$ two probability space with the same $\Omega$ and $\Sigma$. We say that $\mathbb{P}$ and $\mathbb{Q}$ are equivalent if and only if
  $$\forall A \in \Sigma \:\:\: \mathbb{P}(A)=0 \Leftrightarrow \mathbb{Q}(A)=0.$$
\end{defn}
Two equivalent measures agree on the possible and impossible events. Next, we recall the definition of and the formalism of random variable. 
\begin{defn}[Random variable]\label{def:random-variable}
A (real-valued) random variable on a probability space $(\Omega, \Sigma, \mathbb{P})$ is a measurable function $X: \Omega \mapsto \mathbb{R}$. 
\end{defn}
Usually the definition of a martingale is in the context of stochastic processes, but in the case of a discrete market in one time step, we can use a simplified definition.
\begin{defn}[Martingale]\label{def:martingale}
Let $X_i, i \in \{0,1\}$ be two random variables on the same probability space. They are a martingale if 
\begin{itemize}
    \item $\mathbb E[|X_i|] < \infty$.
    \item $\mathbb E[X_1 |X_0] = X_1$.
\end{itemize}
\end{defn}

\subsection{Quantum subroutines}\label{subsec:quantumsubroutines}
We need few results from previous literature in quantum linear algebra. 

\begin{theorem}[Pseudoinverse state preparation \cite{chakraborty2018power}]\label{cor:pseudoinv-vtaa}
Let $\kappa\geq 2$, and $H$ be an $N\times N$ Hermitian matrix such that the non-zero eigenvalues of $H$ lie in the range $[-1,-1/\kappa]\bigcup [1/\kappa,1]$. Suppose that for $\delta = \littleo{\eps/(\kappa^2\log^3\frac{\kappa}{\eps})}$ we have a unitary $U$ that is a $(\alpha,a,\delta)$-block-encoding of $H$ that can be implemented using $T_U$ elementary gates.  Also suppose that we can prepare a state $\ket{\psi}$ in time $T_{\psi}$ such that $\nrm{\Pi_{\mathrm{col}(H)}\ket{\psi}}\geq \sqrt{\gamma}$. Then there exists a quantum algorithm that outputs a state that is $\eps$-close to
$H^{+}\ket{\psi}/\nrm{H^{+}\ket{\psi}}$ at a cost 
$$\tOrd{\frac{\kappa}{\sqrt{\gamma}}\left(\alpha\big(T_U+a\big)\log^2\left(\dfrac{1}{\eps}\right)+ T_{\psi}\right)}.$$
\end{theorem}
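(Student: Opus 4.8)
The plan is to recognize the claim as a block-encoded version of the quantum linear systems algorithm: build an approximate, sub-normalized block-encoding of $H^{+}$ by applying quantum singular value transformation (QSVT) to $U$ with a polynomial approximating $x\mapsto 1/x$, act with it on $\ket{\psi}$, and amplify the component that lands in $\mathrm{col}(H)$. A naive realization of this recipe costs $\tOrd{\kappa^{2}}$ in the relevant parameters; the quoted $\tOrd{\kappa}$ dependence is obtained by replacing plain amplitude amplification with \emph{variable-time amplitude amplification} (VTAA), which is why the statement carries the particular error budget $\delta=\littleo{\eps/(\kappa^{2}\log^{3}(\kappa/\eps))}$.

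\emph{Step 1 (polynomial for matrix inversion).} Since $U$ is an $(\alpha,a,\delta)$-block-encoding, the accessible operator is $H/\alpha$, whose nonzero eigenvalues lie in $\pm[1/(\kappa\alpha),1/\alpha]\subseteq[-1,1]$. I would invoke the standard odd-polynomial construction for $1/x$ (Childs--Kothari--Somma \cite{Childs2017linear}; see also the QSVT paper \cite{gilyen2019quantum}): a polynomial $p$ with $\sup_{[-1,1]}|p|\le 1/2$ and $|p(y)-\tfrac{1}{2\kappa\alpha}\cdot\tfrac1y|\le\eps'$ for $|y|\in[1/(\kappa\alpha),1/\alpha]$, of degree $d=\bigO{\kappa\alpha\log(\kappa\alpha/\eps')}=\tOrd{\kappa\alpha}$. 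Applying the eigenvalue transformation of \cite{gilyen2019quantum} to $U$ with $p$ yields a $(1,a',\delta')$-block-encoding of $p(H/\alpha)$, which, using $p(H/\alpha)\approx\tfrac{1}{2\kappa\alpha}(H/\alpha)^{+}=\tfrac{1}{2\kappa}H^{+}$, is a block-encoding of (approximately) $\tfrac{1}{2\kappa}H^{+}$ at the cost of $\bigO{d}$ uses of $U,U^{\dagger}$, $\bigO{d}$ single-qubit phase gates, and $\bigO{d\,a}$ extra gates for the reflections about $\ket{0}^{a}$ — hence $\tOrd{\kappa\alpha}(T_{U}+a)$ per application. Error propagation through QSVT contributes a factor $\sim d$ times the Lipschitz constant of $p$ on $[-1,1]$, which is $\sim\kappa\alpha$ near the band edge; requiring the induced perturbation to sit below the target precision relative to the $\sim\sqrt{\gamma}/\kappa$ flagged amplitude is exactly what forces $\delta=\littleo{\eps/(\kappa^{2}\log^{3}(\kappa/\eps))}$.

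\emph{Step 2 (act on $\ket{\psi}$ and amplify).} Acting with this block-encoding on $\ket{\psi}$ produces, conditioned on a flag, a state proportional to $\tfrac{1}{2\kappa}H^{+}\ket{\psi}=\tfrac{1}{2\kappa}H^{+}\Pi_{\mathrm{col}(H)}\ket{\psi}$. Since every nonzero eigenvalue of $H$ has magnitude $\le 1$, $\nrm{H^{+}\Pi_{\mathrm{col}(H)}\ket{\psi}}\ge\nrm{\Pi_{\mathrm{col}(H)}\ket{\psi}}\ge\sqrt{\gamma}$, so the flagged amplitude is $\ge\sqrt{\gamma}/(2\kappa)$, and after renormalizing the flagged branch is $\eps$-close to $H^{+}\ket{\psi}/\nrm{H^{+}\ket{\psi}}$ once $\eps'$ and $\delta'$ are taken $\Theta(\eps/\kappa)$-small (absorbing the $1/(2\kappa)$ rescaling into the relative error after normalization, and turning $\log(1/\eps')$ into $\log^{2}(1/\eps)$ together with the amplification overhead). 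Plain amplitude amplification would now need $\bigO{\kappa/\sqrt{\gamma}}$ rounds, each of cost $\tOrd{\kappa\alpha}(T_{U}+a)+\bigO{T_{\psi}}$, i.e.\ the \emph{product} $\tOrd{\kappa^{2}\alpha/\sqrt{\gamma}}(T_{U}+a)+\tOrd{\kappa/\sqrt{\gamma}}T_{\psi}$ — a factor $\kappa$ worse than claimed in the first term.

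\emph{Step 3 (variable-time amplification --- the main obstacle).} To shave that factor I would decompose $\ket{\psi}$ over eigenvalue bands $|\lambda|\in\alpha[2^{-j-1},2^{-j}]$ (only $\bigO{\log\kappa}$ of them) and exploit that band $j$ needs only a degree-$\bigO{2^{j}\log(1/\eps')}$ inversion polynomial and is amplified quickly. Following the VTAA framework (Ambainis; \cite{Childs2017linear}; \cite{chakraborty2018power}), organize the computation as a variable-stopping-time algorithm in which a cheap low-degree filter plus amplitude-estimation step identifies the band, the appropriate-degree inversion polynomial is then applied, and amplification is interleaved with these stages so that one pays a \emph{sum} rather than a \emph{product} of the two $\kappa$-type contributions; concretely, the overall cost is governed by the root-mean-square of the per-branch costs, which is $\tOrd{\kappa\alpha}$ for the block-encoding part, plus a $T_{\psi}$ term entering once per outer amplification step, giving $\tOrd{\frac{\kappa}{\sqrt{\gamma}}\bigl(\alpha(T_{U}+a)\log^{2}(1/\eps)+T_{\psi}\bigr)}$. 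The genuinely delicate parts will be (i) making the variable-stopping-time construction and its root-mean-square cost bound rigorous --- in particular verifying that the ``local'' success probabilities of the bands and of the intermediate amplitude-estimation steps meet the hypotheses of the VTAA theorem --- and (ii) threading $\delta$, $\eps'$, and the (tomography-free) renormalization through QSVT and VTAA simultaneously so as to land exactly on $\eps$-closeness; these are precisely the places where the $\kappa^{2}\log^{3}$ in the denominator of the hypothesis on $\delta$ is pinned down.
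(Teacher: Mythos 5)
Your proposal is correct in its essentials and follows the same route as the source of this statement: the paper does not prove the theorem itself but imports it verbatim from \cite{chakraborty2018power}, whose proof is exactly the combination you describe — a QSVT/Chebyshev-type polynomial implementation of $H^{+}$ from the block-encoding, followed by variable-time amplitude amplification to trade the $\tOrd{\kappa^{2}}$ product for the $\tOrd{\kappa/\sqrt{\gamma}}$ sum, with the hypothesis on $\delta$ arising from propagating the block-encoding error through the inversion polynomial. The only caveat is that your Step 3 defers the VTAA bookkeeping, which is precisely the technical content of the cited theorem rather than something the present paper re-derives.
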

The following theorem provides a relative estimate of the normalization factor from the previous theorem $\|H^+\ket{\psi}\|$. 
\begin{theorem}[Pseudoinverse state preparation and its norm estimation \cite{chakraborty2018power}]\label{cor:qls-vtae} 
Let $\eps>0$. Then under the same assumptions as in Corollary~\ref{cor:pseudoinv-vtaa}, there exists a variable time quantum algorithm that outputs a number $\Gamma$ such that 
$$1-\eps\leq \dfrac{\Gamma}{\nrm{H^{+}\ket{\psi}}}\leq 1+\eps,$$ 
at a cost
$$\tOrd{\dfrac{\kappa}{\eps\sqrt{\gamma}} \log\left(\dfrac{1}{\delta}\right) \left(\alpha\big(T_U+a\big)\log^2\left(\dfrac{1}{\eps}\right)+ T_\psi\right)},$$
with success probability at least $1-\delta$.
\end{theorem}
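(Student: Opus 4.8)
The plan is to build directly on Theorem~\ref{cor:pseudoinv-vtaa}, whose proof already constructs (via quantum singular value transformation applied to the block-encoding $U$) a procedure that applies a polynomial approximation of the map $x\mapsto 1/x$ to $H$ on the eigenvalue range $[-1,-1/\kappa]\cup[1/\kappa,1]$, thereby yielding a block-encoding of $H^{+}$ with a known subnormalization factor of order $\kappa$. The key observation is that this same circuit already encodes the quantity we wish to estimate: when the block-encoding of $H^{+}$ is applied to $\ket{\psi}$ and one post-selects on the ancilla flag being $\ket{0}$, the success amplitude equals $\nrm{H^{+}\ket{\psi}}$ divided by the exactly known subnormalization. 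Estimating $\nrm{H^{+}\ket{\psi}}$ with relative error $\eps$ therefore reduces to estimating this success amplitude with relative error $\eps$, and the conversion preserves relative error because the proportionality constant is fixed by the construction.

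First I would isolate the \emph{unamplified} state-preparation routine $\mathcal A$ underlying Theorem~\ref{cor:pseudoinv-vtaa}, i.e. the circuit producing $\mathcal A\ket{0}\ket{\psi}=a\,\ket{0}\ket{\widetilde{\bm q}}+\sqrt{1-a^2}\,\ket{\perp}$, where $\ket{\widetilde{\bm q}}$ is $\eps$-close to $H^{+}\ket{\psi}/\nrm{H^{+}\ket{\psi}}$ and the success amplitude $a$ is proportional to $\nrm{H^{+}\ket{\psi}}$. Since $\nrm{H^{+}\ket{\psi}}\geq\sqrt{\gamma}$ (the smallest nonzero inverse singular value is at least $1$ and $\ket{\psi}$ carries weight at least $\sqrt{\gamma}$ in the column space), one has $a=\Omg{\sqrt{\gamma}/\kappa}$ up to the known subnormalization. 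Next I would run amplitude estimation on $\mathcal A$: since amplitude estimation returns an additive-error estimate of $a$ at cost scaling like the inverse of the additive error, obtaining a \emph{relative} error $\eps$ on $a$ (hence on $\nrm{H^{+}\ket{\psi}}$) requires additive precision $\eps a$, so the number of calls to $\mathcal A$ scales as $1/(\eps a)=\Ord{\kappa/(\eps\sqrt{\gamma})}$.

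The subtle point, and the main obstacle, is controlling the dependence on $\kappa$ so as to obtain the linear $\kappa$ in the stated bound rather than $\kappa^2$. Applying textbook amplitude estimation to a single QSVT application whose polynomial degree already scales with $\kappa$ would multiply two factors of $\kappa$. The resolution is to use \emph{variable-time} amplitude estimation rather than the textbook version: the pseudoinverse-preparation circuit is a variable-stopping-time algorithm whose branches corresponding to large eigenvalues of $H$ terminate early, so its effective cost is governed by an averaged rather than a worst-case stopping time. Carrying out the variable-time amplitude estimation analysis of \cite{chakraborty2018power} (in the lineage of Ambainis's variable-time amplitude amplification, and mirroring the analysis already used to prove Theorem~\ref{cor:pseudoinv-vtaa}) shows that the combined cost is $\tOrd{\frac{\kappa}{\eps\sqrt{\gamma}}\big(\alpha(T_U+a)\log^2(1/\eps)+T_\psi\big)}$ rather than the naive quadratic-in-$\kappa$ bound.

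Finally I would handle the failure probability: a single run of (variable-time) amplitude estimation succeeds with constant probability, and repeating $\Ord{\log(1/\delta)}$ times while taking the median boosts the success probability to $1-\delta$ by a standard Chernoff argument, which accounts for the $\log(1/\delta)$ factor in the stated complexity. The only remaining bookkeeping is to verify that the $\eps$-level polynomial-approximation error already tolerated in Theorem~\ref{cor:pseudoinv-vtaa} perturbs the estimated amplitude by at most a further relative $\Ord{\eps}$; this follows from the operator-norm closeness of the approximate and exact inverses on the eigenvalue range $[-1,-1/\kappa]\cup[1/\kappa,1]$, so the final estimate $\Gamma$ satisfies the claimed relative-error guarantee.
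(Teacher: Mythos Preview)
The paper does not actually prove this statement: Theorem~\ref{cor:qls-vtae} is quoted verbatim from \cite{chakraborty2018power} in the preliminaries (Appendix~\ref{subsec:quantumsubroutines}) and is used as a black box in the proof of Theorem~\ref{thm:pricingqla}. There is therefore no proof in the paper to compare your proposal against.

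For what it is worth, your sketch is a reasonable outline of how the result is established in \cite{chakraborty2018power}: one builds the unamplified block-encoded inverse via a polynomial approximation of $x\mapsto 1/x$, observes that the success amplitude is proportional to $\nrm{H^{+}\ket{\psi}}$ with a known constant, and then replaces naive amplitude estimation by a variable-time amplitude estimation routine to avoid the extra factor of $\kappa$. The median-of-$\Ord{\log(1/\delta)}$ boost for the success probability is standard. If you are asked to supply a proof here, you should simply point to \cite{chakraborty2018power}; if you want to reproduce the argument, the nontrivial content is entirely in the variable-time amplification/estimation analysis, which you correctly identify but do not carry out.
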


\begin{theorem} [$\ell_2$ state-vector tomography \cite{landman2019convolutional}] 
\label{thm:tomography}
    Given a unitary mapping $U_x: \ket{0} \mapsto \ket{x}$ in time $T(U_x)$ and $\delta>0$, there is an algorithm that produces an estimate $\overline{x} \in \mathbb{R}^m$ with $\norm{\overline{x}} = 1$ such that $\|x - \overline{x}\|_2 \leq \delta$ with probability at least $1 - 1/poly(m)$ using  $O\left(\frac{m\log m}{\delta^2}\right)$ samples from $\ket{x}$.
\end{theorem}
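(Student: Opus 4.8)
The plan is to reconstruct the real unit vector $x$ in two phases — first its magnitudes $|x_i|$ from computational-basis statistics, then its signs from a reference-based measurement — and to control the total $\ell_2$ error by a concentration bound whose high-probability boosting is exactly what injects the $\log m$ into the sample count. Throughout, one sample means one application of $U_x$ to $\ket 0$ yielding a fresh copy of $\ket x = \sum_{i=1}^m x_i \ket i$.

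For the magnitude phase I would measure $N = \Ord{m\log m/\delta^2}$ copies of $\ket x$ in the computational basis. Each measurement returns outcome $i$ with probability $p_i = x_i^2$, so the empirical frequencies $\hat p_i = n_i/N$ (with $n_i$ the count of outcome $i$, $n_i \sim \mathrm{Binomial}(N,p_i)$) give magnitude estimates $r_i := \sqrt{\hat p_i}$. The quantitative heart of the argument is the bound $(\sqrt{\hat p_i} - \sqrt{p_i})^2 = (\hat p_i - p_i)^2 / (\sqrt{\hat p_i}+\sqrt{p_i})^2 \le (\hat p_i - p_i)^2/p_i$, which together with $\mathbb{E}[(\hat p_i - p_i)^2] = p_i(1-p_i)/N$ yields $\mathbb{E}[(\sqrt{\hat p_i}-\sqrt{p_i})^2] \le (1-p_i)/N$ and hence $\mathbb{E}\big[\sum_i (r_i - |x_i|)^2\big] \le m/N$. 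Choosing $N = \Ord{m/\delta^2}$ drives the expected squared amplitude error below $\delta^2$; Markov's inequality then gives constant success probability, and repeating $\Ord{\log m}$ times and taking a coordinate-wise median (or invoking a sharper multinomial concentration) boosts the failure probability to $1/\mathrm{poly}(m)$. This repetition is the source of the extra $\log m$ factor, giving the stated $\Ord{m\log m/\delta^2}$.

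For the sign phase I would spend a second batch of copies comparing $\ket x$ against the fixed reference $\ket{\mathrm{ref}} = \frac{1}{\sqrt m}\sum_i \ket i$. Concretely, a controlled preparation produces $\frac{1}{\sqrt 2}(\ket 0 \ket x + \ket 1 \ket{\mathrm{ref}})$; a Hadamard on the control followed by measurement of both registers makes outcomes $(0,i)$ and $(1,i)$ occur with probabilities $\tfrac14(x_i + 1/\sqrt m)^2$ and $\tfrac14(x_i - 1/\sqrt m)^2$, whose difference is proportional to $x_i/\sqrt m$ and therefore reveals $\mathrm{sign}(x_i)$. With the same $\Ord{m\log m/\delta^2}$ budget this determines the sign correctly on every coordinate whose amplitude is large enough to matter; coordinates of negligible magnitude can be assigned an arbitrary sign since their total squared contribution is already $\Ord{\delta^2}$.

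Finally I would assemble $\bar x_i = \mathrm{sign}_i \cdot r_i$ (rescaled so $\norm{\bar x} = 1$) and combine the two error bounds by the triangle inequality, $\norm{x - \bar x}_2 \le \norm{|x| - r}_2 + (\text{sign-mismatch contribution})$, each $\Ord{\delta}$ with probability $1 - 1/\mathrm{poly}(m)$ after a last union bound over the two phases. I expect the main obstacle to be the magnitude-phase concentration: coordinates with very small $p_i$ are the delicate case, since there the binomial has few or no hits and a coordinate-wise additive bound would cost factors of $\delta$ and force the wrong $m/\delta^4$ scaling. The clean route is precisely the expectation bound $\mathbb{E}\big[\sum_i(\sqrt{\hat p_i}-\sqrt{p_i})^2\big]\le m/N$ above, which handles small amplitudes automatically and keeps the sample complexity at $m\log m/\delta^2$ once boosted to high probability.
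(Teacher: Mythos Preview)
The paper does not prove this statement: Theorem~\ref{thm:tomography} is quoted verbatim from \cite{landman2019convolutional} as a black-box subroutine in the quantum-subroutines appendix, with no accompanying argument. There is therefore nothing in the paper to compare your proposal against.

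For what it is worth, your two-phase outline (computational-basis sampling for the magnitudes, interference with a uniform reference for the signs, then union-bounding) is exactly the construction in the cited source and its predecessors, and the sample-complexity accounting via $\mathbb{E}\big[\sum_i(\sqrt{\hat p_i}-\sqrt{p_i})^2\big]\le m/N$ is the right handle. One small point: boosting by ``coordinate-wise median over $\Ord{\log m}$ repeats'' does not obviously preserve an $\ell_2$ guarantee; the cleaner route is to bound each $|\hat p_i - p_i|$ by an additive Chernoff deviation of order $\sqrt{p_i/N}$ with failure probability $1/\mathrm{poly}(m)$ per coordinate and then union-bound over $i$, which gives the $\log m$ directly without a median step.
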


\begin{theorem}[Amplitude estimation \cite{BHMT00}]\label{thm:ampest}
Given a quantum algorithm $$A:\ket{0} \to \sqrt{p}\ket{y,1} + \sqrt{1-p}\ket{G,0},$$ where $\ket{G}$ is an arbitrary state and $p\in [0,1]$, then for any positive integer $t$, the amplitude estimation algorithm outputs $\tilde{p} \in [0,1]$ such that
$$
|\tilde{p}-p|\le 2\pi \frac{\sqrt{p(1-p)}}{t}+\left(\frac{\pi}{t}\right)^2,
$$
with probability at least $8/\pi^2$. It uses exactly $t$ iterations of the algorithm $A$. 
If $p=0$ then $\tilde{p}=0$ with certainty, and if $p=1$ and $t$ is even, then $\tilde{p}=1$ with certainty.
\end{theorem}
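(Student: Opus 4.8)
The plan is to realize amplitude estimation as quantum phase estimation applied to the amplitude-amplification (Grover) operator, and then to push the additive phase error through the map $\theta \mapsto \sin^2\theta$ by means of a trigonometric identity. First I would pass from amplitude to angle: writing the prepared state as $A\ket{0} = \sqrt{p}\ket{y,1} + \sqrt{1-p}\ket{G,0}$, set $\theta \in [0,\pi/2]$ with $p = \sin^2\theta$, so that $A\ket{0} = \sin\theta\,\ket{\psi_1} + \cos\theta\,\ket{\psi_0}$, where $\ket{\psi_1},\ket{\psi_0}$ are the orthonormal components flagged by the last qubit being $1$ and $0$. Define $Q := -A S_0 A^{-1} S_\chi$, where $S_\chi$ flips the sign of the good subspace (flag $=1$) and $S_0$ flips the sign of $\ket{0}$. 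A direct computation shows that $Q$ preserves the two-dimensional real span of $\ket{\psi_0},\ket{\psi_1}$ and acts there as a rotation by $2\theta$; hence its eigenvalues on this plane are $e^{\pm 2i\theta}$, and $A\ket{0}$ decomposes with equal weight $1/2$ onto the two corresponding eigenvectors.

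Next I would run textbook phase estimation for $Q$ with input $A\ket{0}$, using the $\mathbb{Z}_t$ quantum Fourier transform on an auxiliary register; by construction this matches the stated budget of $t$ iterations of the algorithm $A$. Measuring the register returns $j \in \{0,\dots,t-1\}$, and we output $\tilde\theta := \pi j/t$ and $\tilde{p} := \sin^2\tilde\theta$. The accuracy guarantee of Fourier-based phase estimation — that with probability at least $8/\pi^2$ the returned $j$ obeys $|\tilde\theta - \theta| \le \pi/t$ (for the $+$ branch; for the $-$ branch, $|\tilde\theta - (\pi-\theta)| \le \pi/t$) — is the technical heart, and it follows from estimating the Dirichlet-kernel amplitudes produced by the inverse transform, where the worst case places the eigenphase midway between grid points and splits the $8/\pi^2$ across the two nearest. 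The key point that makes either branch acceptable is that $\sin^2$ is symmetric about $\pi/2$, so $\sin^2(\pi-\theta) = \sin^2\theta = p$ and $\tilde{p}$ is a valid estimate regardless of which eigenphase was sampled.

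Finally I would convert phase accuracy into amplitude accuracy. Setting $\Delta := \tilde\theta - \theta$ (the $-$ branch is handled identically with $\theta \to \pi-\theta$, which leaves $|\sin 2\theta|$ unchanged), the identity $\sin^2 a - \sin^2 b = \sin(a+b)\sin(a-b)$ gives $|\tilde{p} - p| = |\sin(2\theta+\Delta)|\,|\sin\Delta|$. Expanding $\sin(2\theta+\Delta)$ and using $|\sin\Delta| \le |\Delta| \le \pi/t$ yields $|\tilde{p}-p| \le |\sin 2\theta|\tfrac{\pi}{t} + \tfrac{\pi^2}{t^2}$; since $\sqrt{p(1-p)} = |\sin\theta\cos\theta| = \tfrac12|\sin 2\theta|$, this is exactly $2\pi\sqrt{p(1-p)}/t + (\pi/t)^2$. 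The two boundary cases drop out directly: $p=0$ forces $\theta=0$, so $Q$ fixes $A\ket{0}$ with eigenphase $0$ and the estimation returns $j=0$ with certainty; $p=1$ forces $\theta=\pi/2$, and when $t$ is even the value $\tilde\theta = \pi/2$ sits exactly on the grid at $j=t/2$, forcing $\tilde{p}=1$. I expect the main obstacle to be the phase-estimation bound with the sharp constant $8/\pi^2$, which is a careful probability estimate on the Fourier amplitudes near the true eigenphase rather than a new idea; everything else is the trigonometric bookkeeping above.
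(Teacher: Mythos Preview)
Your proposal is correct and reproduces the standard Brassard--H{\o}yer--Mosca--Tapp argument: parametrize $p=\sin^2\theta$, run phase estimation on the Grover iterate $Q$, invoke the Dirichlet-kernel bound for the $8/\pi^2$ success probability, and push the phase error through $\sin^2 a - \sin^2 b = \sin(a+b)\sin(a-b)$. Note, however, that the paper does not supply its own proof of this theorem at all; it is merely quoted from \cite{BHMT00} as a known preliminary result, so there is no ``paper's proof'' to compare against---your sketch is essentially the original source argument.
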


\begin{defn}[Quantum multi-vector access]\label{def:sampling_oracle}
Let $A \in \mathbb{R}^{m \times n}$.
We say that we have quantum multi-vector access to $A$ if we have access to a circuit that performs the mappings $\ket{i} \ket{0} \mapsto \ket{i}\ket{{A}_{i, \cdot}} = \ket{i}\frac{1}{\norm{{A}_{i,\cdot}}}\sum_{j=1}^n A_{ij}\ket{j}$, for all $i \in[m]$, and $\ket{0} \mapsto \frac{1}{\norm{A}_F}\sum_{i=1}^m \norm{{A}_{i,\cdot}} \ket{i}$. The cost to create the data structure needed to perform this mapping is $\Ord{mn\ {\rm poly} \log(mn)}$. If we update one entry of matrix $A$, the cost of updating the data structure is $\Ord{{\rm poly} \log(mn)}$. 
Building the data structure will also output $\|A\|_F$. 
The time needed for a query is $\Ord{{\rm poly} \log(mn)}$.
\end{defn}

\section{Conversion of primal pricing problem into the dual pricing problem}
\label{sec:conversionprimaldual}

We show how to convert the primal problem into the dual problem. We follow the standard approach of \cite{Boyd2004}. Recall that our problem has $\bm D \in \mathbbm R_+^K$, $S \in \mathbbm R_+^{N+1 \times K}$, and $ \bm{\Pi} \in \mathbbm R_+^{N+1}$. We assumed that there is an asset which is non-random. Thus, the first row of the matrix $S$ is composed of $1$s, and we use this to enforce a normalization constraint on $\bm q$. 
We work with the maximization LP in Eq.~(\ref{eq:simplexonthis}), which we express in a minimization form by multiplying the objective function by $-1$, as
\begin{equation}
  \begin{array}{ll@{}ll}
  \min\limits_{ \bm q \in \mathbbm R^K}  & -{\bm D}^T \bm q  &\\
  \text{s.t.}&  \bm q \geq 0 \\
    & S{\bm q} = {\Pi}. & &
\end{array}
\end{equation}
Let $\bm \lambda \geq 0$ be the KKT multipliers associated to the inequality constraints, and $\bm \xi$ the Lagrange multipliers associated to the equality constraints, the Lagrange function of this problem is
\be
\mathcal L( \bm q, \bm \lambda, \bm \xi) = - \bm{D}^T \bm{q} -\bm\lambda^T  \bm q + \xi^T(S \bm q -  \bm \Pi),
\ee
and the dual function is by definition
\be
g(\bm \lambda, \bm \xi) := \inf_{\bm q} \mathcal{L} \left(\bm q, \bm \lambda, \bm \xi \right)  = -\bm \xi^T\bm \Pi + \inf_{\bm q} (- \bm D^T - \bm \lambda^T + \xi^TS)\bm q.
\ee
The Lagrange dual problem of an LP in standard form is a maximization problem of the dual function, constrained on $\bm \lambda \geq 0$. 
\begin{equation}
  \begin{array}{ll@{}ll}
  \max\limits_{ \bm \xi \in \mathbbm R^{N+1}}  & g(\bm \lambda, \bm \xi)  &\\
  \text{s.t.}&  \bm \lambda \geq 0.
  \end{array}
  \end{equation}
By inspection of $g(\bm \lambda, \bm \xi)$, we can determine that it has value either $-\bm \xi^T\bm \Pi$ (when $ (- \bm D^T - \bm \lambda^T + \xi^TS) =0$) or $-\infty$, as in the latter case $\inf_{\bm q} (- \bm D^T - \bm \lambda^T + \xi^TS)\bm q$ is not bounded from below. Thus, the problem is non-trivial only in the first case, which we can obtain by embedding the condition of $- \bm D - \bm \lambda + S^T\xi = 0$ in the constraints of the linear program as
\begin{equation}
  \begin{array}{ll@{}ll}
  \max\limits_{\bm \xi \in \mathbbm R^{N+1}}  & - \bm \xi^T \bm \Pi  &\\
  \text{s.t.}&  \bm \lambda \geq 0 \\
  & S^T\bm \xi -\bm D - \bm \lambda = 0.
  \end{array}
\end{equation}
We can remove the inequality constraint and the sign from the maximization function, obtaining the minimization problem
\begin{equation}
  \begin{array}{ll@{}ll}
  \min\limits_{ \bm \xi \in \mathbbm R^N}  & \bm \xi^T \bm \Pi  &\\
  \text{s.t.}&   S^T\bm \xi -\bm D \geq 0.  \\
  \end{array}
\end{equation}
Hence, we have derived the dual given in Eq.~(\ref{eqDualMin}) of the pricing maximization problem given in Eq.~(\ref{eq:simplexonthis}).

\section{Numerical experiments of LP pricing in the BSM model}\label{app:BSMRadon}

In this section we present experiments of derivative pricing in the BSM model with a linear program \footnote{The code is available on \url{https://github.com/Scinawa/QLP-pricing}}. The linear program that we presented in Eq.~(\ref{eqRadonLP}) optimizes over the set of measure changes
\be
\mathcal M := \left \{ \bm x \in \mathbbm R^K \right \},
\ee
with the additional constraints given by the linear program.
On the other hand, the Black-Scholes formalism uses a particular measure from the one-parameter set of Gaussian measure changes (see Lemma~\ref{lem:grisanovgaussian})
\be \label{eqMBS}
\mathcal M_{\rm BS} := \left \{\bm x(\theta) \in \mathbbm R^K\ \big |\ x_\omega(\theta) = \exp\left (-\theta  B(\omega) - \frac{1}{2} \theta^2\right), \theta \in \mathbbm R, \omega \in[K] \right \}.
\ee
Optimizing over $\mathcal M$ finds a much larger range of prices, as the set $\mathcal M_{\rm BS}$ is of course much smaller than $\mathcal M$.  Via a regularization technique, we can find measures that are closer to the set $\mathcal M_{\rm BS}$. Specifically, we add a first derivative constraint. 
Note that 
for the Girsanov measure change it holds that 
\be \label{eqMReg}
\frac{x_{\omega+1} - x_{\omega-1}}{2 \Delta} \approx -\theta x_\omega.
\ee
This constraint allows to optimize over the set
\be
\mathcal M_{\rm reg}(\eta) := \left \{\bm x(\theta) \in \mathbbm R^K\ \big |\ \left\vert x_\omega -  x_{\omega+1}\right \vert \leq \eta x_\omega, \forall \omega \in \Omega \setminus \{K_0\} \right \},
\ee
where $\eta >0$. In the experiments as assume a payoff matrix of size $S\in \mathbb{R}^{2 \times 100}$, i.e. $N=1$ and $K=100$, with the first row representing risk-less asset with payoff $1$ in all the events. In our plots, we start with fixing the parameters $\Pi=10$, drift $\mu=1$ (and $\mu=0$ for Figure~\ref{figB:muandsigma}), volatility $\sigma=1$ and strike price for the derivative of $Z=10$.  In most of the plots, the analytic value of the derivative is always within the range of the admissible prices of the minimization and maximization problem. In Figure~\ref{figA:regularization} we test different value for the regularization parameter. In picture (a) we find that the regularization parameter in the range of $\eta \in \{0.2, 0.5, 1\}$ allows to get close to the analytic solution of the Radon-Nikodym derivative. In figure (b) we scan the value of the regularization parameter for $\eta \in [0.001, 10]$. We observe that in our example, changing $\eta$ makes the range of admissible prices within $\pm 1\%$ of the analytic price. Next, we turn our attention to Figure~\ref{figB:muandsigma}, i.e. the changes in the drift ($\mu$) and volatility ($\sigma$)  of Eq.~(\ref{eqBSMdiscrete}) in the main text.  We observe that changing the drift does not influence the analytic price of the derivative. This is because the change of measure makes the price of the call option independent of $\mu$. After the value of $\mu$ for which the price of the minimization and maximization program coincide, the two LPs start to have no feasible solutions. Scanning the volatility parameter does not influence the range of admissible prices, no matter which regularization constraint we use. In the last plots of Figure~\ref{figC:strikestock} we plot the value of the derivative while scanning the value of the strike price (in figure (a)) and the stock price (in figure (b)). The analytic price is well approximated by the minimum and the maximum price obtained by the solution of the two LP programs. 

Concluding, it is possible to obtain estimates for the range of admissible prices using the linear programs for most of the ranges of parameters. 
\begin{figure}
\centering
\begin{subfigure}[b]{0.49\textwidth}
\centering
\includegraphics[width=\textwidth]{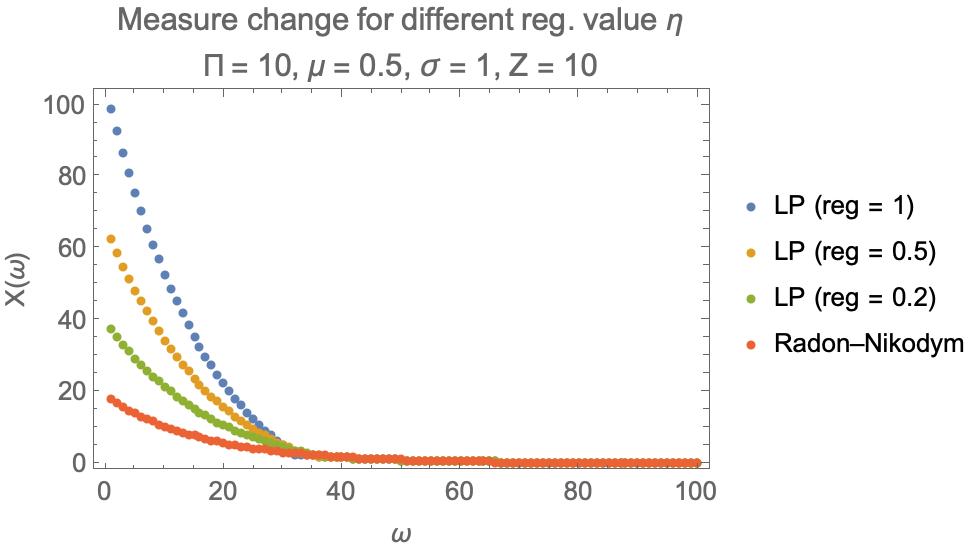}
\label{figA1}
\end{subfigure}
\begin{subfigure}[b]{0.49\textwidth}
\centering
\includegraphics[width=\textwidth]{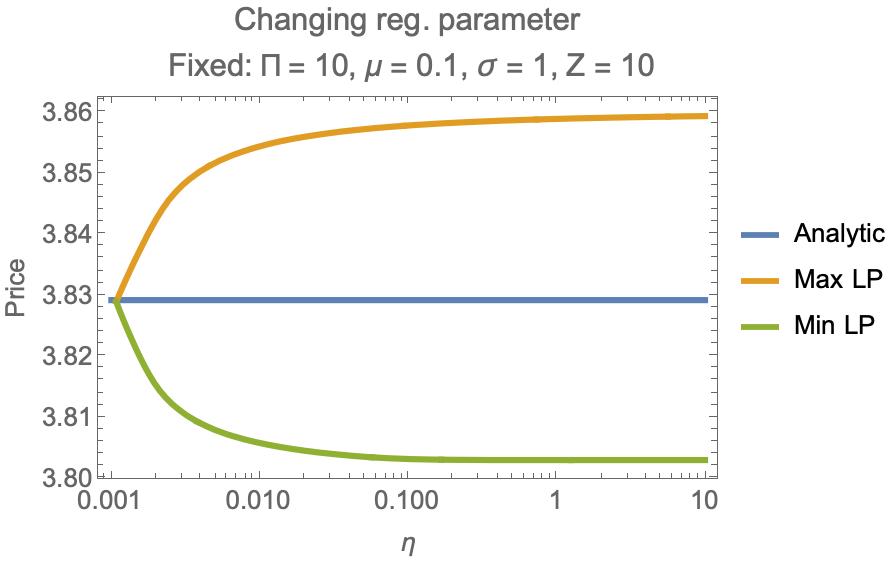}
\label{figA2}
\end{subfigure}
\caption{Changing regularization parameters. 
(Left panel) The Radon-Nikodym derivative obtained analytically is compared to the solution vector obtained from solving the linear program.
(Right panel) The regularization parameter constrains the solution space and hence limits the admissible prices of the derivative.}
\label{figA:regularization}
\end{figure}

\begin{figure}
\centering
\begin{subfigure}[b]{0.44\textwidth}
\centering
\includegraphics[width=\textwidth]{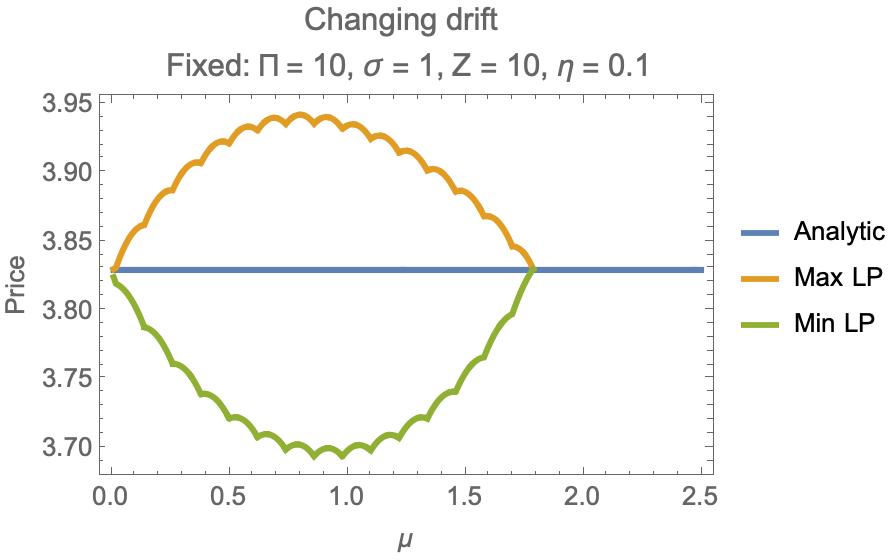}
\label{figB1}
\end{subfigure}
\hfill
\begin{subfigure}[b]{0.44\textwidth}
\centering
\includegraphics[width=\textwidth]{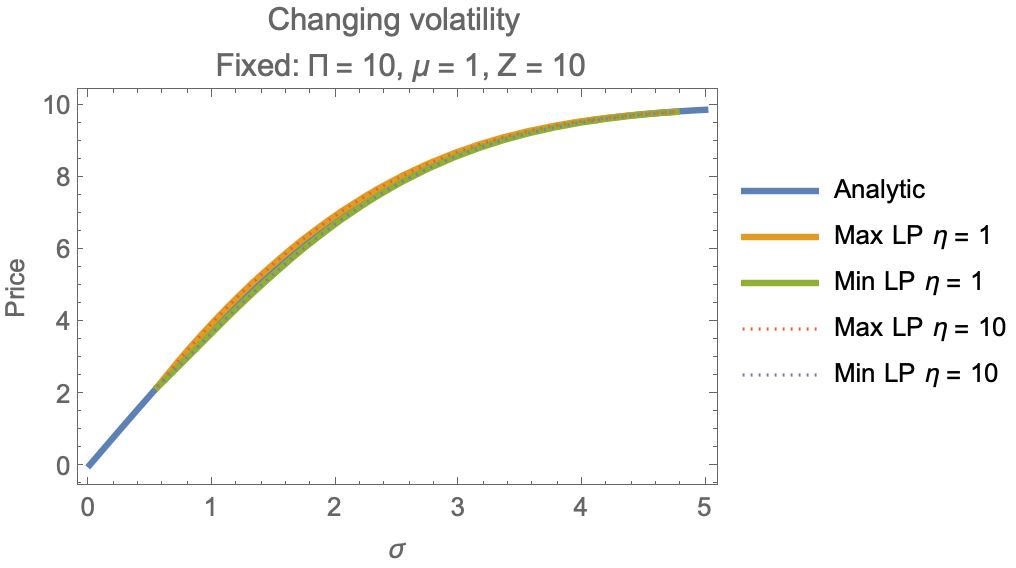}   
\label{figB2}
\end{subfigure}
\caption{Range of admissible prices changing $\mu$ and $\sigma$. (Left panel) The change in admissible prices while changing the drift parameters $\mu$. Note that the drift does not change the analytical value of the derivative.
(Right panel) The change in admissible prices while changing the volatility parameters $\sigma$.}
\label{figB:muandsigma}
\end{figure}

\begin{figure}
\centering
\begin{subfigure}[b]{0.44\textwidth}
\centering
\includegraphics[width=\textwidth]{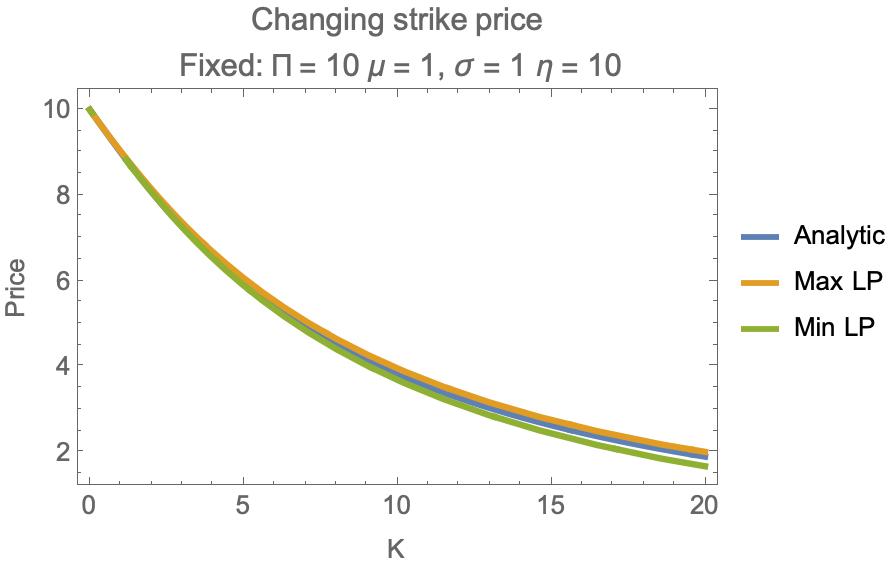}
\label{fig1:y equals x}
\end{subfigure}
\hfill
\begin{subfigure}[b]{0.44\textwidth}
\centering
\includegraphics[width=\textwidth]{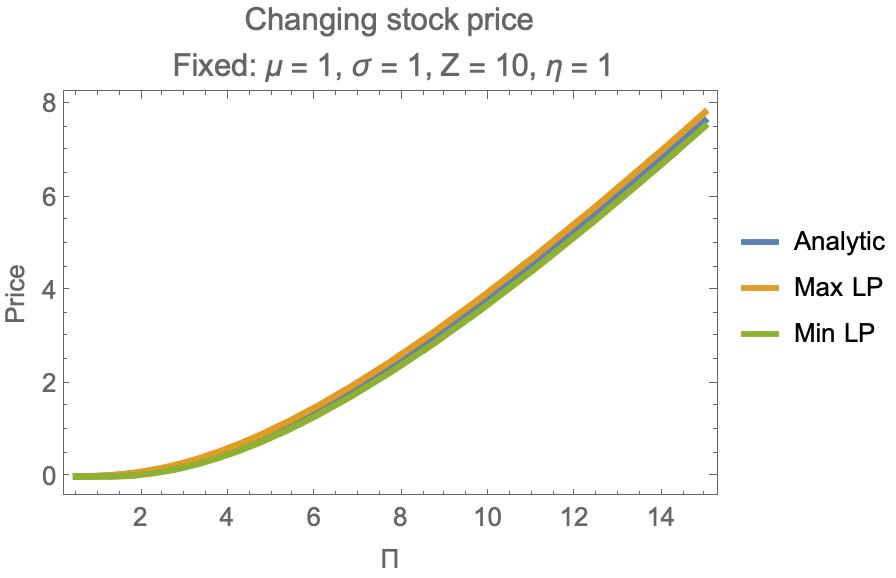}

\label{fig2:three sin x}
\end{subfigure}
\caption{
In the left panel, changes of price of the derivative while changing the strike price. In the right panel, 
changes of value of the derivative while changing the stock price.
}
\label{figC:strikestock}
\end{figure}

\section{Quantum simplex algorithm for risk-neutral pricing}\label{sec:simplexpricing}
The simplex algorithm is often the algorithm of choice for solving linear programs in many practical applications. A quantum analog of the simplex algorithm has recently been proposed \cite{nannicini2019fast,nannicini2021fastPUBLISHED}, and we can study its runtime on the linear program in Eq.~(\ref{eq:simplexonthis}). 

We recall that the standard assumption of the simplex method is that the number of constraints is smaller than the number of variables, i.e. the matrix describing the linear program is full row rank. This restricts the applicability of the simplex algorithm to incomplete markets. The simplex algorithm has exponential run time in the worst case. While it is well-known that the theoretical worst-case number of iterations grows exponentially in the number of variables, $\Ord{2^{K}}$ in our case, in practical instances the run time is much more favorable. 
The quantum simplex algorithm is expected to inherit the good properties of the classical algorithm. 

A feature of this quantum algorithm is that all the subroutines of the algorithm output just a single number. Framing the algorithm in this way, it is possible to skip expensive tomography operations that would come from a straightforward quantization of the simplex algorithm.
A rigorous but succinct description of the quantum procedures for the simplex algorithm can be found below, which  describes a formalization of the work of \cite{nannicini2019fast, nannicini2021fastPUBLISHED}. The quantum simplex algorithm can operate with two different quantum access to the linear program. The first is the quantum multi-vector access (Definition~\ref{def:sampling_oracle}), and the second is the adjacency array model (Definition~\ref{oracle:adarray}).

In the case where the matrix of the LP and the vectors can be accessed through quantum multi-vector access, the run time of these subroutines can be substantially improved compared to the case when we have only query access to the vectors and matrices describing the problem. Based on \cite{nannicini2019fast, nannicini2021fastPUBLISHED}, we report here the run time of the quantum simplex algorithm for our problem. In the language of the simplex algorithm, a basis is a set of linearly independent columns of $S$ from which we can obtain the solution to our optimization problem, as we show in Theorem~\ref{thm:pricingsimplex-pricingandextracting}.

 We need to introduce more notation to understand Algorithm \ref{algNANNI} and Theorem \ref{thm:nannicini}. We define $d_c(A)$ the maximum number of nonzero elements of the column of a matrix $A$, $d_r(A)$ the maximum number of nonzero elements of the rows of $A$, and $d(A)$ the maximum between the two. Consider a matrix $A \in \mathbb{R}^{n \times m}$. As standard in simplex literature, we denote with $B \subset \{1, \dots, n \}$ a set of size $m$ of linearly independent column of $A$.  With $A_B$ we denote the square invertible matrix composed by set of $m$ linearly independent column of $A$ indexed by the set $B$. Likewise, $A_N$ is the remaining set of column of $A$ that are not in $B$. With $T_{LS}(M, b, \epsilon)$ we denote the runtime of the subroutine for solving linear system of equation on a quantum computer. The runtime of this subroutine is dependent on the access model on the matrix $M$ and the vector $b$. In the rest of this section we better formalize the result of \cite{nannicini2019fast, nannicini2021fastPUBLISHED} so to ease the comparison with the zero-sum games approach. Recall that the fastest classical algorithm for a single iteration of the simplex (assuming we use the matrix multiplication algorithm of \cite{yuster2005fast}) has a cost of $O(d_c^{0.7}m^{1.9} + m^{2+o(1)} + d_cn)$ per iteration (which is dominated by the so-called pricing step of the classical algorithm, which we define below). In \cite{nannicini2019fast}, there are also quantum subroutines for implementing some heuristics (like Steepest Edge rule for column generation) that are usually employed in classical simplex method, which we do not discuss here. As for the classical case, we are sometimes willing to spend more time for getting a better suggestion for the column that can enter the basis, if we can decrease the number of iterations of the whole algorithm. We report definition of the adjacency array model below, while the definition of the multi-vector access is reported in Definition~\ref{def:sampling_oracle} in the Appendix. The adjacency array model is used to access sparse matrices. While the payoff matrices obtained from portfolios of stocks might are not expected to be sparse, payoff matrices obtained from portfolios of certain class of derivatives might me sparse.

 \begin{defn}[Adjacency array model \cite{Childs2017linear}] \label{oracle:adarray}
We say that we have quantum access in the adjacency array model for a matrix $V \in \mathbb{R}^{n \times d}$ if there is an oracle that allows to perform the mappings: 
\begin{itemize}
\item   $\ket{i}\mapsto\ket{i}\ket{s(i)}$ where $s(i)$ is the number of entries in row $i$, 
\item $\ket{i,l}\mapsto\ket{i, \nu(i,l)}$, where $\nu(i,l)$ is the $l$-th nonzero entry of the $i$-th row of $V$, and
\item $\ket{i,j}\ket{i,j,V_{ij}}$ as in the standard query model. 
\end{itemize}
\end{defn}

\begin{algorithm}[H]
  \caption{Quantum simplex iteration}
    \label{algNANNI}
  \begin{algorithmic}[1]
    \Require{Query access to matrix $A$, basis $B$, cost vector $c$, precision parameters $\epsilon, \delta, t$.}
\Ensure{A flag ``optimal'', a flag ``unbounded'', or a pair (k, $\ell$) where $k$ is a nonbasic variable with negative reduced cost, $\ell$ is the basic variable that should leave the basis if $k$ enters}. 
\State Normalize vector of reduced cost $c$ so that $\|c_B\|=1$. Normalize $A$ so that $\|A_B\|\leq 1$.  
\State Apply $IsOptimal(A,B,\epsilon)$ to determine if the current basis is optimal. If so, return ``optimal''
\State Apply $FindColumn(A,B,c,\epsilon)$ to determine a column with negative reduced cost. Let $k$ be the column index returned by the algorithm. Then $k$ is the pivot column. 
    \State Apply $IsUnbounded(A_B, A_k, \delta)$ to determine if the problem is unbounded. If so, return "unbounded".
    \State Apply $FindRow(A_B, A_k, b, \delta, t)$ to determine the index $\ell$ of the row that minimies the ratio test. $\ell$ is the pivot row. 
    \State Update the basis $B \leftarrow (B \symbol{92} \{B(\ell)\} \cup \{k\}  )$
  \end{algorithmic}
\end{algorithm}

This algorithm has improved runtimes, in the stronger access model described in Definition~\ref{def:sampling_oracle}. 

\begin{theorem}[Fast quantum subroutine for simplex \cite{nannicini2019fast, nannicini2021fastPUBLISHED}]\label{thm:nannicini}
Let $\epsilon, \delta, t > 0$ be precision parameters.
Let $A \in \mathbb{R}^{m \times n}$ the constraint matrix, $c \in \mathbb{R}^{n}$ the cost vector, $b \in  \mathbb{R}^{m}$ be an LP in the following form
\begin{equation}
  \begin{array}{ll@{}ll}
  \max\limits_{\bm x}  & \bm{c}^T \bm{x}  &\\
  \text{s.t.}& A\bm{x} = \bm{b}\\
    & \bm x \geq 0. & &
\end{array}
  \end{equation}

Let $A_B \in \mathbb{R}^{m \times m}$ be a square invertible matrix obtained from a base $B \subseteq [n]$ of columns of $A$, and $A_k$ any of the columns of $A$. The quantum subroutines in Algorithm~\ref{algNANNI} can:
\begin{itemize}
  \item find a column with negative reduced cost ($FindColumn$), and identify if a basis is optimal ($IsOptimal$) has gate complexity $\tOrd{\frac{\sqrt{n}}{\epsilon}T_{LS}(A_B, A_k, \frac{\epsilon}{2}) + m}$ where $A_k$ is any column of $A$: \cite[Proposition 1, Proof of Theorem 5, Section 5.3]{nannicini2019fast}
  \begin{itemize} 
  \item with Definition~\ref{oracle:adarray}, the time complexity is 
  $\tOrd{\frac{1}{\epsilon}\kappa(A_B)d(A_B)\sqrt{n}(d_c(A) n + d(A_B)m)}$,
  \item with Definition~\ref{def:sampling_oracle}, the time complexity is $\tOrd{\frac{\sqrt{n}}{\epsilon}(\kappa(A_B)\mu(A_B)+m)}$    \end{itemize}
\item identify if a nonbasic column proves unboundedness ($IsUnbounded$) in gate complexity $\tOrd{\frac{\sqrt{m}}{\delta}T_{LS}(A_B, A_k, \delta)}$: \cite[Proposition 8, Theorem 2]{nannicini2019fast}
\begin{itemize} \item with Definition~\ref{oracle:adarray}, the time complexity is $\tOrd{\frac{\sqrt{m}}{\delta}(\kappa(A)d^2(A)m)}$, \item  with Definition
\ref{def:sampling_oracle}, the time complexity is $\tOrd{\frac{t}{\delta}\kappa(A)m}$;
\end{itemize}
\item perform the ratio test ($FindRow$) in gate complexity $\tOrd{\frac{t\sqrt{m}}{\delta}\left(T_{LS}(A_B, A_B, \frac{\delta}{16}) + T_{LS}(A_B, b, \frac{\delta}{16})\right)}$: \cite[Section 6, Theorem 2]{nannicini2019fast}.
\begin{itemize}
    \item with Definition~\ref{oracle:adarray} the time complexity of this step is $\tOrd{\frac{t}{\delta}k(A_B)d^2m^{1.5}}$, 
    \item with Definition~\ref{def:sampling_oracle}, the time complexity is $\tOrd{\frac{t\sqrt{m}}{\delta}\kappa(A_B)\mu(A_B) }$.
    
\end{itemize}
\end{itemize}
In the previous statements, $t$ is a relative error for the precision parameter determining how well in $FindRow$ we approximate the minimum of the ratio test performed by a classical algorithm, $\epsilon$ is the optimality tolerance for the computation of the reduced cost (i.e. $\overline{c}_N^T  = c_N^T - c_B^TA_B^{-1}A_N \geq 0 \geq  -\epsilon$) computed in a single iteration of the algorithm, and $\delta$ is the precision used to determine if the problem is unbounded, (i.e. for a given $k\in[n]$, check if $A_B^{-1}A_k \leq 0 \leq \delta$). In case we use Definition~\ref{def:sampling_oracle}, we have $\widetilde{O}(m)$ classical operations to update the data structure. The normalization steps, and the update to the data structure performed at the beginning of each iteration are dominated by the runtime of the quantum subroutines. The cost of a single iteration is dominated by the pricing step (i.e. $FindColumn$ subroutine). 
\end{theorem}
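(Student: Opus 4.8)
The plan is to establish the theorem by realizing each of the four subroutines invoked in one simplex iteration (Algorithm~\ref{algNANNI}) as a short pipeline: a quantum linear systems solve against the basis matrix $A_B$, followed by a quantum estimation step (amplitude estimation, Theorem~\ref{thm:ampest}) and a Grover-type search or approximate minimum-finding over the $n$ columns or $m$ rows. I would follow \cite{nannicini2019fast, nannicini2021fastPUBLISHED} closely; the content of the statement here is really a repackaging of their bounds in the notation of this paper and a specialization to the two input models of Definitions~\ref{oracle:adarray} and \ref{def:sampling_oracle}. First I would dispose of the normalization step: rescaling $c$ so that $\|c_B\|=1$ and $A$ so that $\|A_B\|\leq 1$ costs $\tOrd{m}$ classical operations, or $\tOrd{m}$ updates to the data structure of Definition~\ref{def:sampling_oracle}, which is dominated by the quantum work; this is what puts the linear systems $A_B \bm x = (\cdot)$ on the scale where the stated condition number $\kappa(A_B)$ and the additive precisions $\epsilon,\delta,t$ are meaningful.

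Next I would handle $IsOptimal$ and $FindColumn$ together, since both hinge on the reduced cost $\overline{c}_k = c_k - c_B^{T} A_B^{-1} A_k$ of a nonbasic column $k$. Using a block-encoding of $A_B^{-1}$ (obtained by inverting a block-encoding of $A_B$ with the machinery underlying Theorem~\ref{cor:pseudoinv-vtaa}) together with amplitude estimation on the relevant inner product, one computes $\overline{c}_k$ to additive precision $\epsilon/2$ in $\tOrd{\tfrac1\epsilon\, T_{LS}(A_B,A_k,\epsilon/2)}$ gates; wrapping this in an approximate minimum-finding over $k\in[n]$ contributes the $\sqrt n$, and the state preparation of $A_k$ plus classical bookkeeping contributes the additive $m$, giving $\tOrd{\tfrac{\sqrt n}{\epsilon}T_{LS}(A_B,A_k,\epsilon/2)+m}$. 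Substituting the sparse linear-systems cost of \cite{Childs2017linear}, namely $T_{LS}=\tOrd{\kappa(A_B)d(A_B)\big(d_c(A)n+d(A_B)m\big)}$, yields the adjacency-array bound, and substituting the block-encoding solve $T_{LS}=\tOrd{\kappa(A_B)\mu(A_B)+m}$ yields the multi-vector bound.

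Then I would treat $IsUnbounded$: the LP is unbounded along the entering column $k$ iff $A_B^{-1}A_k\leq 0$ componentwise, so one prepares a state proportional to $A_B^{-1}A_k$ and searches its $m$ coordinates for a strictly positive entry to precision $\delta$, at cost $\tOrd{\tfrac{\sqrt m}{\delta}T_{LS}(A_B,A_k,\delta)}$, with the two access models plugged in as above. Finally $FindRow$: the pivot row is $\ell=\arg\min_i\{(A_B^{-1}b)_i/(A_B^{-1}A_k)_i:(A_B^{-1}A_k)_i>0\}$, requiring the two linear solves against $b$ and against $A_B$ (the latter supplying the normalization reference needed to compare the estimated ratios), an estimate of each ratio, and an approximate quantum minimum-finding over the $m$ rows to relative error $t$, giving $\tOrd{\tfrac{t\sqrt m}{\delta}\big(T_{LS}(A_B,A_B,\delta/16)+T_{LS}(A_B,b,\delta/16)\big)}$ and its two specializations. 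The claim that the normalization and data-structure updates are dominated, and that the pricing step ($FindColumn$) dominates the iteration, then follows by comparing the four expressions.

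The main obstacle, and where I expect to spend essentially all the real work, is the error analysis. The linear-systems solver returns only a state $\epsilon$-close to $A_B^{-1}\ket{\cdot}/\|A_B^{-1}\ket{\cdot}\|$, and this error composes multiplicatively with the amplitude-estimation error and with the failure probabilities of the search primitives, which then have to be union-bounded over the $n$ (resp.\ $m$) items being searched; variable-time amplitude amplification inside the solver adds its own bookkeeping. One must verify that with the internal precisions chosen ($\epsilon/2$ for the reduced costs, $\delta/16$ for the ratio-test solves, and so on) the \emph{combinatorial} output of the iteration---which column enters, which row leaves, whether the basis is optimal or the problem unbounded---is certified correctly in the precise sense stated at the end of the theorem (e.g.\ $\overline{c}_N\geq -\epsilon$, or $(A_B^{-1}A_k)_i\leq\delta$), so that the quantum iteration mimics an exact simplex step up to the declared tolerances. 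A secondary subtlety worth flagging is that $\kappa(A_B)$, $\mu(A_B)$ and the sparsity parameters change from iteration to iteration, so these are genuinely per-iteration bounds; bounding them uniformly across a full run is outside the scope of the statement.
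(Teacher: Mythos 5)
Your outline matches the approach this result actually rests on: the paper gives no proof of Theorem~\ref{thm:nannicini}, importing it verbatim (with per-item citations) from \cite{nannicini2019fast,nannicini2021fastPUBLISHED}, and your decomposition of each subroutine into a linear-system solve against $A_B$, amplitude estimation, and a Grover-type search or approximate minimum-finding over the $n$ columns or $m$ rows, with the dominant cost sitting in $FindColumn$ and the real work in the composed error/failure-probability analysis, is precisely how those cited results are derived. There is no gap beyond the (explicitly acknowledged) deferral of that detailed error bookkeeping to the original references, which is also all the paper itself does.
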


\begin{theorem}[Quantum simplex algorithm for martingale pricing]\label{thm:pricingsimplex}
Let $(\bm \Pi, S)$ be a price system, for a matrix $S \in \mathbb{R}^{N \times K}$, and $\bm \Pi \in \mathbb{R}^{N}$, and let $D$ be a derivative. 
Let $\epsilon_{\rm{smpx}} \in (0, 1)$ a problem-dependent precision parameter. There is an iterative quantum algorithm that finds a matrix $S_B \in \mathbbm R^{N\times N}$, with quantum access to $S$ and $\bm \Pi$ in the same number of iterations of the classical simplex algorithm. 
  \begin{itemize} 
  \item Assuming the adjacency array model (Definition~\ref{oracle:adarray}), an iteration of the algorithm has gate complexity of 
  
  $\tOrd{\frac{1}{\epsilon_{\rm{smpx}}}\kappa(S_B)d(S_B)\sqrt{K}\left(d_c(S) K + d(S_B)N\right)}$, and $O(K)$ classical operations.
  \item Assuming quantum multi-vector access (Definition~\ref{def:sampling_oracle}), the gate complexity is $\tOrd{\frac{\sqrt{K}}{\epsilon_{\rm{smpx}}}\left(\kappa(S_B)\|S_B\|_F+N\right)}$ per iteration, and $O(Nd(S)+K)$ classical operations, 
  \end{itemize}
where $S_B$ is an optimal basis formed with linearly independent columns of $S$ chosen during the execution of the algorithm, $d(S)$ is the sparsity of the matrix $S$, and $d_c(S)$ is the maximum number of nonzero entries in any column of $S$. 
\end{theorem}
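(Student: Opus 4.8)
The plan is to observe that the pricing LP of Eq.~(\ref{eq:simplexonthis}) is already in exactly the equality-constrained standard form assumed by the quantum simplex subroutines of Theorem~\ref{thm:nannicini}, and then to instantiate that theorem under the dictionary $A = S$, $\bm c = \bm D$, $\bm b = \bm \Pi$, $\bm x = \bm q$, so that the number of constraints is $m = N$ and the number of variables is $n = K$. The first thing I would record is that the standing incompleteness assumption --- under which $S$ has full row rank and $N < K$ --- is precisely the applicability hypothesis of the simplex method, since a basis is a choice of $N$ linearly independent columns of $S$. Note that, unlike the zero-sum game reduction, no rescaling to the $[-1,1]$ range (Lemma~\ref{lem:stdformLP}) is required here: the quantum iteration of Algorithm~\ref{algNANNI} performs its own internal normalization (line~1) so that $\|c_B\| = 1$ and $\|A_B\| \leq 1$, and then works directly with Eq.~(\ref{eq:simplexonthis}).

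Next I would argue that the quantum algorithm runs for the same number of iterations as the classical simplex method. The quantum iteration reproduces a classical iteration step by step: $IsOptimal$/$FindColumn$ implement the pricing step (selecting a nonbasic column of negative reduced cost), $IsUnbounded$ detects unboundedness, and $FindRow$ performs the ratio test. With the precision parameters $\epsilon_{\rm smpx}, \delta, t$ chosen as in \cite{nannicini2019fast, nannicini2021fastPUBLISHED}, these subroutines return the same pivot decisions that an exact classical implementation of the same pivoting rule would make, up to a slightly perturbed optimality tolerance. Hence the sequence of visited bases, and in particular their number, coincides with that of the corresponding classical run, and the optimal basis it produces is the matrix $S_B \in \mathbb{R}^{N \times N}$ of the statement. (Once $S_B$ is in hand, the optimal basic solution $\bm q_B = S_B^{-1}\bm \Pi$ and the price $\bm D_B^T \bm q_B$ follow from one further linear-system solve; this final extraction is treated separately and is not part of the per-iteration cost bounded here.)

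It then remains to read off the per-iteration cost from Theorem~\ref{thm:nannicini} with $m = N$, $n = K$, $A = S$, $A_B = S_B$. As in the classical case, the dominant subroutine is the pricing step $FindColumn$, so in the adjacency array model (Definition~\ref{oracle:adarray}) an iteration costs $\tOrd{\frac{1}{\epsilon_{\rm smpx}}\kappa(S_B)d(S_B)\sqrt{K}\big(d_c(S)K + d(S_B)N\big)}$ gates, and in the quantum multi-vector access model (Definition~\ref{def:sampling_oracle}) it costs $\tOrd{\frac{\sqrt{K}}{\epsilon_{\rm smpx}}\big(\kappa(S_B)\mu(S_B) + N\big)}$; here I would use that the data-structure access of Definition~\ref{def:sampling_oracle} realizes a block encoding with normalization $\mu(S_B) = \|S_B\|_F$, yielding the stated $\tOrd{\frac{\sqrt{K}}{\epsilon_{\rm smpx}}\big(\kappa(S_B)\|S_B\|_F + N\big)}$. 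The accompanying classical bookkeeping --- maintaining the current solution vector of length $K$ and, in the data-structure model, refreshing the stored representation of the new basis column (at most $d(S)$ nonzeros over $N$ columns) --- accounts for the $O(K)$ and $O(Nd(S)+K)$ classical operations, respectively, both dominated by the quantum cost.

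The main obstacle is conceptual rather than computational: making the phrase ``the same number of iterations as the classical simplex algorithm'' rigorous, since the quantum subroutines are inherently approximate and could a priori select a different pivot and thereby change the iteration path. The resolution is to invoke the faithfulness guarantee of \cite{nannicini2019fast}, namely that with the precision parameters scaled appropriately --- which is exactly what the problem-dependent $\epsilon_{\rm smpx}$ absorbs --- each quantum iteration agrees with a classical iteration under a (slightly perturbed) pivoting rule, so the iteration count does not diverge. A secondary point demanding care is the index bookkeeping ($m = N$, $n = K$) under the convention $S \in \mathbb{R}^{N \times K}$ used in this statement, together with the identification $\mu(S_B) = \|S_B\|_F$ valid in the data-structure access model.
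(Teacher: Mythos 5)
Your proposal is correct and matches the paper's (implicit) argument: the paper gives no separate proof of this theorem, treating it exactly as you do — as a direct instantiation of Theorem~\ref{thm:nannicini} on the LP of Eq.~(\ref{eq:simplexonthis}) with $A=S$, $\bm b=\bm\Pi$, $\bm c=\bm D$, $m=N$, $n=K$, with $\mu(S_B)$ realized as $\Vert S_B\Vert_F$ in the multi-vector access model and the pivot extraction deferred to Theorem~\ref{thm:pricingsimplex-pricingandextracting}. Your discussion of the iteration count via the faithfulness of the quantum subroutines is as rigorous as the paper's own assertion.
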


Once a basis is selected, we can obtain the martingale measure as $\bm q = S_B^{-1}\bm \Pi$. Thus, we can use a quantum algorithm to recover an estimate of $\bm q$, that we can use for pricing other derivatives. 

\begin{theorem}\label{thm:pricingsimplex-pricingandextracting}
In the same setting of Theorem~\ref{thm:pricingsimplex}, assume you have found a basis $S_B$, and assume to have quantum multi-vector access to it. For $\epsilon > 0$, there is a quantum algorithm that with high probability estimates the price of a derivative $D$ with relative error $\epsilon$ in  $\tOrd{\frac{\kappa(S_B)\|S_B\|_F}{\epsilon}\frac{\| \bm D\|_2}{\Pi_{D, \bm q^+} } }$ queries to the oracle, and returns an estimate of $\bm q$ in time $\tOrd {K\frac{ \kappa(S)\|S_B\|_F}{\epsilon^2}\frac{\| \bm D\|^2_2}{\Pi^2_{D, \bm q^+} } }$.
\end{theorem}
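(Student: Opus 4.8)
The plan is to transcribe the proofs of Theorem~\ref{thm:pricingqla} and Theorem~\ref{thm:estimatingqqla} to the simplex setting, exploiting the fact that a simplex basis $S_B$ is square and invertible: then $S_B^{+}=S_B^{-1}$, the column space of $S_B$ is all of $\mathbbm R^{N}$, and the ``incompleteness'' parameter $\gamma$ of Theorem~\ref{cor:pseudoinv-vtaa} equals $1$, which is exactly why the $\sqrt\gamma$ factor disappears from the run times.

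\textbf{Set-up.} Recall that a basis $B\subseteq[K]$ of $N$ linearly independent columns of $S$ yields the basic feasible solution $\bm q$ with $\bm q_B = S_B^{-1}\bm{\Pi}$ and $\bm q_j=0$ for $j\notin B$; this is the risk-neutral measure returned by the algorithm of Theorem~\ref{thm:pricingsimplex}, and it plays the role of $\bm q^{+}$ in the statement. Writing $\bm D_B$ for the restriction of the derivative vector to the basic indices, the price decomposes, exactly as in the proof of Theorem~\ref{thm:pricingqla}, as
\begin{equation}
\Pi_{D,\bm q} = \bm D^T\bm q = \bm D_B^T S_B^{-1}\bm{\Pi} = \|\bm{\Pi}\|_2\,\big\|S_B^{-1}\ket{\bm{\Pi}}\big\|_2\,\|\bm D_B\|_2\,\braket{\bm q\,|\,\bm D_B},
\end{equation}
where $\ket{\bm q}:=S_B^{-1}\ket{\bm{\Pi}}/\|S_B^{-1}\ket{\bm{\Pi}}\|_2$. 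Using the symmetrized dilation of $S_B$ (the standard trick of Section~\ref{appendix:potentialQLA}) and, after a harmless rescaling so that $\|S_B\|_2\le1$ and $\kappa(S_B)\ge2$, the multi-vector access of Definition~\ref{def:sampling_oracle} provides a $(\|S_B\|_F,0)$-block-encoding suitable for Theorem~\ref{cor:pseudoinv-vtaa}; the classical index set $B$ kept by the simplex iteration gives a membership oracle for $B$, hence access to $\ket{\bm D_B}$ and to the embedding of $S_B^{-1}\ket{\bm{\Pi}}$ into the $K$-dimensional event space.

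\textbf{Pricing.} First I would apply Theorem~\ref{cor:pseudoinv-vtaa} with $\kappa=\kappa(S_B)$, $\alpha=\|S_B\|_F$, $\gamma=1$ to prepare a state close to $\ket{\bm q}$ at cost $\tOrd{\kappa(S_B)\|S_B\|_F}$, then Theorem~\ref{cor:qls-vtae} to estimate $\|S_B^{-1}\ket{\bm{\Pi}}\|_2$ with relative error $\epsilon/2$ at cost $\tOrd{\kappa(S_B)\|S_B\|_F/\epsilon}$, and finally amplitude estimation (Theorem~\ref{thm:ampest}) on the Hadamard-test state built from $\ket{\bm q}$ and $\ket{\bm D_B}$ to estimate $\braket{\bm q\,|\,\bm D_B}$ with relative error $\epsilon/2$, costing $\Ord{1/(\epsilon\braket{\bm q\,|\,\bm D_B})}$ preparations. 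Since $\|\bm{\Pi}\|_2$ and $\|\bm D_B\|_2$ are known from Definition~\ref{def:sampling_oracle}, combining the three estimates through the triangle inequality gives a relative-$\epsilon$ estimate of $\Pi_{D,\bm q}$; the inner-product step dominates, giving $\tOrd{\kappa(S_B)\|S_B\|_F/(\epsilon\braket{\bm D_B\,|\,\bm q})}$, and substituting $\braket{\bm D_B\,|\,\bm q}=\Pi_{D,\bm q}/(\|\bm D_B\|_2\|S_B^{-1}\ket{\bm{\Pi}}\|_2\|\bm{\Pi}\|_2)$ together with $\|\bm D_B\|_2\le\|\bm D\|_2$ yields the claimed query count $\tOrd{\tfrac{\kappa(S_B)\|S_B\|_F}{\epsilon}\tfrac{\|\bm D\|_2}{\Pi_{D,\bm q^{+}}}}$. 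The conversion of the additive-error primitives into the relative-error guarantee, and the $\Ord{\log(1/\delta)}$-fold repetition with a union bound for high success probability, are carried out verbatim as in Theorem~\ref{thm:pricingzsg_rel} and Theorem~\ref{thm:pricingqla}.

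\textbf{Extracting the measure.} For the second claim I would run $\ell_2$-state tomography (Theorem~\ref{thm:tomography}) on $\ket{\bm q}$, viewed as a $K$-dimensional vector supported on $B$, to error $\epsilon_2$, at a cost of $\tOrd{K/\epsilon_2^2}$ copies each produced in time $\tOrd{\kappa(S_B)\|S_B\|_F}$, and separately estimate $\|\bm q\|_2$ with relative error $\epsilon_1$; setting $\widetilde{\bm q}:=\widetilde{\|\bm q\|_2}\,\ket{\widetilde{\bm q}}$ and $\epsilon_1=\epsilon_2=\Theta(\braket{\bm D_B\,|\,\bm q}\,\epsilon)$ gives $\|\bm q-\widetilde{\bm q}\|_2\le\|\bm q\|_2\braket{\bm D_B\,|\,\bm q}\epsilon$ by the triangle inequality, whence $|\bm D^T\bm q-\bm D^T\widetilde{\bm q}|\le\|\bm D\|_2\|\bm q-\widetilde{\bm q}\|_2\le\epsilon\,\Pi_{D,\bm q}$ by Cauchy--Schwarz, which is exactly the estimate of Theorem~\ref{thm:estimatingqqla}. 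Collecting costs gives the run time $\tOrd{K\tfrac{\kappa(S_B)\|S_B\|_F}{\epsilon^2}\tfrac{\|\bm D\|_2^2}{\Pi_{D,\bm q^{+}}^2}}$.

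\textbf{Main obstacle.} The genuinely non-routine point is the input-model plumbing around the basis: turning multi-vector access to $S_B$ and $\bm D$ into a block-encoding of the symmetrized $S_B$, and into access both to $\ket{\bm D_B}$ and to the $K$-dimensional embedding of $S_B^{-1}\ket{\bm{\Pi}}$, while respecting that the simplex routine only ever maintained $B$ classically. The rescaling needed to meet the $\|S_B\|_2\le1$, $\kappa(S_B)\ge2$ hypotheses must also be checked to be price-invariant (it is, since $S_B^{-1}\bm{\Pi}$ is scale-free and only $\|S_B\|_F$ rescales). Once this is fixed, everything else is a direct transcription of the proofs of Theorems~\ref{thm:pricingqla} and~\ref{thm:estimatingqqla} under $S\mapsto S_B$, $S^{+}\mapsto S_B^{-1}$, $\gamma\mapsto1$.
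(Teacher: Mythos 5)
Your proposal is correct and follows exactly the route the paper intends: the paper itself omits the proof, stating only that it mirrors the proofs of Theorems~\ref{thm:pricingqla} and~\ref{thm:estimatingqqla} with the $\sqrt{\gamma}$ factor absent because $S_B$ is invertible, which is precisely your substitution $S\mapsto S_B$, $S^{+}\mapsto S_B^{-1}$, $\gamma\mapsto 1$. Your additional attention to the input-model plumbing (block-encoding of the symmetrized $S_B$, access to $\ket{\bm D_B}$, and the rescaling to meet $\|S_B\|_2\le 1$) is more explicit than anything in the paper and is a sensible refinement rather than a deviation.
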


We skip the proof of this theorem, as it is similar to the proof of Theorem~\ref{thm:pricingqla} and Theorem~\ref{thm:estimatingqqla}. Note that (contrary to the Theorem \ref{thm:pricingqla} and \ref{thm:estimatingqqla}) in Theorem~\ref{thm:pricingsimplex-pricingandextracting} there is no factor $\sqrt{\gamma}$ at the denominator of the run time, as the matrix $S_B$ is invertible by definition. A few comments on the run time and the comparison with the quantum zero-sum games algorithm. As we work under the assumption that $\|S_B\| \leq 1$, $\|S_B\|_F\leq \sqrt{N}$, and the run time with Definition~\ref{def:sampling_oracle} is $O(\frac{\kappa(S_B)}{\epsilon_{\rm{smpx}}}\sqrt{NK}+N)$ per iteration. In our setting it seems to be hard to give theoretical arguments to bound $\kappa(S_B)$ and $\|S\|_F$. Without market completeness, i.e. when $S$ has more columns than rows, there are no trivial bounds on $\kappa(S_B)$ from $\kappa(S)$, as one can create an ill-conditioned matrix by arbitrarily picking linearly dependent columns from $S$. In this case, instead of assuming the (potentially infinite) run time given by the unbounded condition number, one could work with a quantum algorithm that performs linear algebraic operations by discarding the singular values smaller than a certain threshold. This will insert further approximation errors in the solution of the linear system solved with the quantum computer. That this further approximation error is good enough for the application to the simplex method is a subject that requires further exploration, and we leave this for future work. 

\paragraph{Comparison between quantum simplex and quantum zero-sum game algorithm.} We conclude this section by observing that the quantum simplex method becomes convenient compared to quantum algorithms for zero-sum games when
\begin{equation}
  ( \sqrt{K} + \sqrt{N} )\left( \frac{(r+1)\Pi_D}{\epsilon_{\rm zsg}}  \right)^3 
  \geq \frac{\kappa(S_B) \sqrt{KN}}{\epsilon_{\rm smpx}} T_{\rm smpx},
\end{equation}
where $T_{\rm smpx}$ is the number of iterations of the simplex algorithm, and $\epsilon_{\rm zsg}$ is the precision required in the computation of the solution of the zero-sum game algorithm. The parameter of the simplex algorithm $\epsilon_{\rm smpx}$ is an internal precision parameter used within the subroutines and is expected to be constant with respect to the dimensionality of the problem. A possible heuristic to estimate if one algorithm might outperform the other is to check if:

\begin{equation}
     \frac{\sqrt{NK}}{\sqrt{N}+\sqrt{K}} \leq 
     \left( \frac{\epsilon_{\rm smpx}}{\kappa(S_B)T_{\rm smpx}} \right) 
    \left(\frac{(r+1)\Pi_D}{\epsilon_{\rm zsg}} \right)^3.
\end{equation}

\end{document}